\documentclass{article}
\usepackage{graphicx,epsfig,color}
\usepackage[shortend,ruled]{algorithm2e}
\usepackage{ifthen,verbatim,array,multirow}
\usepackage{latexsym,amssymb,amsmath,mathrsfs}
\usepackage{algorithm2e}
\usepackage{xspace}
\usepackage{setspace}
\usepackage{stackengine}
\usepackage{pgf}

\usepackage{textcomp}
\usepackage{stmaryrd}
\usepackage{subfigure}
\usepackage{hyperref}
\usepackage{setspace}
\usepackage{enumitem}
%
%
%

\setlength{\textwidth}{6.0in}
\setlength{\oddsidemargin}{0.35in}

\newtheorem{proposition}{Proposition}

\newtheorem{definition}{Definition}
\newtheorem{example}{Example}
\newtheorem{remark}{Remark}
\newtheorem{observation}{Observation}

{\noindent\emph{Proof sketch.~}}{\mbox{}\nobreak\hfill\hspace{6pt}\qed}


\newcommand\argvdash[1]{
  \mathrel{\stackengine{0.25ex}{\vdash}{\:\scriptstyle#1}{U}{c}{F}{T}{L}}
}


\let\implies=\undefined%
\newcommand{\implies}{\rightarrow}

\newcommand{\Implies}{\Rightarrow}
\newcommand{\satisfies}{\argvdash{s}}
\newcommand{\falsifies}{\argvdash{u}}

%


\newcommand{\tn}{\textnormal}

\newcommand{\todoF}[2]{}



\newcommand{\fml}[1]{{\mathcal{#1}}}

\newcommand{\mxsat}{MaxSAT\xspace}
\newcommand{\mnsat}{MinSAT\xspace}
\newcommand{\mxf}{MaxFalse\xspace}

\newcommand{\Pred}{\mathit{P}}

\definecolor{gray}{rgb}{.4,.4,.4}
\definecolor{midgrey}{rgb}{0.5,0.5,0.5}
\definecolor{darkgrey}{rgb}{0.3,0.3,0.3}
\definecolor{darkred}{rgb}{0.7,0.1,0.1}
\definecolor{lmidblue}{rgb}{0.3,0.3,0.9}
\definecolor{midblue}{rgb}{0.2,0.2,0.7}
\definecolor{darkblue}{rgb}{0.1,0.1,0.5}
\definecolor{defseagreen}{cmyk}{0.69,0,0.50,0}

\newcommand{\jnoteF}[1]{}

\newcounter{Comment}[Comment]
\setcounter{Comment}{1}



\DeclareMathOperator*{\nentails}{\nvDash} 
\DeclareMathOperator*{\entails}{\vDash} 
\DeclareMathOperator*{\lequiv}{\leftrightarrow}

\newboolean{extended}      
\setboolean{extended}{false} %
\newcommand{\iflongpaper}[1]{\ifthenelse{\boolean{extended}}{#1}{}}
\newcommand{\ifregularpaper}[1]{\ifthenelse{\boolean{extended}}{}{#1}}

\newboolean{addthanks}      
\setboolean{addthanks}{false} %
\newcommand{\ifaddack}[1]{\ifthenelse{\boolean{addthanks}}{#1}{}}

\newenvironment{proof}{\noindent\textbf{Proof.~}}{\mbox{}\nobreak\hfill\hspace{6pt}\\[-5pt]}
\newcommand{\proofitem}[1]{\newline\noindent\textit{#1.\xspace}}

\newcommand{\var}{\mathsf{var}}
\newcommand{\SAT}{\mathsf{SAT}} 

\newcommand{\CNF}{\mathbb{C}}
\newcommand{\DNF}{\mathbb{D}}
\newcommand{\BFML}{\mathbb{F}}
\newcommand{\ASGN}{\mathbb{A}}
\newcommand{\LITS}{\mathbb{L}}
\newcommand{\outc}{\mathsf{st}}
\newcommand{\maps}{\le_p}
\newcommand{\sres}{\textsf{sat}\xspace}
\newcommand{\ures}{\textsf{unsat}\xspace}
\newcommand{\lform}{$\mathscr{L}$\xspace}
\newcommand{\pform}{$\mathscr{P}$\xspace}

\newcommand{\bform}{$\mathscr{B}$\xspace}

\setlist{nolistsep}

\hypersetup{
  linktoc=all,
  colorlinks,
  citecolor=lmidblue, filecolor=lmidblue, linkcolor=lmidblue, urlcolor=lmidblue
}

\setstretch{1.125}

\stackMath


\title{\bf\Large Computing Minimal Sets on Propositional Formulae I: \\
  Problems \& Reductions
}

\author{Joao Marques-Silva$^1$ and Mikol\'a\v{s} Janota$^2$
\\
$^1$ {CASL, University College Dublin, Belfield, Dublin, Ireland}\\
$^2$ {IST/INESC-ID, Lisbon, Portugal}
}
\date{}
\begin{document}
\maketitle
%
%
%

\begin{abstract}
\small    
%
Boolean Satisfiability (SAT) is arguably the archetypical 
NP-complete decision problem. Progress in SAT solving algorithms has
motivated an ever increasing number of practical applications in
recent years. 
However, many practical uses of SAT involve solving function 
as opposed to decision problems.
Concrete examples include computing minimal unsatisfiable subsets,
minimal correction subsets, prime implicates and implicants, minimal
models, backbone literals, and autarkies, among several others.
In most cases, solving a function problem requires a number of
adaptive or non-adaptive calls to a SAT solver. Given the
computational complexity of SAT, it is therefore important to develop 
algorithms that either require the smallest possible number of calls
to the SAT solver, or that involve simpler instances.
This paper addresses a number of representative function problems
defined on Boolean formulas, and shows that all these function
problems can be reduced to a generic problem of computing a minimal
set subject to a monotone predicate. This problem is referred to as
the Minimal Set over Monotone Predicate (MSMP) problem.
This exercise provides new ways for solving well-known function
problems, including prime implicates, minimal correction subsets,
backbone literals, independent variables and autarkies, among several
others.
Moreover, this exercise motivates the development of more efficient
algorithms for the MSMP problem.
Finally the paper outlines a number of areas of future research
related with extensions of the MSMP problem.
\end{abstract}


\clearpage
\tableofcontents
\clearpage
%
%
%

\section{Introduction} \label{sec:intro}

The practical success of Propositional Satisfiability (SAT) solvers
has motivated an ever increasing range of applications. Many
applications are naturally formulated as decision procedures. In
contrast, a growing number of of applications involve solving function
and opposed to decision problems.
Representative examples include computing maximum satisfiability,
minimum satisfiability, minimal unsatisfiable subsets of clauses,
minimal correction subsets of clauses, minimal and maximal models, the
backbone of a formula, the maximum autark assignment, prime implicants
and implicates, among many others.

In recent years, the most widely used approach for solving a
comprehensive range of function problems defined on Boolean formulas
consists of using the SAT solver as an oracle, which is called a
number of times polynomial in the size of the problem
representation. It is interesting to observe that this approach
matches in some sense well-known query complexity characterizations of
functions problems~\cite{garey-bk79,papadimitriou-bk94}. 

Moreover, as shown in~\cite{bradley-fmcad07,bradley-fac08,msjb-cav13},
representative function problems can be viewed as computing a minimal
set given some monotone predicate, and develop a number of algorithms
for computing a minimal set subject to a monotone predicate.
This work motivates the question of whether more function problems can
be represented as computing a minimal set subject to a monotone
predicate.
The implications could be significant, both in terms of new algorithms
for different problems, as well as possible new insights into some of
these problems.
This paper addresses this question, and shows that a large number of
function problems defined on Boolean formulae can be cast as computing
a minimal set subject to a monotone predicate.
As a result, all the algorithms developed in~\cite{msjb-cav13} for
some function problems can also be used for this much larger set of
function problems.

The paper is organized as follows. \autoref{sec:prelim} introduces the
notation and definitions used throughout the paper.
\autoref{sec:probs} defines and overviews the function problems
studied in later sections.
\autoref{sec:maps} shows how all the functions described in
\autoref{sec:probs} can be reduced to solving an instance of the more
general MSMP problem.
\autoref{sec:conc} concludes the paper, by summarizing the main
contributions and outlining a number of research directions.


%
%
%

\section{Preliminaries} \label{sec:prelim}

This section introduces the notation used in the remainder of the
paper, covering propositional formulae, monotone predicates,
complexity classes, and problem reductions.

\subsection{Propositional Formulae} \label{ssec:bf}

Standard propositional logic definitions are used throughout the paper
(e.g.\ \cite{buning-bk99,sat-handbook09}), some of which are reviewed
in this section.

Sets are represented in caligraphic font, e.g.\ $\fml{R}, \fml{T},
\fml{I}, \ldots$
Propositional formulas are also represented in caligraphic font,
e.g.\ $\fml{F}, \fml{H}, \fml{S}, \fml{M}, \ldots$ 
Propositional variables are represented with letters from the end of
the alphabet, e.g.\ $x, w, y, z$, and indeces can be used,
e.g.\ $x_1, w_1, \ldots$
An atom is a propositional variable.
A literal is a variable $x_i$ or its complement $\neg x_i$.
A propositional formula (or simply a formula) $\fml{F}$ is defined
inductively over a set of propositional variables, with the standard
logical connectives, $\neg$, $\land$, $\lor$, as follows:
\begin{enumerate}
\item An atom is a formula.
\item If $\fml{F}$ is a formula, then $(\neg\fml{F})$ is a formula.
\item If $\fml{F}$ and $\fml{G}$ are formulas, then
  $(\fml{F}\lor\fml{G})$ is a formula.
\item If $\fml{F}$ and $\fml{G}$ are formulas, then
  $(\fml{F}\land\fml{G})$ is a formula. 
\end{enumerate}
The inductive step could be extended to include other propositional
connectives, e.g.\ $\implies$ and $\lequiv$. (The use of parenthesis
is not enforced, and standard binding rules apply
(e.g.\ \cite{buning-bk99}), with parenthesis being used only to
clarify the presentation of formulas.) 
The inductive definition of a propositional formula allows associating
a parse tree with each formula, which can be used for evaluating truth
assignments.
The variables of a propositional formula $\fml{F}$ are represented by
$\var(\fml{F})$. For simplicitly, the set of variables of a formula
will be denoted by $X\triangleq\var(\fml{F})$.
A clause $c$ is a non-tautologous disjunction of literals. A term $t$
is a non-contradictory conjunction of literals.
Commonly used representations of propositional formulas include conjunctive
and disjunctive normal forms (resp.\ CNF and DNF).
A CNF formula $\fml{F}$ is a conjunction of clauses.
A DNF formula $\fml{F}$ is a disjunction of terms.
CNF and DNF formulas can also be viewed as sets of sets of literals. 
Both representations will be used interchangeably throughout the
paper.
In the remainder of the paper, propositional formulas are referred to as
formulas, and this can either represent an arbitrary propositional formula,
a CNF formula, or a DNF formula. The necessary qualification will be
used when necessary.
The following sets are used throughout. $\BFML$ denotes the set of
propositional formulas, $\CNF\subset\BFML$ denotes the set of CNF
formulas, and $\DNF\subset\BFML$ denotes the set of DNF formulas. 

When the set of variables of a formula $\fml{F}$ is relevant, the
notation $\fml{F}[X]$ is used, meaning that $F$ is defined in terms of 
variables from $X$. 
Replacements of variables will be used. The notation
$\fml{F}[x_i/y_i]$ represents formula $\fml{F}$ with variable $x_i$
replaced with $y_i$. This definition can be extended to more than one
variable. For the general case, if $X=\{x_1,\ldots,x_n\}$ and
$Y=\{y_1,\ldots,y_n\}$, then the notation $\fml{F}[X/Y]$ represents
formula $\fml{F}$ with $x_1$ replaced by $y_1$, $x_2$ replaced by
$y_2$, ..., and $x_n$ replaced by $y_n$. Alternatively, one could
write $\fml{F}[x_1/y_1,x_2/x_2,\dots,x_n/y_n]$.
Given $v_i\in\{0,1\}$, $\fml{F}[x_i/v_i]$ represents formula $\fml{F}$
with variable $x_i$ replaced with $v_i$. Alternatively, the notation
$\fml{F}_{x_i=v_i}$ can be used.

The paper addresses mainly \emph{plain} CNF formulas, i.e.\ formulas
    where any clause can be dropped (or relaxed) from the formula,
    being referred to as {\em soft} clauses~\cite{manya-hdbk09}.
Nevertheless, in some settings, CNF formulas can have {\em hard}
clauses, i.e.\ clauses that cannot be dropped (or relaxed).
In these cases, $\fml{F}$ can be viewed as a 2-tuple
$\langle\fml{H},\fml{B}\rangle$, where $\fml{H}$ denotes the hard 
clauses, and $\fml{B}$ denotes the {\em soft} (or relaxable, or 
breakable) clauses. 
(Observe that any satisfiability test involving $\fml{F}$ requires the
hard clauses to be satisfied, whereas some of the clauses in $\fml{B}$
may discarded.) 
Moreover, weights can be associated with the soft clauses as
follows. A weight function
$\omega:\fml{H}\cup\fml{B}\to\mathbb{R}\cup\{\top\}$ associates a  
weight with each clause. For any soft clause $c\in\fml{B}$,
$\omega(c)\not=\top$, whereas for any hard clause $c\in\fml{H}$,
$\omega(c)=\top$, where $\top$ is such that it exceeds
$\sum_{c\in\fml{B}}\,\omega(c)$, meaning that hard clauses are too
costly to falsify.
Throughout the paper, and unless otherwise stated, it is assumed that
either $\fml{H} = \emptyset$ or $\fml{B} = \emptyset$ , i.e.\ all
clauses are soft or all clauses are hard, and so $\fml{F}$ corresponds
to $\fml{B}$ or to $\fml{H}$, respectively. Moreover, the (implicit)
weight function assigns cost 1 to each (soft) clause.
Finally, the above definitions can be extended to handle groups of
clauses, e.g.~\cite{liffiton-jar08,hmms-aicomm13}.

Given a formula $\fml{F}$, a truth assignment $\nu$ is a
map from the variables of $\fml{F}$ to $\{0, 1\}$,
$\nu:\var(\fml{F})\to\{0,1\}$.
Given a truth assignment $\nu$, the value taken by a formula, denoted
$\fml{F}^{\nu}$ is defined inductively as follows:
\begin{enumerate}
\item If $x$ is a variable, $x^{\nu} = \nu(x)$.
\item If $\fml{F} = (\neg\fml{G})$, then
  \[\fml{F}^{\nu} = \left\{
  \begin{array}{lcl}
    0 & & \tn{if $\fml{G}^{\nu} = 1$} \\
    1 & & \tn{if $\fml{G}^{\nu} = 0$} \\
  \end{array}\right.\]
\item If $\fml{F} = (\fml{E}\lor\fml{G})$, then
  \[\fml{F}^{\nu} = \left\{
  \begin{array}{lcl}
    1 & & \tn{if $\fml{E}^{\nu} = 1$ or $\fml{G}^{\nu} = 1$} \\
    0 & & \tn{otherwise} \\
  \end{array}\right.\]
\item If $\fml{F} = (\fml{E}\land\fml{G})$, then 
  \[\fml{F}^{\nu} = \left\{
  \begin{array}{lcl}
    1 & & \tn{if $\fml{E}^{\nu} = 1$ and $\fml{G}^{\nu} = 1$} \\
    0 & & \tn{otherwise} \\
  \end{array}\right.\]
\end{enumerate}
The inductive step could be extended to include additional
propositional connectives, e.g.\ $\implies$ and $\lequiv$.
A truth assignment $\nu$ such that $\fml{F}^{\nu} = 1$ is referred to
as a \emph{satisfying truth assignment}.
A formula $\fml{F}$ is \emph{satisfiable} if it has a satisfying truth
assignment; otherwise it is \emph{unsatisfiable}.
As a result, the problem of propositional satisfiability is defined as
follows: 
\begin{definition}[Propositional Satisfiability (SAT)]
  Given a formula $\fml{F}$, the decision problem SAT consists of
  deciding whether $\fml{F}$ is satisfiable.
\end{definition}

The standard semantic entailment notation is used throughout. Let 
$\fml{A},\fml{C}\in\BFML$. $\fml{A}\entails\fml{C}$ denotes that for
every truth assignment $\nu$ to the variables in
$\var(\fml{A})\cup\var(\fml{C})$, $(\fml{A}^{\nu}=1)\Implies(\fml{C}^{\nu}=1)$. 
The equivalence notation $\fml{A}\equiv\fml{C}$ is used to denote that 
$\fml{A}\entails\fml{C}\land\fml{C}\entails\fml{A}$, indicating that
$\fml{A}$ and $\fml{C}$ have the same satisfying truth assignments,
when $\var(\fml{A})=\var(\fml{C})$.
If a formula $\fml{F}\in\BFML$ is satisfiable, we write
$\fml{F}\nentails\bot$. If a formula $\fml{F}$ is unsatisfiable, we
write $\fml{F}\entails\bot$. 
Moreover, if $\nu$ is a satisfying truth assignment of $\fml{F}$, the
notation $\nu\entails\fml{F}$ is also used. 
Finally, if $\fml{F}$ is a tautology, then the notation
$\top\entails\fml{F}$ is used, whereas $\top\nentails\fml{F}$ denotes
that $\fml{F}$ is not a tautology.

The following results are well-known, e.g.\ \cite{buning-bk99}, and
will be used throughout.

\begin{proposition} \label{prop:satsubset}
  Let $\fml{F}\in\CNF$, with $\fml{F}\nentails\bot$. Then,
  $\forall_{\fml{E}\in\CNF},(\fml{E}\subseteq\fml{F})\Implies(\fml{E}\nentails\bot)$.
\end{proposition}

\begin{proposition} \label{prop:usatsupset}
  Let $\fml{U}\in\CNF$, with $\fml{U}\entails\bot$. Then,
  $\forall_{\fml{T}\in\CNF},(\fml{T}\supseteq\fml{U})\Implies(\fml{T}\entails\bot)$.
\end{proposition}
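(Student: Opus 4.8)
The plan is to obtain the statement as the contrapositive of \autoref{prop:satsubset}, which is already available. Reading CNF formulas as sets of clauses (the convention fixed in \autoref{ssec:bf}), the hypothesis $\fml{T}\supseteq\fml{U}$ means that every clause of $\fml{U}$ also occurs as a clause of $\fml{T}$.

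First I would assume, toward a contradiction, that $\fml{T}\nentails\bot$, i.e.\ that $\fml{T}$ is satisfiable. Applying \autoref{prop:satsubset} with $\fml{F}\triangleq\fml{T}$ and $\fml{E}\triangleq\fml{U}$ — which is legitimate precisely because $\fml{U}\subseteq\fml{T}$ — immediately yields $\fml{U}\nentails\bot$. This contradicts the hypothesis $\fml{U}\entails\bot$, and therefore $\fml{T}\entails\bot$, as required.

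A self-contained alternative argues directly at the level of truth assignments, using the inductive semantics of $\land$: any $\nu$ with $\fml{T}^{\nu}=1$ must satisfy every clause of $\fml{T}$, and in particular every clause of $\fml{U}\subseteq\fml{T}$, so $\fml{U}^{\nu}=1$. Since $\fml{U}\entails\bot$ guarantees that no such $\nu$ exists, none can satisfy $\fml{T}$ either, whence $\fml{T}\entails\bot$.

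The argument is elementary, so I do not expect a substantive obstacle; the only point requiring care is the convention that $\supseteq$ is taken over the set-of-clauses view of CNF formulas, which is exactly what makes satisfaction of $\fml{T}$ force satisfaction of each clause of $\fml{U}$. With that convention in place, the conjunctive reading of CNF renders the monotonicity of unsatisfiability under clause addition immediate, mirroring the monotonicity of satisfiability under clause removal recorded in \autoref{prop:satsubset}.
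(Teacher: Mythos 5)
Your proposal is correct, but there is nothing in the paper to compare it against: the paper states \autoref{prop:usatsupset} (together with Propositions~\ref{prop:satsubset}, \ref{prop:ntautsubset} and \ref{prop:tautsupset}) as a well-known fact, cited to the literature, and gives no proof at all. Of your two routes, note that the contrapositive reduction to \autoref{prop:satsubset} shifts the entire burden onto a proposition that the paper likewise leaves unproven, so as a self-contained justification it is circular-in-spirit (though formally legitimate, since the paper does make \autoref{prop:satsubset} available as an assumed fact). Your second, direct argument is the one that actually discharges the claim from first principles: a satisfying assignment $\nu$ of $\fml{T}$ satisfies every clause of $\fml{T}$ under the conjunctive semantics, hence every clause of $\fml{U}\subseteq\fml{T}$, so $\fml{U}$ would be satisfiable, contradicting $\fml{U}\entails\bot$. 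The only pedantic point worth recording there is that $\nu$ is an assignment to $\var(\fml{T})\supseteq\var(\fml{U})$, and one passes to its restriction to $\var(\fml{U})$ to obtain a satisfying assignment of $\fml{U}$ in the sense of the paper's definition; since the truth of each clause of $\fml{U}$ depends only on the variables occurring in it, this restriction changes nothing, and your argument goes through.
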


\begin{proposition} \label{prop:ntautsubset}
  Let $\fml{F}\in\DNF$, with $\top\nentails\fml{F}$. Then,
  $\forall_{\fml{E}\in\DNF},(\fml{E}\subseteq\fml{F})\Implies(\top\nentails\fml{E})$.
\end{proposition}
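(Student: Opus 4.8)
The plan is to argue directly from the semantics of disjunction, exploiting the fact that a DNF is falsified exactly when every one of its terms is falsified. First I would unpack the hypothesis: $\top\nentails\fml{F}$ says that $\fml{F}$ is not a tautology, so there exists a truth assignment $\nu$ to $\var(\fml{F})$ with $\fml{F}^\nu = 0$.

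Next, I would establish the key observation that a DNF formula evaluates to $0$ under $\nu$ if and only if every term in it evaluates to $0$ under $\nu$. This follows by a routine induction on the number of terms using the inductive definition of $(\fml{E}\lor\fml{G})^\nu$ given in \autoref{sec:prelim}: the disjunction takes value $1$ precisely when at least one disjunct is $1$, hence $0$ precisely when all disjuncts are $0$. Applying this to $\fml{F}$, from $\fml{F}^\nu = 0$ I conclude that $t^\nu = 0$ for every term $t$ of $\fml{F}$.

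Now fix any $\fml{E}\in\DNF$ with $\fml{E}\subseteq\fml{F}$, viewing DNF formulas as sets of terms. Since the terms of $\fml{E}$ form a subset of the terms of $\fml{F}$, each such term is falsified by $\nu$. Moreover $\var(\fml{E})\subseteq\var(\fml{F})$, so the restriction of $\nu$ to $\var(\fml{E})$ is a well-defined assignment for $\fml{E}$; by the key observation applied in the reverse direction, $\fml{E}^\nu = 0$. Hence $\fml{E}$ is falsified by (the restriction of) $\nu$ and is therefore not a tautology, i.e.\ $\top\nentails\fml{E}$, which is the desired conclusion.

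The argument is essentially the DNF/tautology dual of \autoref{prop:satsubset}; indeed an alternative route is to negate, observing that $\neg\fml{F}$ is a CNF with $\neg\fml{E}\subseteq\neg\fml{F}$ and that $\top\nentails\fml{F}$ is equivalent to $\neg\fml{F}\nentails\bot$, so that \autoref{prop:satsubset} applies directly. I expect no serious obstacle here; the only point requiring a little care is the bookkeeping of variables---ensuring that the falsifying assignment for the larger formula $\fml{F}$ genuinely restricts to a falsifying assignment for the sub-formula $\fml{E}$---which is immediate once one notes $\var(\fml{E})\subseteq\var(\fml{F})$.
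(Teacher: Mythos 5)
Your proof is correct. Note, however, that there is no in-paper proof to compare against: \autoref{prop:ntautsubset} is stated in \autoref{sec:prelim} alongside \autoref{prop:satsubset}, \autoref{prop:usatsupset} and \autoref{prop:tautsupset} as a well-known result with a citation, and the paper never proves it. Your argument---take a falsifying assignment $\nu$ for $\fml{F}$, observe that a DNF evaluates to $0$ under $\nu$ iff every one of its terms does, and conclude that the subset of terms forming $\fml{E}$ is likewise falsified by the restriction of $\nu$---is exactly the standard justification such a citation points to, and your alternative route via De Morgan duality and \autoref{prop:satsubset} is equally sound.
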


\begin{proposition} \label{prop:tautsupset}
  Let $\fml{U}\in\DNF$, with $\top\entails\fml{U}$. Then,
  $\forall_{\fml{T}\in\DNF},(\fml{T}\supseteq\fml{U})\Implies(\top\entails\fml{T})$.
\end{proposition}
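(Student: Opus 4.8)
The plan is to treat this as the DNF-dual of \autoref{prop:usatsupset}, exploiting that disjunction is monotone: enlarging the set of disjuncts can only enlarge the set of satisfying assignments, so a DNF tautology remains a tautology when extra terms are added. Viewing $\fml{U}$ and $\fml{T}$ as sets of terms, the hypothesis $\fml{T}\supseteq\fml{U}$ says precisely that every term of $\fml{U}$ also occurs in $\fml{T}$, while $\top\entails\fml{U}$ says that $\fml{U}$ evaluates to $1$ under every assignment to $\var(\fml{U})$.

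First I would fix an arbitrary truth assignment $\nu$ to the variables $\var(\fml{T})$ and aim to show $\fml{T}^{\nu}=1$. Since $\var(\fml{U})\subseteq\var(\fml{T})$, the assignment $\nu$ in particular assigns every variable of $\fml{U}$, and because $\top\entails\fml{U}$ we have $\fml{U}^{\nu}=1$. By the inductive semantics of $\lor$ given in \autoref{ssec:bf}, a disjunction evaluates to $1$ exactly when at least one of its disjuncts does, so there exists a term $t^{*}\in\fml{U}$ with $(t^{*})^{\nu}=1$.

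Next I would invoke $\fml{U}\subseteq\fml{T}$ to conclude $t^{*}\in\fml{T}$. Applying the same semantics of $\lor$ to $\fml{T}$, the presence of a satisfied disjunct $t^{*}$ forces $\fml{T}^{\nu}=1$. As $\nu$ was an arbitrary assignment to $\var(\fml{T})$, the formula $\fml{T}$ is satisfied by every such assignment, i.e.\ $\fml{T}$ is a tautology, which is exactly $\top\entails\fml{T}$.

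The argument is essentially routine; the only point that needs care is the handling of variables, since $\fml{T}$ may contain terms, and hence variables, absent from $\fml{U}$. This is harmless: by the inductive definition of $\fml{F}^{\nu}$, the value $(t^{*})^{\nu}$ of a term depends only on the variables occurring in $t^{*}$, all of which lie in $\var(\fml{U})\subseteq\var(\fml{T})$, so extending $\nu$ beyond $\var(\fml{U})$ does not change it. Hence both the passage from $\fml{U}^{\nu}=1$ to a satisfied term and the passage from that term to $\fml{T}^{\nu}=1$ go through unchanged, and no separate treatment of the extra variables is required.
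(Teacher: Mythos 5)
Your proof is correct. The paper does not actually prove \autoref{prop:tautsupset} --- it is stated, together with Propositions~\ref{prop:satsubset}--\ref{prop:ntautsubset}, as a well-known fact with a citation to the literature --- and your argument (a satisfied term of $\fml{U}$ persists as a satisfied disjunct of the superset $\fml{T}$, with the locality of evaluation handling the extra variables of $\fml{T}$) is exactly the routine monotonicity-of-disjunction argument the paper leaves implicit.
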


Given a formula $\fml{F}$, with set of variables $X = \var(\fml{F})$,
the following additional definitions apply. 
$\LITS(X)\triangleq\{x,\neg x\,|\,x\in X\}$ represents the set of
literals given the variables in $X$. 
A truth assignment $\nu$ can be represented as a set of literals
$\fml{V}\subseteq\LITS$, interpreted as a conjunction of literals or a
term, where each literal in $\fml{V}$ encodes the value assigned to a
given variable $x\in X$, i.e.\ literal $x$ if $\nu(x)=1$ and literal
$\neg x$ if $\nu(x)=0$. 
Clearly, $|\fml{V}| = |X|$. 
In the remainder of the paper, an assignment can either be represented
as a map or as a set of literals as defined above. This will be clear
from the context.

Sets of literals are used to represent partial truth assignments. The
set of all partial truth assignments $\ASGN$, given
$X\triangleq\var(\fml{F})$, is defined by $\ASGN(X)\triangleq\{\fml{V}\subseteq\LITS(X)\:|\:(\forall_{x\in X)},(x\not\in\fml{V})\lor(\neg x\not\in\fml{V}))\}$.
%
(For simplicitly, and when clear from the context, the dependency of
$\LITS$ and $\ASGN$ with $X$ will be omitted.)
Sets of literals can also be used to satisfy or falsify CNF or DNF
formulas. For a set of literals $\fml{V}\in\ASGN$ and
$\fml{F}\in\BFML$, 
the notation $\fml{V}\satisfies\fml{F}$ denotes whether assigning 1 to
the literals in $\fml{V}$ satisfies $\fml{F}$. Similarly, the notation
$\fml{V}\falsifies\fml{F}$ denotes whether assigning 1 to the literals
in $\fml{V}$ falsifies $\fml{F}$.

As noted above, sets of literals are in most cases interpreted as the
conjunction of the literals, e.g.\ as a term. However, in some
situations, it is convenient to interpret a set of literals as a
disjunction of the literals, e.g.\ as a clause.
Throughout the paper, the following convention is used.
A set of literals qualified as a term, as a truth assignment, or as an
implicant (defined below) is interpreted as a conjunction of literals.
A set of literals qualified as a clause, or as an implicate (defined
below) is interpreted as a disjunction of literals.
Given $\fml{F}\in\BFML$, a term $t\in\ASGN$ is an \emph{implicant} of 
$\fml{F}$ iff $t\entails\fml{F}$. 
(Alternatively, we could write $t\in\ASGN$ is an \emph{implicant} of
$\fml{F}$ iff  $\land_{l\in t}\,(l)\entails\fml{F}$.)
Similarly, a clause $c\in\ASGN$ is an \emph{implicate} of $\fml{F}$
iff $\fml{F}\entails c$. 
(Alternatively, we could write $c\in\ASGN$ is an \emph{implicate} of
$\fml{F}$ iff $\fml{F}\entails(\lor_{l\in c}\,l)$.)

In some settings it is necessary to reason with the variables assigned
value 1. This is the case for example when reasoning with {\em
  minimal} and {\em maximal models}.
Given a truth assignment, $\nu$, with $\fml{V}$ the associated set of
literals, the variables assigned value 1 are given by
$\fml{M}=\fml{V}\cap X$. 
The function $\nu(M,X)$ allows recovering the truth assignment
associated with a set $\fml{M}$ of variables assigned value 1. 
If the assignment $\nu$, associated with a set $\fml{M}$ of variables
assigned value 1, is satisfying, then $\fml{M}$ is referred to as a
{\em model}.
Moreover, the notation $\nu(M,X)\entails\fml{F}$ is used to denote
that, given a set $\fml{M}$ of variables assigned value 1, the
associated truth assigment is satisfying. In contrast,
$\nu(M,X)\nentails\fml{F}$ is used to denote that the truth assignment
falsifies the formula.


Most of the algorithms described in this paper use sequences of calls
to a SAT solver. A SAT solver accepts a (CNF-encoded) propositional formula
$\fml{F}$ as a single argument and returns a 2-tuple $(\outc,
\alpha)$, i.e.\ $(\outc, \nu)=\SAT(\fml{F})$, where $\outc\in\{0,1\}$
denotes whether the formula is unsatisfiable ($\outc=0$, or simply
\ures) or satisfiable ($\outc=1$, or simply \sres), and $\nu$ is a
(satisfying) truth assignment in case the formula is satisfiable. 
$\nu$ will also be referred to as a \emph{witness} of satisfiability.
For simplicity, in this paper it is assumed the SAT solver does not
return unsatisfiable subformulas when the outcome is \ures, although
this feature is available in many modern SAT
solvers, e.g.\ \cite{een-sat03,biere-jsat08}.
Throughout the paper, it will be implicitly assumed that a SAT solver
call yields not only the 0/1 outcome, but the actual 2-tuple
$(\outc,\nu)$.

Modern SAT solvers typically accept CNF formulas~\cite{mslm-hdbk09}. 
Procedures for CNF-encoding (or clausifying) arbitrary propositional
formulas are well-known (e.g.\ \cite{tseitin68,plaisted-jsc86}).
Throughout the paper the propositional formulas passed to SAT solvers are
often not in CNF. For simplicity, it is left implicit that a
clausification procedure would be invoked if necessary.
Moreover, additional non-clausal constraints will be used. These
include pseudo-Boolean constraints and cardinality constraints,
e.g.~\cite{manquinho-hdbk09,prestwich-hdbk09,een-jsat06}.
Examples of clausification approaches are described
in~\cite{prestwich-hdbk09,een-jsat06}.

\subsection{Monotone Predicates \& MSMP} \label{ssec:mp}

A predicate $\Pred:2^{\fml{R}}\rightarrow\{0,1\}$, defined on
$\fml{R}$, is said to be {\em monotone}~(e.g.\ \cite{bradley-fmcad07})
if whenever $\Pred(\fml{R}_0)$  holds, with
$\fml{R}_0\subseteq\fml{R}$, then $\Pred(\fml{R}_1)$ also holds, with
$\fml{R}_0\subseteq\fml{R}_1\subseteq\fml{R}$.
Observe that $\Pred(\fml{R})$ can be assumed, but this is not
required. Also, $\Pred(\fml{R})$ can be tested with a single predicate
test. Moreover, observe that, if there exists a set
$\fml{R}_0\subseteq\fml{R}$ such that $\Pred(\fml{R}_0)$ holds, and
$\Pred$ is monotone, then $\Pred(\fml{R})$ also holds.

\begin{definition} \label{def:mp}
  Let $\Pred$ be a monotone predicate, and let
  $\fml{M}\subseteq\fml{R}$ such that $\Pred(\fml{M})$ holds.
  $\fml{M}$ is minimal iff $\forall_{\fml{M}'\subsetneq\fml{M}},\neg
  \Pred(\fml{M}')$.
\end{definition}

\begin{example}
  It is simple to conclude that, given a finite set $\fml{W}$,
  predicate $\Pred(\fml{W})\triangleq[\,|\fml{W}|\ge K\,]$, for some
  $K\ge 0$, is monotone.
  In contrast, predicate $\Pred(\fml{W})\triangleq[\,|\fml{W}|\mod 2 =
    1\,]$ is not monotone. 
\end{example}

\begin{definition}[MSMP Problem]
Given a monotone predicate $\Pred$, the {\em Minimal Set over a
  Monotone Predicate} (MSMP) problem consists in finding a minimal 
subset $\fml{M}$ of $\fml{R}$ such that $\Pred(\fml{M})$ holds.
\end{definition}

Monotone predicates were used in~\cite{bradley-fmcad07,bradley-fac08}
for describing algorithms for computing a prime implicate of CNF
formula given a clause.
The MSMP problem was introduced in~\cite{msjb-cav13}.
\cite{msjb-cav13} also showed that a number of additional function
problems defined on propositional formulas could be reduced to the
MSMP problem, including the function problems of computing minimal
unsatisfiable subsets, minimal correction subsets and minimal models.

\subsection{Function Problems} \label{ssec:fp}

For many computational problems, the main goal is not to solve a
decision problem, but to solve instead a {\em function
  problem}~\cite[Section 10.3]{papadimitriou-bk94}\footnote{Function
  problems are also referred to as {\em search
    problems}~\cite[Section 5.1]{garey-bk79}.}.
The goal of a function problem is to compute some solution, e.g.\ a
satisfying truth assignment in the case of SAT, the size of a prime
implicate, the actual prime implicate, the largest number of clauses
that any truth assignment can satisfy in a CNF formula, etc.
%
%
This paper addresses function problems defined on propositional
formulas, which are defined in \autoref{sec:probs}.
The main focus of the paper are function problems that can be related
with computing a minimal set over a monotone predicate, i.e.\ function
problems that can be reduced to the MSMP problem.
Optimization problems will also be studied, with the purpose of
illustrating the modeling flexibility of monotone predicates.

\subsection{Problem Reductions}

The notation $A \maps B$ is used to denote that any instance of a
function problem $A$ can be solved by solving instead a
polynomially-related instance of another function problem $B$.
In this paper, $B$ denotes the MSMP problem, and any reduction $A
\maps B$ indicates that the solution to the function problem $A$ for a
concrete instance $a$ can be obtained by computing a minimal set
(subject to a monotone predicate) for some resulting concrete instance
$b$ of the MSMP problem.
Moreover, all reductions described in the paper are provided with
enough detail to make it straightfoward to conclude that the problem
instances are polynomially related.

\subsection{Related Work} \label{ssec:relw}

The work in this paper is motivated by the use of monotone predicates
for computing a prime implicate of a CNF
formula given an implicate (e.g.\ a
clause)~\cite{bradley-fmcad07,bradley-fac08}.
This work was recently extended to show that monotone predicates can
be applied to other problems, namely MUSes, MCSes, minimal models and
(also) prime implicates given a clause~\cite{msjb-cav13}.
\cite{msjb-cav13} also proposed the {\sc Progression} algorithm for
the MSMP problem. 

A tightly related concept in computational complexity is {\em
  hereditarity}~\cite{toda-sct93,toda-ic95}, which was proposed in the
90s to develop lower and upper bounds on the query complexity of
computing maximal solutions. A concrete example is the computation of
minimal unsatisfiability and maximal
satisfiability~\cite{toda-sct93,toda-ic95}. In contrast, other
function problems, e.g.\ minimal and maximal models, involve different 
definitions from those standard in AI
(e.g.\ \cite{ben-eliyahu-amai96}).
Moreover, one concern of this paper is to develop a rigorous
characterization of a SAT solver as an oracle. As shown
in~\autoref{sec:prelim}, if the time spent on the SAT solver is
ignored, then a SAT solver can viewed as a witness
oracle~\cite{buss-aptcc93}.

The paper addresses function problems defined on Boolean formulas,
which intersect a wide range of areas of research. References to
related work, both on the actual function problems, and associated
areas of research are included throughout the paper.


%
%
%

\section{Target Function Problems} \label{sec:probs}

This section provides a brief overview of the function problems 
studied in the rest of the paper. All function problems studied in
this section are defined on Boolean formulas.
Neverthless, the work can be extended to function problems defined on
expressive domains, e.g.\ ILP, SMT, CSP, etc.

\subsection{Problem Definitions} \label{ssec:fdefs} 

The function problems described below can be organized as follows: (i)
minimal unsatisfiability (and maximal satisfiability); (ii)
irredundant subformulas; (iii) maximal falsifiability (and minimal
satisfiability); (iv) minimal and maximal models; (v) prime implicates
and implicants; (vi) backbone literals; (vii) formula entailment;
(viii) variable independence; and (ix) maximum autarkies.
%
Afterwards, \autoref{sec:maps} shows that all these function problems
can be reduced to the MSMP problem.
In what follows, and unless otherwise stated, $\fml{F}$ denotes an
arbitrary Boolean formula. In some specific cases, $\fml{F}$ is
restricted to be in CNF (or in DNF), and this will be noted.

\subsubsection{Minimal Unsatisfiability \& Maximal Satisfiability}

Minimal unsatisfiability and maximal satisfiability have been
extensively studied in a number of contexts
(e.g.\ \cite{kullmann-hdbk09a,manya-hdbk09,blms-aicomm12,mhlpms-cj13}
and references therein). In this paper, the following definitions are
used.

\begin{definition}[MUS; FMUS] \label{def:mus}
  Let $\fml{F}\in\CNF$, with $\fml{F}\entails\bot$.
  $\fml{M}\subseteq\fml{F}$ is a {\em  Minimal
    Unsatisfiable Subset}
  (MUS) iff $\fml{M}\entails\bot$ and
  $\forall_{\fml{M}'\subsetneq\fml{M}},\,\fml{M}'\nentails\bot$.
  FMUS is the function problem of computing an MUS of $\fml{F}$.
\end{definition}

\begin{definition}[MCS; FMCS] \label{def:mcs}
  Let $\fml{F}\in\CNF$, with $\fml{F}\entails\bot$.
  $\fml{C}\subseteq\fml{F}$ is a {\em Minimal Correction Subset} (MCS)
  iff $\fml{F}\setminus\fml{C}\nentails\bot$ and
  $\forall_{\fml{C}'\subsetneq\fml{C}},\,\fml{F}\setminus\fml{C}'\entails\bot$.
  FMCS is the function problem of computing an MCS of $\fml{F}$.
\end{definition}

\begin{definition}[MSS; FMSS] \label{def:mss}
  Let $\fml{F}\in\CNF$, with $\fml{F}\entails\bot$.
  $\fml{S}\subseteq\fml{F}$ is a {\em Maximal Satisfiable Subset}
  (MSS) iff $\fml{S}\nentails\bot$ and
  $\forall_{\fml{F}\supseteq\fml{S}'\supsetneq\fml{S}},\,\fml{S}'\entails\bot$.
  FMSS is the function problem of computing an MSS of $\fml{F}$. 
\end{definition}

The relationship between MCSes and MSSes is well-known,
e.g.\ \cite{lozinskii-jetai03,liffiton-jar08}:
\begin{remark} \label{rm:mss}
  $\fml{C}$ is an MCS of $\fml{F}$ iff $\fml{S} =
  \fml{F}\setminus\fml{C}$ is an MSS of $\fml{F}$.
\end{remark}

The \mxsat problem is defined assuming unweighted clauses.
\begin{definition}[LMSS/\mxsat] \label{def:lmss}
  Let $\fml{F}\in\CNF$, with $\fml{F}\entails\bot$. 
  An MSS $\fml{S}$ of $\fml{F}$ is a largest MSS (LMSS) iff for any
  MSS $\fml{S}'$ of $\fml{F}$, $|\fml{S}'|\le|\fml{S}|$.
  FLMSS (or \mxsat) is the function problem of computing an LMSS of
  $\fml{F}$.
\end{definition}
\begin{observation}
  Observe that \mxsat is defined as the function problem of computing
  one largest MSS. In some contexts, other definitions are used,
  namely as the problem of computing the largest number of
  simultaneously satisfied clauses~\cite{krentel-jcss88}. This
  distinction is quite significant for the unweighted case of \mxsat.
  The definition used in this paper aims to model the function problem
  actually solved by modern \mxsat solvers. Since practical \mxsat
  solvers always compute a witness of the reported solution, this
  avoids the issue with reproducing a witness in the case an NP oracle
  were considered.
\end{observation}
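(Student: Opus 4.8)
The plan is to turn the informal observation into a precise query-complexity comparison between two function problems and then connect the outcome back to the paper's oracle model. First I would fix the two candidate definitions for the unweighted case: the \emph{value} problem $V(\fml{F})$, whose output is the integer $\max\{\,|\fml{S}| : \fml{S}\subseteq\fml{F},\ \fml{S}\nentails\bot\,\}$ (the largest number of simultaneously satisfiable clauses), and the \emph{witness} problem, whose output is an actual LMSS $\fml{S}$ in the sense of \autoref{def:lmss} (equivalently, by \autoref{rm:mss}, its complementary smallest MCS together with a satisfying assignment of $\fml{S}$). The claim to be justified is twofold: that for $\fml{F}\in\CNF$ these two problems sit at genuinely different levels of query complexity, and that adopting the witness definition is exactly what removes the need for a separate witness-reconstruction step once the SAT solver is treated as a witness oracle.

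For the value problem I would establish the cheap upper bound. The decision ``does some assignment satisfy at least $k$ clauses of $\fml{F}$?'' is in NP, since a witnessing assignment is a polynomial-size certificate (it is a single satisfiability test on $\fml{F}$ augmented with a cardinality constraint forcing at least $k$ clause-selector literals to be true). Because the optimum lies in $\{0,\dots,|\fml{F}|\}$, binary search on the threshold $k$ decides it with $\fml{O}(\log|\fml{F}|)$ adaptive NP queries, placing $V$ in $\fpnplog$, with the corresponding decision variants lying within $\pnplog=\ttwop$. This is precisely the ``largest number of simultaneously satisfied clauses'' reading cited from \cite{krentel-jcss88}.

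Next I would argue the witness problem is strictly harder under standard assumptions, which is the crux of ``this distinction is quite significant.'' An NP oracle only answers yes/no, so the binary search above yields the optimal \emph{value} but no assignment achieving it. Recovering an LMSS requires a self-reduction: having fixed the optimum $k^\star$, I would process the clauses (or the bits of the assignment) one at a time, asking at each step whether $k^\star$ clauses remain simultaneously satisfiable under the commitments made so far, and keeping or discarding accordingly. This costs polynomially many further queries, so the witness problem lands in $\fpnp$, and in fact in $\fpnpwit$. The main obstacle is to show that this blow-up is intrinsic rather than an artifact of the algorithm: here I would invoke Krentel's analysis from \cite{krentel-jcss88}, namely that computing an optimal \mxsat solution is $\fpnp$-complete whereas the value lies in $\fpnplog$, so the two can coincide only under a collapse (unless $\tn{P}=\tn{NP}$). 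Establishing — or correctly importing — this separation, rather than merely exhibiting two upper bounds, is the delicate step, since a naive argument shows only that one particular witness algorithm is costlier, not that the witness problem itself is.

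Finally I would close the loop with the paper's oracle model. Since a SAT-solver call is assumed to return the pair $(\outc,\nu)$, with $\nu$ a satisfying assignment whenever $\outc=1$ (a witness oracle, as set up in \autoref{sec:prelim}), defining \mxsat as computing an LMSS means the witness is delivered by the very last satisfiability test that certifies $\fml{S}\nentails\bot$; no separate, more expensive reconstruction phase is needed. Thus the LMSS definition simultaneously matches what modern \mxsat solvers actually output and sidesteps the witness-reproduction issue that a pure value/NP-oracle formulation would incur — which is exactly what the observation asserts.
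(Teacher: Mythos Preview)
The paper provides no proof for this statement: it is an \emph{observation}, i.e.\ an informal remark explaining the authors' choice of definition, and it is simply asserted with a citation to~\cite{krentel-jcss88} for the alternative ``value'' formulation. There is nothing in the paper to compare your proposal against beyond that single sentence of justification.

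Your proposal goes well beyond what the paper does: you give a precise query-complexity account (value problem in $\fpnplog$ via binary search, witness problem $\fpnp$-complete via Krentel), argue the separation is genuine modulo a collapse, and then tie this back to the witness-oracle model of \autoref{sec:prelim}. This is all sound and is exactly the kind of argument one would give to \emph{substantiate} the observation formally, but the paper itself does not carry it out; it leaves the claim at the level of a motivating remark. So your write-up is not so much a different route to the same proof as it is a formalization of something the paper leaves informal.
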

%

\begin{definition}[SMCS] \label{def:smcs}
  Let $\fml{F}\in\CNF$, with $\fml{F}\entails\bot$. 
  An MCS $\fml{C}$ of $\fml{F}$ is a smallest MCS (SMCS) iff for any 
  MCS $\fml{C}'$ of $\fml{F}$, $|\fml{C}'|\ge|\fml{C}|$.
  FSMCS is the function problem of computing an SMCS of $\fml{F}$.
\end{definition}

\begin{remark}
  Clearly, $\fml{S}\subseteq\fml{F}$ is an LMSS of $\fml{F}$ iff
  $\fml{C} = \fml{F}\setminus\fml{S}$ is an SMCS of $\fml{F}$.
\end{remark}

Moreover, it is straightforward to extend the above definitions to
cases where there are hard clauses (i.e.\ clauses that cannot be
dropped from the formula) and where soft clauses have
weights~\cite{manya-hdbk09}.
%
%
There exists a large body of theoretical and practical work on
computing MUSes, MCSes and on solving \mxsat, including a well-known 
minimal hitting set duality relationship between MUSes and
MCSes~\cite{reiter-aij87,lozinskii-jetai03,liffiton-jar08}.
Recent references describing new algorithms and surveys for these
problems
include~\cite{mazure-ictai08,manya-hdbk09,ms-mvlsc12,blms-aicomm12,ansotegui-aij13,mshjpb-ijcai13,mhlpms-cj13}. 
There is a growing number of practical applications of both minimal
unsatisfiability and maximal satisfiability
(e.g.\ see~\cite{mazure-ictai08,ms-mvlsc12,blms-aicomm12,mhlpms-cj13}
and references therein). Recent examples of applications
include~\cite{veneris-fmcad07,majumdar-cav11,majumdar-pldi11,gurfinkel-cav13,jgms-vamos14,gbms-tacas14}.
Extraction of MUSes for non-clausal formulas and for SMT formulas is
addressed
in~\cite{kullmann-hdbk09a,bms-sat11,shen-csis11,gurfinkel-cav13,jgms-vamos14,gbms-tacas14}.

Finally, we should note that similar definitions can be developed for
DNF formulas, by considering \emph{minimal validity} instead of
\emph{minimal unsatisfiability}. This would also allow developing
several tightly related function problems, as above.

\subsubsection{Irredundant Subformulas}

The definitions for minimal unsatisfiability (see previous section)
can be reformulated for the case where the goal is to remove
redundancy from a CNF formula, as follows. 

\begin{definition}[MES; FMES] \label{def:mes}
  Let $\fml{F}\in\CNF$.
  $\fml{E}\subseteq\fml{F}$ is a {\em Minimal Equivalent Subset} (MES)
  iff $\fml{E}\equiv\fml{F}$ and
  $\forall_{\fml{E}'\subsetneq\fml{E}},\,\fml{E}'\not\equiv\fml{F}$.
  FMES is the function problem of computing an MES of $\fml{F}$.
\end{definition}

\begin{definition}[MDS; FMDS] \label{def:mds}
  Let $\fml{F}\in\CNF$. 
  $\fml{D}\subseteq\fml{F}$ is a {\em Minimal Distinguishing Subset}
  (MDS) iff $\fml{F}\setminus\fml{D}\not\equiv\fml{F}$ and
  $\forall_{\fml{D}'\subsetneq\fml{D}},\,\fml{F}\setminus\fml{D}'\equiv\fml{F}$.
  FMDS is the function problem of computing an MDS of $\fml{F}$.
\end{definition}

\begin{definition}[MNS; FMNS] \label{def:mns}
  Let $\fml{F}\in\CNF$.
  $\fml{N}\subseteq\fml{F}$ is a {\em
    Maximal Non-equivalent Subset} (MNS) iff
  $\fml{N}\not\equiv\fml{F}$ and
  $\forall_{\fml{F}\supseteq\fml{N}'\supsetneq\fml{N}},\,\fml{N}'\equiv\fml{F}$.
  FMNS is the function problem of computing an MNS of $\fml{F}$.
\end{definition}

\begin{remark} \label{rm:mns}
  $\fml{D}$ is an MDS of $\fml{F}$ iff
  $\fml{N}=\fml{F}\setminus\fml{D}$ is an MNS of $\fml{F}$.
\end{remark}

\begin{definition}[LMNS] \label{def:lmns}
  Let $\fml{F}\in\CNF$. 
  An MNS $\fml{N}$ of $\fml{F}$ is a largest MNS (LMNS) iff for any
  MNS $\fml{N}'$ of $\fml{F}$, $|\fml{N}'|\le|\fml{N}|$.
  FLMNS is the function problem of computing an LMNS of $\fml{F}$.  
\end{definition}

\begin{definition}[SMDS] \label{def:smds}
  Let $\fml{F}\in\CNF$. 
  An MDS $\fml{D}$ of $\fml{F}$ is a smallest MDS (SMDS) iff for any
  MDS $\fml{D}'$ of $\fml{F}$, $|\fml{D}'|\ge|\fml{D}|$.
  FSMDS is the function problem of computing an SMDS of $\fml{F}$.  
\end{definition}

\begin{remark}
  Clearly, $\fml{N}\subseteq\fml{F}$ is an LMNS of $\fml{F}$ iff
  $\fml{D} = \fml{F}\setminus\fml{N}$ is an SMDS of $\fml{F}$.
\end{remark}

Moreover, it is straightforward to extend the above definitions to
cases where there are hard clauses (i.e.\ clauses that cannot be
dropped from the formula) and where soft clauses have weights.

Complexity charaterizations of computing irredundant subformulas were
studied 
in~\cite{liberatore-aij05,liberatore-aij08a,liberatore-aij08b,kullmann-fi11b}. Recent
practical algorithms
include~\cite{roussel-aaai00,sais-ecai08,bjlms-cp12}. 

Finally, we should note that definitions of irredundancy can be
developed for DNF formulas in terms of a subset-minimal set of terms
equivalent to the original formula. As before, this would allow 
defining additional function problems, as above.

\subsubsection{Minimal Satisfiability \& Maximal Falsifiability}

In some settings, the goal is to minimize the number of satisfied
clauses. This is generally referred to as the {\em Minimum
  Satisfiability} (MinSAT) problem~\cite{kohli-sjdm94}. By analogy
with the MaxSAT case, one can also consider extremal
sets~\cite{impms-lpar13}.

\begin{definition}[All-Falsifiable]
  Let $\fml{F}\in\CNF$.
  $\fml{U}\subseteq\fml{F}$ is {\em All-Falsifiable} if there exists a
  truth assignment $\nu$, to $\var(\fml{U})$, such that $\nu$
  falsifies all clauses in $\fml{U}$.
\end{definition}

\begin{definition}[MFS; FMFS] \label{def:mfs}
  Given $\fml{F}\in\CNF$, a {\em Maximal Falsifiable Subset} (MFS)
  of $\fml{M}\subseteq\fml{F}$ is all-falsifiable and,
  $\forall_{\fml{F}\supseteq\fml{N}\supsetneq\fml{M}},\,$ $\fml{N}$ is
    not all-falsifiable.
  FMFS is the function problem of computing an MFS of $\fml{F}$.
\end{definition}

\begin{definition}[MCFS; FMCFS] \label{def:mcfs}
  Given $\fml{F}\in\CNF$, a {\em Minimal Correction (for
    Falsifiability) Subset} (MCFS) is a set $\fml{C}\subseteq\fml{F}$ 
  such that $\fml{F}\setminus\fml{C}$ is all-falsifiable and
  $\forall_{\fml{C}'\subsetneq\fml{C}},\,\fml{F}\setminus\fml{C}'$ is
  not all-falsifiable.
  FMCFS is the function problem of computing an MCFS of $\fml{F}$.
\end{definition}

\begin{remark} \label{rm:mfs}
  $\fml{M}$ is an MFS of $\fml{F}$ iff
  $\fml{N}=\fml{F}\setminus\fml{D}$ is an MCFS of of $\fml{F}$.
  %
  %
\end{remark}

\begin{definition}[FLMFS/\mxf] \label{def:lmfs}
  Let $\fml{F}\in\CNF$. An MFS $\fml{M}$ of $\fml{F}$ is a largest MFS
  (LMFS) iff for any MFS $\fml{M}'$  of $\fml{F}$,
  $|\fml{M}'|\le|\fml{M}|$.
  FLMFS (or {\em Maximum Falsifiability}, \mxf) is the function
  problem of computing an LMFS of $\fml{F}$.
\end{definition}

\begin{definition}[FSMCFS/\mnsat] \label{def:smcfs}
  Let $\fml{F}\in\CNF$. An MCFS $\fml{C}$ of $\fml{F}$ is a smallest
  MFS (SMCFS) iff for any MCFS $\fml{C}'$ of $\fml{F}$,
  $|\fml{C}'|\ge|\fml{C}|$. 
  FLMCFS (or {\em Minimum Satisfiability}, \mnsat) is the function
  problem of computing an SMCFS of $\fml{F}$.
  %
\end{definition}

\begin{remark}
  Clearly, $\fml{M}\subseteq\fml{F}$ is an LMFS of $\fml{F}$ iff
  $\fml{C} = \fml{F}\setminus\fml{M}$ is an SMCFS of $\fml{F}$.
\end{remark}

\begin{remark}
  Given \autoref{rm:mfs}, one can conclude that the \mnsat
  problem consists of computing the smallest MCFS, and this represents
  the complement of the \mxf solution.
  %
  Thus, for $\fml{F}\in\CNF$, where $n_t$ and $n_f$ denote
  respectively the \mnsat and the \mxf solutions, then $|\fml{F}| = 
  n_t + n_f$.
\end{remark}

As with minimal unsatisfiability and irredundancy, maximal
falsifiability can be generalized to the case when some clauses are
hard and when soft clauses have weights.
Similarly, one could consider the DNF formulas, and function problems
related with \emph{all-true} terms.

The \mnsat problem has been studied
in~\cite{kohli-sjdm94,ravi-ipl96,zwick-tcs05,cli-aij13}. The problem
of maximal falsifiability is studied in~\cite{impms-lpar13}.

Moreover, it should be noted that, although this paper addresses
MUSes/\-MCSes/\-MSSes/\-MESes/\-MDSes/\-MNSes/\-MFSes/\-MCFSes
defined solely on sets of clauses, other variants could be considered,
namely groups of clauses or variables.
The formalizations of these variants as instances of the MSMP problem
mimic the formalizations developed for the case of sets of clauses.
(See~\cite{liffiton-jar08,hertz-jco09,nadel-fmcad10,hmms-ccai12,bimms-sat12,hmms-aicomm13}
for related work on groups of clauses and variables.)

\subsubsection{Minimal \& Maximal Models} \label{ssec:mms}

Minimal and maximal models are additional examples of function
problems associated with propositional formulas.

\begin{definition}[Minimal Model; FMnM] \label{def:mnm}
  Given $\fml{F}\in\BFML$, a model $\fml{M}\subseteq X$ of $\fml{F}$ is
  minimal iff $\nu(\fml{M},X)\entails\fml{F}$ and 
  $\forall_{\fml{M}'\subsetneq\fml{M}},\,\nu(\fml{M}',X)\nentails\fml{F}$.
  FMnM is the function problem of computing a minimal model of $\fml{F}$.
\end{definition}

\begin{definition}[Maximal Model; FMxM] \label{def:lmxm}
  Given $\fml{F}\in\BFML$,
  a model $\fml{M}\subseteq X$ of $\fml{F}$ is maximal iff
  $\nu(\fml{M},X)\entails\fml{F}$ and
  $\forall_{\fml{F}\supseteq\fml{M}'\supsetneq\fml{M}},\,\nu(\fml{M}',X)\nentails\fml{F}$.
  FMxM is the function problem of computing a maximal model of $\fml{F}$.
\end{definition}

\begin{observation}
  Both minimal and maximal models can be computed subject to a set $Z$
  of variables other than
  $X\triangleq\var(\fml{F})$~\cite{ben-eliyahu-amai96}, i.e.\ the
  so-called $Z$-minimal and $Z$-maximal models. The above definitions
  can easily be modified for this more general definition.
\end{observation}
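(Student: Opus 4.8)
The plan is to read this observation as a claim that Definitions~\ref{def:mnm} and~\ref{def:lmxm} generalize conservatively, and to justify it by (i) exhibiting the modified definitions explicitly, (ii) checking that they specialize back to the original ones when $Z=X$, and (iii) confirming that the subset-comparison structure is preserved so that the later reductions to MSMP still apply. The only genuine work is to pin down \emph{what} is being minimized once the designated set $Z$ differs from $X\triangleq\var(\fml{F})$.

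First I would introduce the projection of a model onto $Z$. Given a model $\fml{M}\subseteq X$ of $\fml{F}$ and a designated set $Z$ of variables (with $Z\subseteq X$ in the standard setting), the relevant object is $\fml{M}\cap Z$, i.e.\ the variables of $Z$ assigned value~$1$ by $\nu(\fml{M},X)$. All comparisons between models are then carried out on these projections rather than on the full sets. Next I would state the two modified definitions: a model $\fml{M}\subseteq X$ of $\fml{F}$ is \emph{$Z$-minimal} iff $\nu(\fml{M},X)\entails\fml{F}$ and there is no model $\fml{M}'$ of $\fml{F}$ with $\fml{M}'\cap Z\subsetneq\fml{M}\cap Z$; dually, $\fml{M}$ is \emph{$Z$-maximal} iff $\nu(\fml{M},X)\entails\fml{F}$ and there is no model $\fml{M}'$ with $\fml{M}\cap Z\subsetneq\fml{M}'\cap Z$. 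The universe over which minimality/maximality is measured has shifted from $X$ to $Z$, but the form of the condition---absence of a model whose $Z$-projection strictly refines (resp.\ extends) that of $\fml{M}$---is identical to the original definitions.

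I would then verify the specialization: when $Z=X$ we have $\fml{M}\cap Z=\fml{M}$ for every model, so $\fml{M}'\cap Z\subsetneq\fml{M}\cap Z$ collapses to $\fml{M}'\subsetneq\fml{M}$, recovering Definition~\ref{def:mnm} exactly, and symmetrically for the maximal case. This shows the generalization is conservative, which is the substance of ``the above definitions can easily be modified.''

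The main obstacle, and the only place where care is genuinely needed, is that $Z$-minimality groups models into equivalence classes that agree on $Z$: a $Z$-minimal model is no longer determined by its projection, since the variables in $X\setminus Z$ are free to take any values consistent with satisfiability. Hence ``minimal'' must be read as minimal-up-to-$Z$-projection, and one must confirm that such a model exists whenever $\fml{F}$ is satisfiable---which it does, because the family of achievable $Z$-projections is a finite family of sets and therefore has $\subseteq$-minimal elements, any of which is realized by some model. Once this is settled, the reduction to MSMP given later for ordinary minimal models carries over verbatim, with the ground set taken over the literals of $Z$ rather than of $X$.
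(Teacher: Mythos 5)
The paper states this observation without proof (observations in the paper are asserted as immediate), so there is no official argument to diverge from; your elaboration is correct and is exactly the intended reading. Your projection-based definitions of $Z$-minimal and $Z$-maximal models agree with the cited notion from the literature, they collapse to \autoref{def:mnm} and \autoref{def:lmxm} when $Z=X$, and the reduction of \autoref{prop:fmnm} indeed carries over with $\fml{R}\triangleq Z$ (a witness of the predicate for a minimal set $\fml{M}\subseteq Z$ must have $Z$-projection exactly $\fml{M}$, by minimality), so the proposal is sound.
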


Minimal models find a wide range of applications, including
non-monotonic reasoning and bioinformatics
(e.g.\ \cite{eiter-tcs93,inoue-ecai10}).
Algorithms for computing minimal and maximal models have been studied
in the past
(e.g.\ \cite{ben-eliyahu-amai96,niemela-tableaux96,ben-eliyahu-aij97,kavvadias-ipl00,hasegawa-ftp09}).
In some contexts, minimal and maximal models have been referred to as
MIN-ONE$_\subseteq$ and MAX-ONE$_{\subseteq}$ solutions
(e.g.\ \cite{giunchiglia-cj10}).

\subsubsection{Implicants \& Implicates} \label{ssec:piip}

Two relevant function problems associated with propositional formulas
consist of computing prime implicants, starting from an implicant
represented as a term, and prime implicates, starting from an
implicate represented as a clause.

\begin{definition}[Prime Implicant (given term); FPIt]\label{def:pit}
  Given $\fml{F}\in\BFML$ and an implicant $t\in\ASGN$ of $\fml{F}$,
  term $u\subseteq t$ is a prime implicant of $\fml{F}$ iff
  $u\entails\fml{F}$ and $\forall_{v\subsetneq u},v\nentails\fml{F}$.
  FPIt is the function problem of computing a prime implicant of
  $\fml{F}$ given an implicant $t$ of $\fml{F}$.
\end{definition}

\begin{definition}[Prime Implicate (given clause); FPIc] \label{def:pic}
  Given $\fml{F}\in\BFML$ and an implicate $c\in\ASGN$ of $\fml{F}$,
  clause $p\subseteq c$ is a prime implicate of $\fml{F}$ iff  
  $\fml{F}\entails p$ and $\forall_{q\subsetneq p},\,\fml{F}\nentails q$.
  FPIc is the function problem of computing a prime implicate of
  $\fml{F}$ given an implicate $c$ of $\fml{F}$.
\end{definition}

Observe that computing a prime implicant from a given implicant for a
formula in CNF can be done in linear time
(e.g.\ \cite{somenzi-tacas04,leberre-fmcad13}). Similarly, computing a 
prime implicate from a given implicate for a formula in DNF can also
be done in linear time.

For the general case of arbitrary Boolean formulas these function
problems become significantly harder, and the algorithms described in
this paper require a number of SAT solver calls that is linear in the
number of variables in the worst case
(see~\cite{toda-ic95,umans-tcad06} for similar approaches and
conclusions). 


Prime implicates and implicants find many practical applications,
including 
truth maintenance systems~\cite{kleer-aaai87,kleer-aaai92},
knowledge compilation~\cite{darwiche-jair02},
conformant planning~\cite{pontelli-aaai11},
the simplification of Boolean
functions~\cite{quine-amm55,mccluskey-bltj56},
abstraction in model checking~\cite{bradley-fac08,bradley-fmcad07},
minimization of counterexamples in model
checking~\cite{somenzi-tacas04,shen-vmcai05,shen-date05},
among many others.
A wealth of algorithms exist for computing prime implicates and
implicants (e.g.\ \cite{marquis-hdrums00} and references therein).

Another problem of interest in practice is the Longest Extension of
Implicant problem~\cite{umans-tcad06} for DNF formulas. By analogy, we
also consider the Longest Extension of Implicate problem for CNF
formulas.

\begin{definition}[Longest Extension of Implicant; FLEIt] \label{def:leit}
  Let $t$ be an implicant of $\fml{F}\in\DNF$. Term $\ASGN\supseteq
  u\supseteq t$ is a \emph{longest extension of implicant} $t$ iff
  $(\fml{F}\setminus\{t\})\cup\{u\}\equiv\fml{F}$ and
  $\forall_{u'\supsetneq u},(\fml{F}\setminus\{t\})\cup\{u'\}\not\equiv\fml{F}$. 
  The FLEIt function problem consists of computing a longest
  extension of implicant $t$ of $\fml{F}$.
\end{definition}

\begin{definition}[Longest Extension of Implicate; FLEIc] \label{def:leic}
  Let $c$ be an implicate of $\fml{F}\in\CNF$. Clause $\ASGN\supseteq
  u\supseteq c$ is a \emph{longest extension of implicate} $c$ iff
  $(\fml{F}\setminus\{c\})\cup\{u\}\equiv\fml{F}$ and
  $\forall_{u'\supsetneq u},
  (\fml{F}\setminus\{c\})\cup\{u\}\not\equiv\fml{F}$. 
  The FLEIc function problem consists of computing a longest
  extension of implicate $c$ of $\fml{F}$.
\end{definition}

\subsubsection{Formula Entailment}

A number of problems related with formula entailment can also be
considered. We address two function problems: computing a minimal set
entailing a formula and computing a maximal set entailed by a
formula.

\begin{definition}[Minimal Entailing Subset; FMnES] \label{def:lmnes}
  Let $\fml{J}\in\CNF$ and $\fml{I}\in\BFML$ be such that
  $\fml{J}\entails\fml{I}$. $\fml{M}\subseteq\fml{J}$ is a {\em
    Minimal Entailing Subset} (MnES) of $\fml{I}$ iff
  $\fml{M}\entails\fml{I}$ and
  $\forall_{\fml{M}'\subsetneq\fml{M}}\,\fml{M}'\nentails\fml{I}$.
  FMnES is the function problem of computing a minimal entailing
  subset of $\fml{J}$ given $\fml{I}$.
\end{definition}

\begin{observation}
  The function problem FMnES from~\autoref{def:lmnes} can be
  generalized to capture the case of logic-based
  abduction~\cite{gottlob-jacm95} when the universe of hypotheses is
  consistent with the theory, and the minimality criterion is subset
  minimality.
\end{observation}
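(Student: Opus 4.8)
The plan is to exhibit an exact correspondence between subset-minimal abductive explanations and the minimal entailing subsets of \autoref{def:lmnes}, so that, under the stated hypotheses, every instance of one becomes literally an instance of the other. First I would recall the standard formulation of logic-based abduction~\cite{gottlob-jacm95}: an instance is a triple $\tuple{T,H,M}$ consisting of a background theory $T$, a set $H$ of abducible hypotheses, and a manifestation $M$; an \emph{explanation} is a set $S\subseteq H$ with $T\cup S\nentails\bot$ (consistency) and $T\cup S\entails M$ (covering), and a \emph{solution} is an explanation that is subset-minimal. The observation restricts attention to the case where the whole universe of hypotheses is consistent with the theory, i.e.\ $T\cup H\nentails\bot$, and where minimality is taken with respect to $\subseteq$.

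Next I would set up the encoding. Taking $\fml{J}\triangleq H$ and $\fml{I}\triangleq(T\implies M)$ — admissible since FMnES permits $\fml{I}\in\BFML$ — the covering condition rewrites cleanly: for any $S\subseteq H$, a truth assignment satisfies $T\cup S$ and $M$ exactly when it satisfies $S$ and $(T\implies M)$, so $T\cup S\entails M$ iff $S\entails\fml{I}$. The FMnES precondition $\fml{J}\entails\fml{I}$ then amounts to the solvability condition $T\cup H\entails M$ of the abduction instance, guaranteeing that at least one explanation exists. With this identification the covering requirement of an explanation and the entailment requirement of an MnES are the same statement, and subset-minimality is defined identically on both sides, so the two minimality notions coincide verbatim.

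The step carrying the real content is disposing of the consistency requirement, which FMnES never mentions. Here I would invoke the hypothesis $T\cup H\nentails\bot$: since $T\cup H\in\CNF$ and consistency is preserved under taking subsets, \autoref{prop:satsubset} applied to $T\cup H$ gives $T\cup S\nentails\bot$ for every candidate $S\subseteq H$. Thus the consistency constraint holds vacuously across the entire search space and can be dropped without altering the set of explanations; this is precisely where the ``consistent with the theory'' assumption is used, and without it the argument would fail because $T\cup S$ could become inconsistent for some $S$. Once consistency is eliminated, the defining conditions of an abductive solution reduce term-for-term to those of an MnES of $\fml{J}$ given $\fml{I}$, establishing the claimed generalization. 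I would close by noting that the encoding $\tuple{T,H,M}\mapsto\tuple{\fml{J},\fml{I}}$ is polynomial, so composing it with the MSMP encoding of FMnES developed in \autoref{sec:maps} yields the corresponding reduction $A\maps B$ for abduction under these hypotheses.
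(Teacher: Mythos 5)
Your proposal cannot be compared against a proof in the paper, because the paper gives none: this is one of the observations the authors state without any justification. Your argument is, as far as I can check, correct, and it supplies exactly the missing content. The two load-bearing steps are both sound. First, folding the theory into the target formula via the deduction theorem---setting $\fml{J}\triangleq H$ and $\fml{I}\triangleq(T\implies M)$, so that $T\cup S\entails M$ iff $S\entails\fml{I}$---is legitimate precisely because \autoref{def:lmnes} allows $\fml{I}\in\BFML$ while requiring only $\fml{J}\in\CNF$ (and $H$, a set of hypotheses, i.e.\ unit clauses, qualifies); this identification also correctly matches the FMnES precondition $\fml{J}\entails\fml{I}$ with the solvability condition $T\cup H\entails M$. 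Second, invoking \autoref{prop:satsubset} to make the consistency requirement vacuous is exactly where the hypothesis that the universe of hypotheses is consistent with the theory enters, and you correctly identify this as the step that would fail otherwise: without it a subset-minimal covering set could be inconsistent with $T$, and the correspondence between abductive solutions and MnESes would break. One remark: you actually establish something slightly stronger than the observation claims. The phrasing ``can be generalized to capture'' suggests the authors had in mind a genuine extension of \autoref{def:lmnes}, presumably the variant in which $T$ is kept as explicit hard background (minimal $\fml{M}\subseteq\fml{J}$ with $T\cup\fml{M}\entails\fml{I}$); your encoding shows no generalization is needed at all, since $T$ can be absorbed into $\fml{I}$, and composing your polynomial translation $\tuple{T,H,M}\mapsto\tuple{\fml{J},\fml{I}}$ with \autoref{prop:fmnes} immediately yields the reduction of this fragment of abduction to MSMP.
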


\begin{definition}[Maximal Entailed Subset; FMxES] \label{def:mxes}
  Let $\fml{J}\in\BFML$ and $\fml{N}\in\CNF$ be such that
  $\fml{J}\nentails\fml{N}$. $\fml{I}\subseteq\fml{N}$ is a {\em
    Maximal Entailed Subset} (MxES) iff $\fml{J}\entails\fml{I}$ and
  $\forall_{\fml{N}\supseteq\fml{I}'\supsetneq\fml{I}}\,\fml{J}\nentails\fml{I}'$.
  FMxES is the function problem of computing a maximal entailed
  subset of $\fml{N}$ given $\fml{J}$.
\end{definition}

\subsubsection{Backbone Literals}

The problem of computing the backbone of a Boolean formula, i.e.\ the
literals that are common to all satisfying assignments of the formula,
finds a wide range of applications. Two definitions are considered.

\begin{definition}[Backbone; FBB] \label{def:bb}
  Given $\fml{F}\in\BFML$, the \emph{backbone} of $\fml{F}$ is a
  maximal set of literals $\fml{B}\in\ASGN$ which are true in all
  models of $\fml{F}$, i.e.\ $\fml{F}\entails\land_{l\in\fml{B}}(l)$.
  FBB is the function problem of computing the backbone of
  $\fml{F}\in\BFML$.
\end{definition}

In practice, algorithms for computing the backbone of a formula often
start from a reference model~\cite{msjl-ecai10}. This allows a
slightly different formulation of the backbone function problem.

\begin{definition}[FBBr] \label{def:bbr}
  Let $\nu$ be a model of $\fml{F}\in\BFML$ and $\fml{V}$ the
  associated set of literals.
  The \emph{backbone} of $\fml{F}$ is a
  maximal set of literals $\fml{B}\subseteq\fml{V}$ such that
  $\fml{F}\entails\land_{l\in\fml{B}}(l)$.
  FBBr is the function problem of computing the backbone of
  $\fml{F}\in\BFML$ given $\fml{V}$.
\end{definition}

Backbones of Boolean formulas were first studied in the context of
work on phase transitions and problem
hardness~\cite{monasson-rsa99,walsh-ijcai01,walsh-aaai05a}. In
addition, backbones find several practical applications, that include
configuration~\cite{sinz-flairs01,sinz-aiedam03},
abstract argumentation~\cite{weissenbacher-clima13},
cyclic causal models~\cite{jarvisalo-uai13},
debugging of fabricated circuits~\cite{malik-fmcad11}, among others.
A number of recent algorithms have been proposed for computing the
backbone of Boolean
formulas~\cite{kuechlin-ijcar01,janota-splc08,msjl-ecai10,malik-hldvt11}.
Recently, generalized backbones were studied in~\cite{codish-hvc13}.

\subsubsection{Variable Independence}  

A formula $\fml{F}$ is (semantically) independent of a variable $x$ if
the set of models of $\fml{F}$ does not change by fixing $x$ to {\em
  any} truth value~\cite[Def.\ 4, Prop.\ 7]{lang-jair03}.

\begin{definition}[Variable Independence; FVInd] \label{def:vind}
  A $\fml{F}\in\BFML$ is independent from $x\in\var(\fml{F})$
  iff $\fml{F}\equiv\fml{F}_{x=0}$ (or, $\fml{F}\equiv\fml{F}_{x=1}$).
  FVInd is the function problem of computing a maximal set of
  variables of which $\fml{F}$ is independent from.
\end{definition}

Observe that, as indicated earlier, the definition of MESes can be
extended to variables. However, FVInd is defined in a more general
setting, since $\fml{F}$ need not be in CNF.
Variable independence (and also literal independence) are important in
a number of settings~\cite{lang-jair03}.
Variables declared independent are also known as {\em redundant} or
{\em inessential}, e.g.~\cite{dietmeyer-tec67,brown-bk90,crama-bk11}.
In later sections, and for simplicitly, if $\fml{F}$ is independent
from $x$, then we write that $x$ is redundant for $\fml{F}$.

\subsubsection{Maximum Autarkies}

This section provides a brief overview of the problem of identifying
the (maximum) autarkies of unsatisfiable CNF formulas.

\begin{definition}[Autarky; FAut] \label{def:aut}
  Given $\fml{F}\in\CNF$, with $\fml{F}\entails\bot$, a set
  $\fml{A}\in\var(\fml{F})$ is an autarky iff there exists a truth 
  assignment to the variables in $\fml{A}$ that satisfies all clauses
  containing literals in the variables of $\fml{A}$.
  FAut is the function problem of computing the maximal autarky of
  $\fml{F}$. 
\end{definition}

Autarkies were first proposed in the context of improving exponential 
upper bounds of SAT
algorithms~\cite{speckenmeyer-dam85,vangelder-dam99}, and have later
been studied in the context of minimal unsatisfiable
subformulas~\cite{kullmann-dam00,kullmann-hdbk09a,kullmann-fi11a}. More 
recently, autarkies were used  in model
elimination~\cite{vangelder-jar99} and for speeding up MCS/MUS
enumeration algorithms~\cite{liffiton-sat08}. Algorithms for computing
autarkies were studied
in~\cite{kullmann-endm01,klms-sat06,liffiton-sat08,kullmann-hdbk09a}.

\subsection{Properties} \label{ssec:props}

This section summarizes properties of some of the function problems
presented  in the previous section, which are essential for some of
the results presented later.
%
%

For some function problems, the number of maximal/minimal sets is very
restricted. In fact, for FLEIt, FLEIc, FMxES, FBBr, FBB and FAut, the
following holds.
\begin{proposition} \label{prop:unique}
  For the function problems FLEIt, FLEIc, FMxES, FBB, FBBr, and FAut
  there is a unique maximal set, which is maximum.
\end{proposition}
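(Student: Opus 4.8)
The plan is to reduce all six cases to a single combinatorial principle: in each problem the family of \emph{admissible} sets (those satisfying the defining predicate other than the maximality clause) is closed under union inside a finite universe, and closure under union forces a greatest element to exist.

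First I would isolate this principle. Fix the finite universe relevant to the problem (the literals $\LITS$, the clauses of $\fml{N}$, or the variables $\var(\fml{F})$, according to the case), and let $\mathcal{G}$ be the collection of admissible subsets. Each problem guarantees $\mathcal{G}\neq\emptyset$ (e.g.\ the empty set, the original clause $c$, or the original term $t$ is always admissible). If I can show $\fml{S}_1,\fml{S}_2\in\mathcal{G}\Rightarrow\fml{S}_1\cup\fml{S}_2\in\mathcal{G}$, then since the universe is finite the set $\fml{M}^\star\triangleq\bigcup_{\fml{S}\in\mathcal{G}}\fml{S}$ is a finite union of members of $\mathcal{G}$ and hence itself lies in $\mathcal{G}$; as $\fml{S}\subseteq\fml{M}^\star$ for every $\fml{S}\in\mathcal{G}$, the set $\fml{M}^\star$ is simultaneously the maximum and the unique maximal admissible set. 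Thus the whole proposition collapses to verifying closure under union six times.

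Then I would verify closure case by case. For FBB and FBBr the admissible sets are the consistent sets of literals entailed by $\fml{F}$; since $\fml{F}\entails\land_{l\in\fml{B}_1}(l)$ and $\fml{F}\entails\land_{l\in\fml{B}_2}(l)$ give $\fml{F}\entails\land_{l\in\fml{B}_1\cup\fml{B}_2}(l)$, and a satisfiable $\fml{F}$ cannot entail both $x$ and $\neg x$ (so the union stays in $\ASGN$), closure is immediate; the same pointwise argument handles FMxES, where admissibility of $\fml{I}\subseteq\fml{N}$ means $\fml{J}$ entails every clause of $\fml{I}$. For FLEIc the key observation is that admissibility decomposes per added literal: writing $u=c\cup D$, the forward direction $\fml{F}\entails(\fml{F}\setminus\{c\})\cup\{u\}$ is automatic from $c\subseteq u$, so $(\fml{F}\setminus\{c\})\cup\{u\}\equiv\fml{F}$ reduces to the reverse entailment, which fails exactly when some model of $(\fml{F}\setminus\{c\})\land\neg c$ satisfies the added disjunction; since such a disjunction is satisfiable iff one of its literals is, the condition becomes $(\fml{F}\setminus\{c\})\land\neg c\land l\entails\bot$ for each $l\in D$. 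Hence the admissible $D$ are precisely the subsets of a fixed set $D^\star$, a trivially union-closed family, and an order-dual argument with terms in place of clauses handles FLEIt.

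The main obstacle is FAut, where the witnessing object is an assignment rather than an entailment and does not combine pointwise. Here I would combine two autarkies $\fml{A}_1,\fml{A}_2$ with witnesses $\phi_1,\phi_2$ by the \emph{ordered} assignment $\phi$ that uses $\phi_1$ on $\fml{A}_1$ and $\phi_2$ on $\fml{A}_2\setminus\fml{A}_1$, and argue that $\phi$ witnesses $\fml{A}_1\cup\fml{A}_2$: any clause meeting $\fml{A}_1$ is already satisfied by $\phi_1$ through a literal on a variable of $\fml{A}_1$, where $\phi$ agrees with $\phi_1$; while a clause meeting $\fml{A}_2$ but not $\fml{A}_1$ has its $\phi_2$-satisfying literal on a variable of $\fml{A}_2\setminus\fml{A}_1$, where $\phi$ agrees with $\phi_2$. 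I expect this disjointness bookkeeping to be the only genuinely non-routine step. A secondary point is the degenerate case for the clause/term extensions (and for the backbone of an unsatisfiable formula), where $D^\star$ could contain a complementary pair; this arises only when $c$ (resp.\ $t$) is already redundant, and I would either dispatch it as a boundary case or invoke the $\ASGN$ restriction in the definitions to exclude the inconsistent union.
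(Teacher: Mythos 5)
Your proposal is correct, and its core engine is the same fact the paper exploits --- admissible sets can be combined --- but you package and execute it quite differently. The paper proves the proposition by a separate contradiction argument for each problem: a maximal set that omits one more admissible element (a backbone literal, an entailed clause, an extending literal, a second autarky) could be enlarged, contradicting maximality. You instead factor out an explicit combinatorial lemma (a nonempty union-closed family over a finite universe has a greatest element, namely the union of all its members) and reduce everything to six closure checks. This buys genuine rigor at exactly the points where the paper is loose: for FLEIt/FLEIc the paper's argument reasons only about $u\entails\fml{F}$, which does not engage the equivalence $(\fml{F}\setminus\{t\})\cup\{u\}\equiv\fml{F}$ required by the definitions (and is trivially preserved by adding conjuncts, so the stated ``contradiction'' is not really doing the work), whereas your per-literal characterization --- the admissible extension sets are exactly the subsets of a fixed set $D^\star$ --- is the correct analysis and in fact mirrors the paper's own later correctness sketch for the reduction of FLEIt to MSMP; for FAut the paper simply asserts that $\fml{A}_1\cup\fml{A}_2$ is an autarky, while you supply the witness-combining proof (the standard autarky-union lemma), which is the one nontrivial closure fact in the whole proposition. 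You also flag the degenerate cases (unsatisfiable $\fml{F}$ for FBB, a redundant clause $c$ or term $t$ for FLEIc/FLEIt) where the union would leave $\ASGN$ by containing complementary literals; there the proposition itself needs a non-degeneracy proviso, since restricting to consistent sets makes maximal sets non-unique --- a caveat the paper's proof silently passes over. What the paper's route buys in exchange is only brevity; your decomposition is the more defensible one.
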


\begin{proof}
  For each case, the proof is by contradiction.
  \begin{enumerate}
  \item For FLEIt, let $u$ be a subset-maximal extension of $t$ such
    that $u\entails\fml{F}$, and let $l_x$ be a literal not included
    in $u$ such that $t\land l_x\entails\fml{F}$. Then, $u\land
    l_x\entails\fml{F}$; a contradiction.
  \item For FLEIc the proof is similar to the previous case.
  \item For FMxES, let $\fml{I}$ be a maximal set such that
    $\fml{J}\entails\fml{I}$, and assume there exists clause
    $c\in\fml{N}\setminus\fml{I}$ such that $\fml{J}\entails c$.
    Then, any model of $\fml{J}$ satisfies both $\fml{I}$ and $c$, and
    so $\fml{J}\entails\fml{I}\cup\{c\}$; a contradiction.
  \item For FBB/FBBr, the proof is similar. Let $\fml{B}$ denote a
    maximal set of backbone literals, and let $l\not\in\fml{B}$ be a
    backbone literal. Then, for any model of the formula, all literals
    in $\fml{B}\cup\{l\}$ are true; a contradiction.
  \item For FAut, the proof is again similar. Let $\fml{A}_1$ be a
    maximal autarky, let $\fml{A}_2$ be another autarky, and let
    $\fml{A}_2\setminus\fml{A}_1\not=\emptyset$. Then
    $\fml{K} = \fml{A}_1\cup\fml{A}_2$ is an autarky and
    $\fml{A}_1\subseteq\fml{K}$; a contradiction. 
  \end{enumerate}
  Thus, for FLEIt, FLEIc, FMxES, FBB, FBBr and FAut there is a unique
  maximal set which is maximum.
\end{proof}

\begin{proposition}
  A formula $\fml{F}$ is independent from $x_i\in X$ iff
  $\fml{F}\equiv\fml{F}[x_i/y_i]$, where $y_i$ is a new variable with
  $y_i\not\in X$. 
\end{proposition}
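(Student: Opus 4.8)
The plan is to reduce both directions of the biconditional to a single elementary semantic condition, namely that the truth value of $\fml{F}$ never depends on the value assigned to $x_i$. Write $X'\triangleq X\setminus\{x_i\}$. First I would unfold the definition of independence (\autoref{def:vind}): by the definition of $\equiv$ and of the inductive truth-assignment evaluation, $\fml{F}\equiv\fml{F}_{x_i=0}$ says that for every assignment $\nu$ to $X$ one has $\fml{F}^\nu = (\fml{F}_{x_i=0})^\nu$, and since $\fml{F}_{x_i=0}$ is $\fml{F}$ evaluated with $x_i$ fixed to $0$, this is exactly the statement that for every assignment $\mu$ to $X'$ the value of $\fml{F}$ with $x_i$ set to $1$ equals its value with $x_i$ set to $0$. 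I will call this condition $(\star)$. Note $(\star)$ is symmetric in the two constant values, so it is equivalent to either disjunct in \autoref{def:vind}.

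The subtlety to handle is that $\fml{F}$ and $\fml{F}[x_i/y_i]$ have different variable sets ($X$ versus $X'\cup\{y_i\}$), so the equivalence $\fml{F}\equiv\fml{F}[x_i/y_i]$ must be read over the union $X\cup\{y_i\}$, as prescribed by the definition of $\entails$. The key technical step is a substitution observation: for every assignment $\nu$ to $X\cup\{y_i\}$, $(\fml{F}[x_i/y_i])^\nu = \fml{F}^{\nu'}$, where $\nu'$ is the assignment to $X$ that agrees with $\nu$ on $X'$ and sets $\nu'(x_i)=\nu(y_i)$. This is proved by a routine structural induction on $\fml{F}$ following the inductive evaluation rules, using that $y_i$ is fresh, so it occurs in $\fml{F}[x_i/y_i]$ only in the positions formerly occupied by $x_i$.

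With $(\star)$ and the substitution observation in hand, both directions are short. For the forward direction, assume $(\star)$ and take any $\nu$ over $X\cup\{y_i\}$; by the observation $(\fml{F}[x_i/y_i])^\nu = \fml{F}^{\nu'}$, and since $\nu'$ and $\nu$ agree on $X'$, condition $(\star)$ gives $\fml{F}^{\nu'}=\fml{F}^\nu$, so the two formulas take equal values under every assignment and hence $\fml{F}\equiv\fml{F}[x_i/y_i]$. For the converse, assume $\fml{F}\equiv\fml{F}[x_i/y_i]$ and fix any $\mu$ over $X'$; compare the two assignments $\nu_0,\nu_1$ over $X\cup\{y_i\}$ that both extend $\mu$ and both set $y_i$ to $0$, but set $x_i$ to $0$ and to $1$ respectively. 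Since $\fml{F}[x_i/y_i]$ does not mention $x_i$, we have $(\fml{F}[x_i/y_i])^{\nu_0}=(\fml{F}[x_i/y_i])^{\nu_1}$; applying the assumed equivalence at both $\nu_0$ and $\nu_1$ then yields $\fml{F}^{\nu_0}=\fml{F}^{\nu_1}$, which is precisely $(\star)$ at $\mu$. The only real care needed is the bookkeeping of variable sets, since $\equiv$ is being applied across formulas with distinct variables; the substitution observation itself is routine and is the single place where a full proof would invoke induction on formula structure.
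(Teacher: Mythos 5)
Your proof is correct, and it is more complete than the paper's own argument. The paper's proof is a three-line chain: by \autoref{def:vind}, independence gives $\fml{F}\equiv\fml{F}_{x_i=0}$ and $\fml{F}\equiv\fml{F}_{x_i=1}$, i.e.\ $\fml{F}\equiv\fml{F}[x_i/v]$ for both values $v\in\{0,1\}$, and it then concludes $\fml{F}\equiv\fml{F}[x_i/y_i]$ for the fresh variable $y_i$; the implicit step is a case split on the value an assignment gives to $y_i$, which is exactly the content of your substitution observation, but left tacit. Notably, the paper argues only the direction independence $\Rightarrow$ equivalence; the converse half of the ``iff'' is never addressed. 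Your proof supplies it, via the pair of assignments $\nu_0,\nu_1$ that agree everywhere except on $x_i$ (with $y_i$ pinned to $0$), exploiting that $x_i$ does not occur in $\fml{F}[x_i/y_i]$ --- this is the genuinely new ingredient relative to the paper. Your factoring through condition $(\star)$ also makes explicit a point the paper's definition glosses over, namely that the two disjuncts $\fml{F}\equiv\fml{F}_{x_i=0}$ and $\fml{F}\equiv\fml{F}_{x_i=1}$ are equivalent to each other, so the ``or'' in \autoref{def:vind} is harmless. What the paper's terseness buys is brevity; what yours buys is an actual proof of the biconditional, with the variable-set bookkeeping (reading $\equiv$ over $X\cup\{y_i\}$) handled rather than assumed.
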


\begin{proof}
  By definition, $\fml{F}$ is independent from $x_i$ iff
  $\fml{F}\equiv\fml{F}_{x=0}$ or $\fml{F}\equiv\fml{F}_{x=1}$.
  Hence, $\forall_{y_i\in\{0,1\}}\fml{F}\equiv\fml{F}[x_i/y_i]$,
  with $y_i\not\in X$. Thus, $\fml{F}\equiv\fml{F}[x_i/y_i]$ with
  $y_i\not\in X$.
\end{proof}


%
%
%

\section{Reductions to MSMP} \label{sec:maps}

This section shows how each of the function problems defined in
\autoref{sec:probs} can be represented as an instance of the MSMP
problem. 
%
%
%
\autoref{ssec:forms} introduces general predicate forms, which are
shown to be monotone, and which simplify the presentation of the
reductions in the \autoref{ssec:greds}.
\autoref{ssec:greds} is structured similarly to \autoref{ssec:fdefs}:
(i) minimal unsatisfiability (and maximal satisfiability);
(ii) irredundant subformulas;
(iii) maximal falsifiability (and minimal satisfiability);
(iv) minimal and maximal models;
(v) prime implicates and implicants;
(vi) backbone literals;
(vii) formula entailment;
(viii) variable independence; and
(ix) maximum autarkies.

\subsection{Predicate Forms} \label{ssec:forms}

In the next section, several function problems are reduced to the
MSMP. The reduction involves specifying a reference set $\fml{R}$ and
a monotone predicate $\Pred$ defined in terms of a working set
$\fml{W}\subseteq\fml{R}$.
To simplify the description of the different monotone predicates, this
section develops general predicate forms, which capture all of the
monotone predicates developed in the next sections, and proves that
all predicates of any of these forms are monotone. As a result, for
any concrete predicate, monotonicity is an immediate consequence of
the monotonicity of the general predicate forms.

Let element $u_i\in\fml{R}$ represent either a literal or a clause.
Moreover, $\sigma(u_i)$ represents a Boolean formula built from $u_i$,
where new variables may be used, but such that $u_i$ is the only
element from $\fml{R}$ used in $\sigma(u_i)$. For example,
$\sigma(u_i)$ can represent the negation of a literal or a clause,
etc. 
Let $\fml{G}$ be a propositional formula that is {\em independent}
from the elements in $\fml{W}$, i.e.\ $\fml{G}$ does not change with
$\fml{W}$.
Then, the following general predicate forms are defined.
\begin{definition}[Predicates of Form \lform]
  A predicate is of \emph{form $\mathscr{L}$} iff its general form is
  given by,
  \begin{equation} \label{eq:reflogpred}
    \Pred(\fml{W})\triangleq\SAT(\fml{G}\land\land_{u_i\in\fml{R}\setminus\fml{W}}\,(\sigma(u_i)))
  \end{equation}
\end{definition}
\begin{definition}[Predicates of Form \pform]
  A predicate is of \emph{form $\mathscr{P}$} iff its general form is
  given by,
  \begin{equation} \label{eq:refnppred}
    \Pred(\fml{W})\triangleq\neg\SAT(\fml{G}\land\land_{u_i\in\fml{W}}\,(\sigma(u_i)))
  \end{equation}
\end{definition}
\begin{definition}[Predicates of Form \bform]
  A predicate is of \emph{form $\mathscr{B}$} iff its general form is
  given by,
  \begin{equation} \label{eq:refparpred}
    \Pred(\fml{W})\triangleq\neg\SAT(\fml{G}\land(\lor_{u_i\in\fml{R}\setminus\fml{W}}\,(\sigma(u_i))))
  \end{equation}
\end{definition}

\begin{proposition} \label{prop:mono}
  Predicates of the forms \lform, \bform and \pform are monotone.
\end{proposition}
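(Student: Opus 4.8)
The plan is to verify the monotonicity condition of \autoref{def:mp} directly: fix $\fml{W}_0\subseteq\fml{W}_1\subseteq\fml{R}$, assume $\Pred(\fml{W}_0)$ holds, and establish $\Pred(\fml{W}_1)$. The single structural fact driving all three cases is that complementation reverses inclusion, so $\fml{W}_0\subseteq\fml{W}_1$ is equivalent to $\fml{R}\setminus\fml{W}_1\subseteq\fml{R}\setminus\fml{W}_0$. Consequently, as $\fml{W}$ grows, a conjunction or disjunction ranging over $\fml{R}\setminus\fml{W}$ loses terms while one ranging over $\fml{W}$ gains them; each form is arranged so that this change pushes the underlying $\SAT$ test in exactly the direction that preserves the truth of $\Pred$. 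The proof is then a case split on the three forms.

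For form \lform\ I would use \autoref{prop:satsubset}. The predicate asserts satisfiability of $\fml{G}\land\bigwedge_{u_i\in\fml{R}\setminus\fml{W}}\sigma(u_i)$, and since $\fml{R}\setminus\fml{W}_1\subseteq\fml{R}\setminus\fml{W}_0$ the conjunction for $\fml{W}_1$ is obtained from the one for $\fml{W}_0$ by dropping conjuncts; after clausification its clause set is contained in that of the $\fml{W}_0$ formula. A satisfiable CNF formula stays satisfiable when clauses are removed, so $\Pred(\fml{W}_0)\Implies\Pred(\fml{W}_1)$. The new variables allowed inside the $\sigma(u_i)$ are harmless, since a satisfying assignment of the larger conjunction already satisfies every retained clause. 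Form \pform\ is the dual case, handled by \autoref{prop:usatsupset}: here $\Pred(\fml{W})$ asserts \emph{un}satisfiability of $\fml{G}\land\bigwedge_{u_i\in\fml{W}}\sigma(u_i)$, and enlarging $\fml{W}$ only adds conjuncts, producing a CNF superset of the $\fml{W}_0$ formula; since any superset of an unsatisfiable CNF formula is unsatisfiable, $\Pred(\fml{W}_1)$ follows, again irrespective of fresh variables introduced by the added $\sigma(u_i)$.

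The case I expect to be the main obstacle is form \bform, because the disjunction $\bigvee_{u_i\in\fml{R}\setminus\fml{W}}\sigma(u_i)$ is not a conjunction of clauses, so \autoref{prop:satsubset} and \autoref{prop:usatsupset} do not apply syntactically. Instead I would argue semantically. Writing $D_0$ and $D_1$ for the disjunctions associated with $\fml{W}_0$ and $\fml{W}_1$, the inclusion $\fml{R}\setminus\fml{W}_1\subseteq\fml{R}\setminus\fml{W}_0$ yields $D_1\entails D_0$, because any assignment satisfying some disjunct indexed by $\fml{R}\setminus\fml{W}_1$ satisfies that same disjunct inside the larger $D_0$. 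Assuming $\Pred(\fml{W}_0)$, i.e.\ $\fml{G}\land D_0\entails\bot$, I would then derive $\fml{G}\land D_1\entails\bot$ by contradiction: a model $\mu$ of $\fml{G}\land D_1$ would satisfy $\fml{G}$ and $D_1$, hence also $D_0$, contradicting unsatisfiability of $\fml{G}\land D_0$. The fresh variables are again no issue, since $\mu$ already assigns all of them.

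I would close by remarking that the boundary cases are consistent with this reasoning, an empty conjunction being $\top$ and an empty disjunction $\bot$, so no separate treatment is needed. Collecting the three cases establishes \autoref{prop:mono}. It is worth noting that the semantic entailment argument used for \bform\ in fact subsumes the other two forms as well, so if one prefers a uniform treatment it can replace the appeals to \autoref{prop:satsubset} and \autoref{prop:usatsupset} entirely; the reason I would still cite those propositions for \lform\ and \pform\ is that they make the monotonicity of the conjunctive forms an immediate, one-line consequence.
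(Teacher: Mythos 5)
Your proposal is correct and follows essentially the same route as the paper: the same case split over the three forms, with \autoref{prop:satsubset} for \lform and \autoref{prop:usatsupset} for \pform, and for \bform the observation that shrinking the index set strengthens the disjunction, so unsatisfiability is preserved. Your treatment of \bform is in fact slightly more rigorous than the paper's (which speaks informally of the smaller ``clause'' having its literals ``resolved away''), since your semantic entailment argument $D_1\entails D_0$ also covers the case where $\sigma(u_i)$ is a compound formula rather than a literal, as happens e.g.\ in the FBB reduction.
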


\begin{proof}
  \begin{enumerate}
  \item Let $\Pred$ be a predicate of form \lform.
    Let $\Pred(\fml{R}_0)$ hold, with $\fml{R}_0\subseteq\fml{R}$,
    i.e.\ the argument to the SAT oracle is satisfiable.
    Then, by \autoref{prop:satsubset}, for any $\fml{R}_1$, with
    $\fml{R}_1\supseteq\fml{R}_0$ (and so
    $\fml{R}\setminus\fml{R}_1\subseteq\fml{R}\setminus\fml{R}_0$),
    $\Pred(\fml{R}_1)$ also holds, since the argument to the SAT
    oracle call is the conjunction of a subset of the constraints used
    for the case of $\fml{R}_0$, and so also satisfiable.
    Thus, $\Pred$ is monotone.
  \item Let $\Pred$ be a predicate of form \pform.
    Let $\Pred(\fml{R}_0)$ hold, with $\fml{R}_0\subseteq\fml{R}$,
    i.e.\ the argument to the SAT oracle is unsatisfiable.
    Then, by \autoref{prop:usatsupset}, for any $\fml{R}_1$, with
    $\fml{R}_1\supseteq\fml{R}_0$, $\Pred(\fml{R}_1)$ also holds,
    since the argument to the SAT oracle call is the conjunction of a
    superset of the constraints used for the case of $\fml{R}_0$, and
    so also unsatisfiable.
    Thus, $\Pred$ is monotone.
  \item Let $\Pred$ be a predicate of form \bform.
    Let $\Pred(\fml{R}_0)$ hold, with $\fml{R}_0\subseteq\fml{R}$,
    i.e.\ the argument to the SAT oracle is unsatisfiable.
    Then, for any $\fml{R}_1$, with $\fml{R}_1\supseteq\fml{R}_0$,
    $\Pred(\fml{R}_1)$ also holds, since the clause created from
    $\fml{R}\setminus\fml{R}_1$ has fewer elements than for the case of
    $\fml{R}\setminus\fml{R}_0$, and so it is also unsatisfiable
    (i.e.\ all literals in the clause will also be resolved away).
    %
    Thus, $\Pred$ is monotone.
  \end{enumerate}
\end{proof}

\subsection{Minimal Sets} \label{ssec:greds}

In the remainder of this section the reference set is denoted
$\fml{R}$ and the monotone predicate $\Pred$ is defined in terms of a
working set $\fml{W}\subseteq\fml{R}$.
All function problems considered are defined in \autoref{sec:probs}.

\subsubsection{Minimal Unsatisfiability \& Maximal Satisfiability}

\begin{proposition} \label{prop:fmus}
  $\tn{FMUS}\maps\tn{MSMP}$.
\end{proposition}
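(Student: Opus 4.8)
The goal is to exhibit a reference set $\fml{R}$ and a monotone predicate $\Pred$ on subsets $\fml{W}\subseteq\fml{R}$ such that a minimal $\fml{W}$ with $\Pred(\fml{W})$ corresponds exactly to an MUS of the given $\fml{F}$. The plan is to take $\fml{R}\triangleq\fml{F}$, the set of clauses of the input formula, and to let the working set $\fml{W}$ play the role of the candidate unsatisfiable subformula. Following \autoref{def:mus}, an MUS is a subset-minimal $\fml{M}\subseteq\fml{F}$ with $\fml{M}\entails\bot$; the natural choice is therefore
\[
  \Pred(\fml{W})\triangleq\neg\SAT\Bigl(\textstyle\bigwedge_{c\in\fml{W}}(c)\Bigr),
\]
i.e.\ $\Pred(\fml{W})$ holds iff the clauses in $\fml{W}$ are jointly unsatisfiable.

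The key steps, in order, are as follows. First I would identify this $\Pred$ as an instance of predicate form \pform\ from \eqref{eq:refnppred}: taking $\fml{G}\triangleq\top$ (the empty conjunction, which is independent of $\fml{W}$), taking each element $u_i\in\fml{R}$ to be a clause $c$ of $\fml{F}$, and taking $\sigma(u_i)\triangleq c$ itself (so $\sigma$ introduces no new variables and uses only the single element $u_i$), the predicate $\neg\SAT(\fml{G}\land\bigwedge_{u_i\in\fml{W}}\sigma(u_i))$ is exactly $\neg\SAT(\bigwedge_{c\in\fml{W}}c)$. Monotonicity of $\Pred$ is then immediate from \autoref{prop:mono}, since every predicate of form \pform\ is monotone; alternatively it follows directly from \autoref{prop:usatsupset}, as enlarging $\fml{W}$ can only preserve unsatisfiability. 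Second, I would check that $\Pred(\fml{R})$ holds, which is the hypothesis $\fml{F}\entails\bot$ of \autoref{def:mus}, so a minimal witnessing set exists. Third, I would argue the correspondence: by \autoref{def:mp}, a set $\fml{M}\subseteq\fml{R}$ is minimal for $\Pred$ iff $\Pred(\fml{M})$ holds and $\Pred(\fml{M}')$ fails for every $\fml{M}'\subsetneq\fml{M}$, which unwinds to $\fml{M}\entails\bot$ and $\fml{M}'\nentails\bot$ for all proper subsets $\fml{M}'$ — precisely the definition of an MUS.

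Finally I would note the reduction is polynomially related in the sense of \autoref{ssec:forms}: $\fml{R}=\fml{F}$ and each predicate test is a single SAT call on a subformula of $\fml{F}$, so the MSMP instance has size linear in $|\fml{F}|$ and no blow-up occurs. I do not expect any serious obstacle here: the entire content is the observation that the MUS definition is literally a subset-minimality condition over the monotone unsatisfiability predicate, and the only thing requiring care is the bookkeeping that matches $\Pred$ to form \pform\ so that monotonicity is inherited from \autoref{prop:mono} rather than re-proved. The mild subtlety worth stating explicitly is that the SAT oracle operates on the conjunction of clauses in $\fml{W}$ while $\fml{W}$ is itself viewed as a set of clauses (a CNF subformula), so interpreting $\bigwedge_{c\in\fml{W}}c$ as the CNF formula $\fml{W}$ is the identification that makes $\Pred(\fml{W})\equiv[\fml{W}\entails\bot]$ hold on the nose.
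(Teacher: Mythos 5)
Your proposal is correct and matches the paper's proof essentially verbatim: same reference set $\fml{R}\triangleq\fml{F}$, same predicate $\Pred(\fml{W})\triangleq\neg\SAT(\land_{c\in\fml{W}}\,(c))$, the same identification as form \pform\ (your $\fml{G}\triangleq\top$ is the paper's empty $\fml{G}$) so that monotonicity follows from \autoref{prop:mono}, and the same unwinding of \autoref{def:mp} into \autoref{def:mus} for correctness. The extra remarks you add (existence via $\fml{F}\entails\bot$ and the polynomial size of the instance) are consistent with, and slightly more explicit than, what the paper states.
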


\begin{proof}
\proofitem{Reduction}
The reduction is defined as follows. $\fml{R}\triangleq\fml{F}$ and,
\begin{equation} \label{eq:muspred}
  \Pred(\fml{W})\triangleq\neg\SAT(\land_{c\in\fml{W}}\,(c))
\end{equation}
with $\fml{W}\subseteq\fml{R}$.
\proofitem{Monotonicity}
The predicate (see \eqref{eq:muspred}) is of form \pform, with
$\fml{G}\triangleq\emptyset$, $u_i\triangleq c$, and
$\sigma(c)\triangleq c$. Thus, by \autoref{prop:mono} the predicate is
monotone.
%
%
\proofitem{Correctness}
Let $\fml{M}$ be a minimal set for which $\Pred(\fml{M})$ holds,
i.e.\ $\fml{M}$ is unsatisfiable. By \autoref{def:mp}, since
$\fml{M}$ is minimal for $\Pred$, then for any
$\fml{M}'\subsetneq\fml{M}$ the predicate does not hold,
i.e.\ $\fml{M}'$ is satisfiable.
Thus, by \autoref{def:mus}, $\fml{M}$ is an MUS of $\fml{F}$.
\end{proof}

\begin{proposition} \label{prop:fmcs}
  $\tn{FMCS}\maps\tn{MSMP}$.
\end{proposition}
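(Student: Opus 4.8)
The plan is to follow exactly the three-part template used for FMUS in \autoref{prop:fmus}: specify the reduction, invoke the general predicate-form machinery of \autoref{prop:mono} to settle monotonicity, and then match the minimality condition of the MSMP problem against \autoref{def:mcs}. The guiding intuition is that a correction subset is precisely a set of clauses whose removal restores satisfiability, so the subset-minimal sets witnessing this property should coincide with the MCSes.

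First I would set $\fml{R}\triangleq\fml{F}$ and define, over a working set $\fml{W}\subseteq\fml{R}$,
\[
  \Pred(\fml{W})\triangleq\SAT\left(\land_{c\in\fml{F}\setminus\fml{W}}\,(c)\right),
\]
so that $\Pred(\fml{W})$ holds exactly when dropping the clauses in $\fml{W}$ leaves a satisfiable formula, i.e.\ $\fml{F}\setminus\fml{W}\nentails\bot$. For monotonicity I would observe that this predicate is of form \lform, taking $\fml{G}\triangleq\emptyset$, each $u_i\triangleq c$ a clause, and $\sigma(c)\triangleq c$, since $\fml{R}\setminus\fml{W}=\fml{F}\setminus\fml{W}$ and the conjunction of those clauses is exactly the CNF formula $\fml{F}\setminus\fml{W}$. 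Monotonicity is then an immediate consequence of \autoref{prop:mono}.

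For correctness I would let $\fml{M}$ be a minimal set for which $\Pred(\fml{M})$ holds. Then $\fml{F}\setminus\fml{M}\nentails\bot$, and by \autoref{def:mp}, since $\fml{M}$ is minimal for $\Pred$, every proper subset $\fml{M}'\subsetneq\fml{M}$ fails the predicate, i.e.\ $\fml{F}\setminus\fml{M}'\entails\bot$. These are precisely the two defining conditions of an MCS in \autoref{def:mcs}, so $\fml{M}$ is an MCS of $\fml{F}$, which establishes the reduction.

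The only point that needs to be stated with care — more a subtlety than a genuine obstacle — is the direction of the monotonicity argument: the predicate is monotone in $\fml{W}$ even though it is phrased through the complement $\fml{F}\setminus\fml{W}$. This works because enlarging $\fml{W}$ only shrinks the formula actually passed to the oracle, and satisfiability is preserved under taking subsets of clauses by \autoref{prop:satsubset}; this is exactly the reasoning already packaged into the form-\lform case of \autoref{prop:mono}, so no new work is required beyond invoking it.
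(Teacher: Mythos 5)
Your proposal is correct and follows essentially the same route as the paper's own proof: the identical reduction $\fml{R}\triangleq\fml{F}$ with $\Pred(\fml{W})\triangleq\SAT(\land_{c\in\fml{R}\setminus\fml{W}}\,(c))$, monotonicity via form \lform{} and \autoref{prop:mono} with the same instantiation $\fml{G}\triangleq\emptyset$, $\sigma(c)\triangleq c$, and the same correctness argument matching minimality under $\Pred$ against \autoref{def:mcs}. Your closing remark about monotonicity acting through the complement is a fair gloss on what \autoref{prop:mono} already encapsulates, and nothing further is needed.
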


\begin{proof}
\proofitem{Reduction}
The reduction is defined as follows. $\fml{R}\triangleq\fml{F}$ and,
\begin{equation} \label{eq:mcspred}
  \Pred(\fml{W})\triangleq\SAT(\land_{c\in\fml{R}\setminus\fml{W}}\,(c))
\end{equation}
with $\fml{W}\subseteq\fml{R}$.
\proofitem{Monotonicity}
The predicate (see \eqref{eq:mcspred}) is of form \lform, with
$\fml{G}\triangleq\emptyset$, $u_i\triangleq c$, and
$\sigma(c)\triangleq c$. Thus, by \autoref{prop:mono} the predicate is
monotone.
%
%
\proofitem{Correctness}
Let $\fml{M}$ be a minimal set such that $\Pred(\fml{M})$ holds.
By \autoref{def:mp}, since $\fml{M}$ is minimal for $\Pred$, then for
any $\fml{M}'\subsetneq\fml{M}$ the predicate does not hold,
i.e.\ $\fml{R}\setminus\fml{M}'$ is unsatisfiable.
Hence, $\fml{F}\setminus\fml{M}$ is satisfiable, and for any
$\fml{M}'\subsetneq\fml{M}$, $\fml{F}\setminus\fml{M}'$ is
unsatisfiable.
Thus, by \autoref{def:mcs}, $\fml{M}$ is an MCS of $\fml{F}$. 
\end{proof}

\begin{remark} \label{rm:fmss}
  By \autoref{rm:mss}, an MSS of $\fml{F}\in\CNF$ can be computed as
  follows. Compute an MCS $\fml{M}$ of $\fml{F}$ and return
  $\fml{F}\setminus\fml{M}$.
  %
  %
\end{remark}

\subsubsection{Irredundant Subformulas}

\begin{proposition} \label{prop:fmes}
  $\tn{FMES}\maps\tn{MSMP}$.
\end{proposition}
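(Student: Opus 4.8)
The plan is to reuse the set-up of \autoref{prop:fmus} and \autoref{prop:fmcs}, taking $\fml{R}\triangleq\fml{F}$ and letting the working set $\fml{W}\subseteq\fml{R}$ be the candidate equivalent subformula. Since $\fml{W}\subseteq\fml{F}$ always gives $\fml{F}\entails\fml{W}$, the requirement $\fml{W}\equiv\fml{F}$ reduces to $\fml{W}\entails(\fml{F}\setminus\fml{W})$; that is, the retained clauses must entail every discarded clause. So the predicate I am after is $\Pred(\fml{W})\triangleq[\,\fml{W}\entails(\fml{F}\setminus\fml{W})\,]$, and the minimal sets on which it holds are, by \autoref{def:mes} together with \autoref{def:mp}, exactly the MESes of $\fml{F}$.

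The main obstacle is that this predicate depends on $\fml{W}$ in \emph{two} coupled places at once: $\fml{W}$ supplies the hypotheses $\land_{c\in\fml{W}}c$ and its complement supplies the conclusions $\land_{c\in\fml{F}\setminus\fml{W}}c$. Written out, $\Pred(\fml{W})=\neg\SAT(\land_{c\in\fml{W}}c\land\lor_{c\in\fml{F}\setminus\fml{W}}\neg c)$, and this hybrid does not sit in any of the forms \lform, \pform, \bform, because the fixed part $\fml{G}$ must not vary with $\fml{W}$ whereas here both the conjunctive and the disjunctive part do. I would remove this coupling with selector variables. Introduce a fresh variable $s_i$ for every clause $c_i\in\fml{F}$ and set
\[
  \fml{G}\triangleq\Bigl(\bigwedge_{c_i\in\fml{F}}(\neg s_i\lor c_i)\Bigr)\land\Bigl(\bigvee_{c_i\in\fml{F}}(\neg s_i\land\neg c_i)\Bigr),
\]
which is fixed (independent of $\fml{W}$). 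The reduction then uses
\[
  \Pred(\fml{W})\triangleq\neg\SAT\Bigl(\fml{G}\land\bigwedge_{c_i\in\fml{W}}s_i\Bigr),
\]
which is a predicate of form \pform with $u_i\triangleq c_i$ and $\sigma(c_i)\triangleq s_i$; monotonicity is then immediate from \autoref{prop:mono}. This also matches the intuition that enlarging $\fml{W}$ both strengthens the hypotheses and shrinks the set of conclusions, so equivalence can only become easier to attain.

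For correctness I would prove that $\SAT(\fml{G}\land\bigwedge_{c_i\in\fml{W}}s_i)$ holds iff $\fml{W}\nentails(\fml{F}\setminus\fml{W})$. For the interesting direction, a model $\nu$ of $\fml{W}$ that falsifies some discarded clause $c_j$ extends to a satisfying assignment by forcing $s_i=1$ on the retained clauses (as required by the conjunction) and $s_i=0$ on the discarded ones: the implications $\neg s_i\lor c_i$ hold because kept clauses are satisfied and removed selectors are off, while the disjunct $\neg s_j\land\neg c_j$ witnesses the big disjunction. Conversely, any satisfying assignment forces all kept clauses true (their selectors are pinned to $1$), and the satisfied disjunct $\neg s_k\land\neg c_k$ must come from a clause with $s_k=0$, hence a discarded one, exhibiting a model of $\fml{W}$ that falsifies $c_k\in\fml{F}\setminus\fml{W}$. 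Thus $\Pred(\fml{W})$ holds iff $\fml{W}\equiv\fml{F}$, and a minimal such $\fml{W}$ is precisely an MES, establishing $\tn{FMES}\maps\tn{MSMP}$. The only step needing real care is this SAT-equivalence, specifically checking that leaving the discarded-clause selectors free (rather than pinning them to $0$) cannot spuriously satisfy the formula — which it cannot, since any witness to the disjunction is forced onto a clause outside $\fml{W}$.
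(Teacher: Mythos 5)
Your reduction is correct, but it takes a different route from the paper, and the detour is instructive. You claim that the ``coupled'' predicate $\neg\SAT\bigl(\land_{c\in\fml{W}}c\land\lor_{c\in\fml{F}\setminus\fml{W}}\neg c\bigr)$ cannot be put into any of the forms \lform, \pform, \bform; that is exactly the step where the paper is slicker. The paper's predicate is $\Pred(\fml{W})\triangleq\neg\SAT(\neg\fml{F}\land\land_{c\in\fml{W}}(c))$, i.e.\ it negates the \emph{whole} formula $\fml{F}$, not just the discarded part, so $\fml{G}\triangleq\neg\fml{F}$ is genuinely fixed and the predicate is immediately of form \pform. The apparent over-negation is harmless: the extra disjuncts $\neg c$ with $c\in\fml{W}$ inside $\neg\fml{F}$ are contradicted by the conjunction $\land_{c\in\fml{W}}(c)$, so
\begin{equation*}
  \neg\fml{F}\land\land_{c\in\fml{W}}(c)\;\Leftrightarrow\;\neg(\fml{F}\setminus\fml{W})\land\land_{c\in\fml{W}}(c),
\end{equation*}
which is precisely your ``hybrid''; the paper records this equivalence explicitly right after its proof. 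Correctness is then one line: $\Pred(\fml{W})$ holds iff $\fml{W}\entails\fml{F}$, which (since $\fml{F}\entails\fml{W}$ for any $\fml{W}\subseteq\fml{F}$) is equivalent to $\fml{W}\equiv\fml{F}$. Your workaround --- introducing selector variables $s_i$, pinning kept selectors to $1$ and letting the fixed formula $\fml{G}$ carry both the implications $(\neg s_i\lor c_i)$ and the disjunction $\lor_i(\neg s_i\land\neg c_i)$ --- is sound: your SAT-equivalence argument is complete in both directions, the predicate is of form \pform with $\sigma(c_i)\triangleq s_i$, and minimal sets are MESes. What it buys is essentially the group-MUS/selector-variable formulation (cf.\ the paper's own observation that FMES reduces to computing a group MUS), which is how practical MUS extractors actually implement clause selection and supports incremental SAT calls; what it costs is $|\fml{F}|$ fresh variables, a larger fixed formula, and a correctness argument that needs the careful two-direction check you gave, where the paper's choice of $\fml{G}=\neg\fml{F}$ makes all of that unnecessary.
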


\begin{proof}
\proofitem{Reduction}
The reduction is defined as follows. $\fml{R}\triangleq\fml{F}$ and,
\begin{equation} \label{eq:mespred}
  \Pred(\fml{W})\triangleq\neg\SAT(\neg\fml{F}\land\land_{c\in\fml{W}}\,(c))
\end{equation}
with $\fml{W}\subseteq\fml{R}$.
\proofitem{Monotonicity}
The predicate (see \eqref{eq:mespred}) is of form \pform, with
$\fml{G}\triangleq\neg\fml{F}$, $u_i = c$, and $\sigma(c)\triangleq
c$. Thus, by \autoref{prop:mono} the predicate is monotone.
%
%
\proofitem{Correctness}
Let $\fml{M}$ be a minimal set such that $\Pred(\fml{M})$ holds.
By \autoref{def:mp}, since $\fml{M}$ is minimal for $\Pred$, then
$\Pred$ holds for $\fml{M}$, i.e.\ $\fml{M}\entails\fml{F}$, and for
any $\fml{M}'\subsetneq\fml{M}$ the predicate does not hold,
i.e.\ $\fml{M}'\nentails\fml{F}$. 
Thus, by \autoref{def:mes}, $\fml{M}$ is an MES of $\fml{F}$.
\end{proof}

\begin{observation} \label{obs:gmus}
  Observe that the problem of computing an irredundant subformula can
  be reduced to the problem of computing a (group)
  MUS~\cite{bjlms-cp12}. Thus, an MUS for the resulting problem is an
  MES for the original problem.
\end{observation}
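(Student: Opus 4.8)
The plan is to make the group-MUS construction that the observation invokes fully explicit, and to show that it coincides with the \pform predicate of \eqref{eq:mespred} already used for \autoref{prop:fmes}. First I would recall that, because $\fml{E}\subseteq\fml{F}$ always yields $\fml{F}\entails\fml{E}$, the equivalence $\fml{E}\equiv\fml{F}$ collapses to the single entailment $\fml{E}\entails\fml{F}$, and that $\fml{E}\entails\fml{F}$ holds iff $\neg\fml{F}\land\land_{c\in\fml{E}}\,(c)$ is unsatisfiable. This is exactly the predicate of \eqref{eq:mespred}, which I would reinterpret as the unsatisfiability test of a group-CNF whose single \emph{hard} group is a clausification of $\neg\fml{F}$ and whose \emph{soft} groups are the individual clauses of $\fml{F}$.

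Next I would establish the correspondence between the two problems via the identity bijection between the soft groups of the constructed instance and the clauses of $\fml{F}$. By definition a group-MUS is a subset-minimal set $\fml{M}$ of soft groups such that the hard group together with $\fml{M}$ is unsatisfiable; under the bijection this is a subset-minimal $\fml{M}\subseteq\fml{F}$ with $\neg\SAT(\neg\fml{F}\land\land_{c\in\fml{M}}\,(c))$, i.e.\ a subset-minimal $\fml{M}$ with $\fml{M}\entails\fml{F}$, hence with $\fml{M}\equiv\fml{F}$. By \autoref{def:mes} (equivalently, by the correctness argument already given for \autoref{prop:fmes}) such an $\fml{M}$ is precisely an MES of $\fml{F}$. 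Monotonicity of the underlying predicate, needed so that the group-MUS is well defined as a minimal set, follows from \autoref{prop:usatsupset}, since enlarging $\fml{M}$ only appends conjuncts to an already unsatisfiable formula.

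The main obstacle is purely the clausification of the hard group. Since $\neg\fml{F}=\lor_{c\in\fml{F}}\neg c$ is naturally a DNF over the clauses of $\fml{F}$, putting it in CNF requires a Tseitin-style encoding with fresh auxiliary variables disjoint from $\var(\fml{F})$. I would argue that this encoding is faithful and linear in $|\fml{F}|$: the auxiliaries are functionally determined by the original variables and are projected out, so for every $\fml{M}\subseteq\fml{F}$ the clausified hard group together with $\fml{M}$ is satisfiable iff $\neg\fml{F}\land\land_{c\in\fml{M}}\,(c)$ is satisfiable. The only delicate point is to keep all clausification clauses inside the one hard group, so that the soft groups remain in one-to-one correspondence with the clauses of $\fml{F}$ and subset-minimality transfers unchanged. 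Given this, the two notions of minimal set coincide, and a (group-)MUS of the constructed instance is an MES of $\fml{F}$, as claimed.
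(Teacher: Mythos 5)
Your proposal is correct and is essentially the argument the paper intends: the observation (placed right after \autoref{prop:fmes}) rests on exactly the reduction you make explicit --- the hard group is a clausification of $\neg\fml{F}$ and each clause of $\fml{F}$ is a singleton soft group --- so a group-MUS of that instance is a subset-minimal $\fml{M}\subseteq\fml{F}$ with $\neg\SAT(\neg\fml{F}\land\land_{c\in\fml{M}}\,(c))$, which by the collapse of $\fml{M}\equiv\fml{F}$ to $\fml{M}\entails\fml{F}$ for subsets is precisely an MES per \autoref{def:mes}. The paper itself offloads these details to the cited work, so your elaboration (including the faithfulness of the Tseitin encoding of the hard group and the monotonicity via \autoref{prop:usatsupset}) fills in the same construction rather than taking a different route.
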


Also, note that the argument of the predicate in~\eqref{eq:mespred} can
be simplified:
\begin{equation*}
  \begin{array}{lcl}
    \neg\fml{F}\land\land_{c\in\fml{W}}\,(c)
    & \Leftrightarrow & 
    (\lor_{c\in\fml{F}}(\neg c))\land\land_{c\in\fml{W}}(c) 
    \Leftrightarrow (\lor_{c\in\fml{F}\setminus\fml{W}}(\neg c))\land\land_{c\in\fml{W}}(c) \\
    & \Leftrightarrow &
    \neg(\fml{F}\setminus\fml{W})\land\land_{c\in\fml{W}}\,(c) \\
  \end{array}
\end{equation*}
Thus, the predicate in~\eqref{eq:mespred} can be formulated as
follows:
\begin{equation} \label{eq:mespred2}
  \Pred(\fml{W})\triangleq\neg\SAT(\neg(\fml{F}\setminus\fml{W})\land\land_{c\in\fml{W}}\,(c))
\end{equation}

Throughout the paper,~\eqref{eq:mespred} is used, since it facilitates
relating this predicate with others. However, for practical
purposes~\eqref{eq:mespred2} would be preferred.

\begin{proposition} \label{prop:fmds}
  $\tn{FMDS}\maps\tn{MSMP}$.
\end{proposition}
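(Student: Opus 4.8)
The plan is to mirror the reduction for FMCS in \autoref{prop:fmcs}, since FMDS is for logical equivalence exactly what the minimal correction subset is for satisfiability (cf.\ \autoref{rm:mns}). First I would set $\fml{R}\triangleq\fml{F}$ and define the predicate
\[
  \Pred(\fml{W})\triangleq\SAT\bigl(\neg\fml{F}\land\textstyle\land_{c\in\fml{R}\setminus\fml{W}}(c)\bigr),
\]
so that $\Pred(\fml{W})$ asserts the existence of an assignment that satisfies every clause of $\fml{F}\setminus\fml{W}$ while falsifying $\fml{F}$ itself.

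Monotonicity would then be immediate rather than argued from scratch: this predicate is of form \lform, with $\fml{G}\triangleq\neg\fml{F}$, $u_i\triangleq c$, and $\sigma(c)\triangleq c$, so \autoref{prop:mono} applies. As in the FMES reduction of \autoref{prop:fmes}, I would note that $\fml{G}=\neg\fml{F}$ is a single fixed formula and hence independent of $\fml{W}$, even though it syntactically mentions clauses that may lie in $\fml{W}$.

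The crux of the correctness argument is to match the condition ``$\Pred(\fml{W})$ holds'' with the distinguishing condition $\fml{F}\setminus\fml{W}\not\equiv\fml{F}$. The key observation I would establish is that $\fml{F}\setminus\fml{W}\equiv\fml{F}$ iff $\fml{F}\setminus\fml{W}\entails\fml{F}$: since $\fml{F}\in\CNF$ and $\fml{F}\setminus\fml{W}\subseteq\fml{F}$, every clause of $\fml{F}\setminus\fml{W}$ is a clause of $\fml{F}$, so $\fml{F}\entails\fml{F}\setminus\fml{W}$ holds automatically and only the entailment $\fml{F}\setminus\fml{W}\entails\fml{F}$ can fail. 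Consequently $\Pred(\fml{W})$ holds iff $\neg\fml{F}\land(\fml{F}\setminus\fml{W})$ is satisfiable iff $\fml{F}\setminus\fml{W}\not\entails\fml{F}$ iff $\fml{F}\setminus\fml{W}\not\equiv\fml{F}$. This collapse of equivalence to a one-directional entailment is the only genuine (and fairly modest) obstacle; everything else is bookkeeping.

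With that equivalence in hand, correctness follows exactly along the lines of \autoref{prop:fmcs}. If $\fml{M}$ is a minimal set for which $\Pred(\fml{M})$ holds, then $\fml{F}\setminus\fml{M}\not\equiv\fml{F}$, and by \autoref{def:mp} every $\fml{M}'\subsetneq\fml{M}$ fails the predicate, i.e.\ $\fml{F}\setminus\fml{M}'\equiv\fml{F}$. By \autoref{def:mds}, $\fml{M}$ is then an MDS of $\fml{F}$. Since the resulting MSMP instance is clearly polynomially related to the input, this yields $\tn{FMDS}\maps\tn{MSMP}$.
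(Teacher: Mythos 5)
Your proposal is correct and takes essentially the same route as the paper's own proof: the same reference set $\fml{R}\triangleq\fml{F}$, the identical predicate $\Pred(\fml{W})\triangleq\SAT(\neg\fml{F}\land\land_{c\in\fml{R}\setminus\fml{W}}\,(c))$, the same form-\lform monotonicity argument via \autoref{prop:mono}, and the same minimality-to-MDS correctness reasoning. The only difference is that you explicitly justify the collapse of $\fml{F}\setminus\fml{W}\not\equiv\fml{F}$ to $\fml{F}\setminus\fml{W}\nentails\fml{F}$ (using $\fml{F}\setminus\fml{W}\subseteq\fml{F}$), a step the paper uses implicitly.
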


\begin{proof}
\proofitem{Reduction}
The reduction is defined as follows.
$\fml{R}\triangleq\fml{F}$ and,
\begin{equation} \label{eq:mdspred}
  \Pred(\fml{W})\triangleq\SAT(\neg\fml{F}\land\land_{c\in\fml{R}\setminus\fml{W}}\,(c))
\end{equation}
with $\fml{W}\subseteq\fml{R}$.
\proofitem{Monotonicity}
The predicate (see \eqref{eq:mdspred}) is of form \lform, with
$\fml{G}\triangleq\neg\fml{F}$, $u_i\triangleq c$, and
$\sigma(c)\triangleq c$. Thus, by \autoref{prop:mono} the predicate is
monotone.
\proofitem{Correctness}
Let $\fml{D}$ be a minimal set such that $\Pred(\fml{D})$ holds.
By \autoref{def:mp}, since $\fml{D}$ is minimal for $\Pred$, then
$\Pred$ holds for $\fml{D}$,
i.e.\ $\fml{F}\setminus\fml{D}\nentails\fml{F}$, and for any
$\fml{D}'\subsetneq\fml{D}$ the predicate does not hold,
i.e.\ $\fml{F}\setminus\fml{D}'\entails\fml{F}$.
Thus, by \autoref{def:mds}, $\fml{D}$ is an MDS of $\fml{F}$.
\end{proof}

\begin{observation}
  The problem of computing an irredundant subformula can be reduced to
  the problem of computing a (group) MUS (see \autoref{obs:gmus}
  and~\cite{bjlms-cp12}), and so an MCS for the resulting problem is
  an MDS for the original problem.
\end{observation}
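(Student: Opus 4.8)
The plan is to reuse \emph{exactly} the group-MUS instance already exhibited in \autoref{obs:gmus} for the MES reduction, and then appeal to the standard minimal-hitting-set duality between MUSes and MCSes (cf.\ \autoref{prop:fmcs} and~\cite{bjlms-cp12}) to read off an MDS from the \emph{correction} side of that same instance. Concretely, I would build the (unsatisfiable) group formula whose single hard group is a clausification of $\neg\fml{F}$ and whose soft groups are the individual clauses $c\in\fml{F}$; computing a group MUS of this formula yields an MES of $\fml{F}$, and I claim a group MCS of the very same formula yields an MDS.

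First I would record the elementary fact that makes equivalence collapse to entailment: for any $\fml{E}\subseteq\fml{F}$ (with $\fml{E},\fml{F}\in\CNF$) one always has $\fml{F}\entails\fml{E}$, so $\fml{E}\equiv\fml{F}$ holds iff $\fml{E}\entails\fml{F}$, i.e.\ iff $\neg\SAT(\neg\fml{F}\land\fml{E})$. Applying this with $\fml{E}=\fml{F}\setminus\fml{W}$ turns the two defining conditions of \autoref{def:mds} into satisfiability statements about the group formula above: $\fml{F}\setminus\fml{D}\not\equiv\fml{F}$ becomes $\SAT(\neg\fml{F}\land\land_{c\in\fml{F}\setminus\fml{D}}(c))$, and $\fml{F}\setminus\fml{D}'\equiv\fml{F}$ becomes $\neg\SAT(\neg\fml{F}\land\land_{c\in\fml{F}\setminus\fml{D}'}(c))$. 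This is precisely the object already used as the predicate $\Pred$ in~\eqref{eq:mdspred} for the direct reduction $\tn{FMDS}\maps\tn{MSMP}$.

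Next I would unfold the definition of a group MCS of this group formula: a subset $\fml{C}$ of soft groups (clauses of $\fml{F}$) whose deletion restores satisfiability together with the hard group $\neg\fml{F}$, and such that no proper subset of $\fml{C}$ does. Combining this with the previous paragraph, $\neg\fml{F}\land(\fml{F}\setminus\fml{C})$ being satisfiable is exactly $\fml{F}\setminus\fml{C}\not\equiv\fml{F}$, while the minimality clause gives $\fml{F}\setminus\fml{C}'\equiv\fml{F}$ for every $\fml{C}'\subsetneq\fml{C}$; by \autoref{def:mds} this says $\fml{C}$ is an MDS of $\fml{F}$. Equivalently, one can route through \autoref{rm:mns}, identifying the complementary group MSS of the formula with an MNS of $\fml{F}$.

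The step I expect to carry the real weight is the bookkeeping around $\neg\fml{F}$: since $\fml{F}\in\CNF$, its negation is naturally a disjunction of terms, so I must argue that clausifying $\neg\fml{F}$ and placing it in a single hard group does not interfere with the group structure, and that the fresh auxiliary (Tseitin) variables it introduces never occur in any soft group, so that the satisfiability tests above are unaffected. The remaining subtlety is purely notational—lining up group-MCS minimality (over soft groups) with the subset-minimality over $\fml{W}\subseteq\fml{R}=\fml{F}$ demanded by \autoref{def:mds}—which is immediate once the soft groups are identified with the clauses of $\fml{F}$.
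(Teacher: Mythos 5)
Your proposal is correct and matches the paper's intent: the paper states this observation without a detailed proof, relying on \autoref{obs:gmus} and the cited group-MUS construction (hard group $\neg\fml{F}$, soft groups the clauses of $\fml{F}$), and your unfolding of the group-MCS definition into the two conditions of \autoref{def:mds}---using the fact that $\fml{F}\entails\fml{E}$ for any $\fml{E}\subseteq\fml{F}$, so equivalence collapses to entailment---is exactly the argument being invoked. The only cosmetic point is that your opening appeal to minimal-hitting-set duality is never actually used; your later paragraphs establish the MCS-to-MDS correspondence directly from the definitions, which is all that is needed.
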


Similarly to the FMES case, the argument to the predicate
in~\eqref{eq:mdspred} can be simplified:
\begin{equation*} 
  \begin{array}{lcl}
    \neg\fml{F}\land\land_{c\in\fml{R}\setminus\fml{W}}\,(c)
    & \Leftrightarrow &
    (\lor_{c\in\fml{F}}(\neg c))\land\land_{c\in\fml{F}\setminus\fml{W}}(c)
    \Leftrightarrow (\lor_{c\in\fml{W}}(\neg c))\land\land_{c\in\fml{F}\setminus\fml{W}}(c) \\
    & \Leftrightarrow &
    \neg\fml{W}\land\land_{c\in\fml{R}\setminus\fml{W}}\,(c) \\
  \end{array}
\end{equation*}
Thus, the predicate in~\eqref{eq:mdspred} can be formulated as
follows:
\begin{equation} \label{eq:mdspred2}
  \Pred(\fml{W})\triangleq\SAT(\neg\fml{W}\land\land_{c\in\fml{R}\setminus\fml{W}}\,(c))
\end{equation}

Throughout the paper,~\eqref{eq:mdspred} is used, since it facilitates 
relating this predicate with others. However, for practical
purposes~\eqref{eq:mdspred2} would be preferred.

\begin{remark} \label{rm:fmns}
  By \autoref{rm:mns}, an MNS of $\fml{F}\in\CNF$ can be computed as
  follows. Compute an MDS $\fml{D}$ of $\fml{F}$ and return
  $\fml{F}\setminus\fml{D}$.
\end{remark}

\subsubsection{Minimal Satisfiability \& Maximal Falsifiability}

\begin{proposition} \label{prop:fmcfs}
  $\tn{FMCFS}\maps\tn{MSMP}$.
\end{proposition}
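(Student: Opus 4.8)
The statement to prove is $\tn{FMCFS}\maps\tn{MSMP}$, where FMCFS (Definition \ref{def:mcfs}) asks for a Minimal Correction (for Falsifiability) Subset of $\fml{F}\in\CNF$: a set $\fml{C}\subseteq\fml{F}$ such that $\fml{F}\setminus\fml{C}$ is all-falsifiable, with every proper subset $\fml{C}'\subsetneq\fml{C}$ failing this property. The pattern to follow is fixed by the preceding propositions (FMCS in particular, Proposition \ref{prop:fmcs}): I must (i) specify the reduction by giving $\fml{R}$ and the predicate $\Pred(\fml{W})$, (ii) argue monotonicity by exhibiting the predicate as an instance of one of the forms $\mathscr{L}$, $\mathscr{P}$, $\mathscr{B}$ and invoking Proposition \ref{prop:mono}, and (iii) prove correctness by matching the minimality condition of Definition \ref{def:mp} against Definition \ref{def:mcfs}.

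**The key modeling idea.** The one genuinely new ingredient here, relative to FMCS, is encoding \emph{all-falsifiability} rather than satisfiability. Since a set $\fml{U}\subseteq\fml{F}$ is all-falsifiable iff some assignment falsifies every clause in $\fml{U}$, and falsifying a clause $c=\bigvee_{l\in c}l$ means setting all its literals to $0$, i.e.\ satisfying $\bigwedge_{l\in c}\neg l = \neg c$, the set $\fml{U}$ is all-falsifiable iff the formula $\bigwedge_{c\in\fml{U}}(\neg c)$ is satisfiable. This is exactly the SAT reformulation I need. So by direct analogy with the FMCS reduction (equation \eqref{eq:mcspred}), I would set $\fml{R}\triangleq\fml{F}$ and
$$
\Pred(\fml{W})\triangleq\SAT\bigl(\textstyle\bigwedge_{c\in\fml{R}\setminus\fml{W}}(\neg c)\bigr),
$$
so that $\Pred(\fml{W})$ holds precisely when $\fml{F}\setminus\fml{W}$ is all-falsifiable. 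The plan is to present this in the \texttt{Reduction} paragraph.

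**Monotonicity and correctness.** For monotonicity I would observe this predicate is of form $\mathscr{L}$ (equation \eqref{eq:reflogpred}) with $\fml{G}\triangleq\emptyset$, $u_i\triangleq c$, and $\sigma(c)\triangleq\neg c$; note that $\neg c$ is a term built only from $c$ and uses no other element of $\fml{R}$, so the form applies. Proposition \ref{prop:mono} then gives monotonicity immediately. For correctness I would take $\fml{M}$ a minimal set with $\Pred(\fml{M})$ holding: by the reformulation this says $\fml{F}\setminus\fml{M}$ is all-falsifiable, and by Definition \ref{def:mp} every $\fml{M}'\subsetneq\fml{M}$ gives $\neg\Pred(\fml{M}')$, i.e.\ $\fml{F}\setminus\fml{M}'$ is not all-falsifiable — which is exactly the defining condition of an MCFS in Definition \ref{def:mcfs}.

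**Where the effort actually lies.** Structurally this proof is routine once the $\sigma(c)\triangleq\neg c$ substitution is spotted; the only place demanding care is the semantic justification that ``all-falsifiable'' coincides with ``$\bigwedge_{c}(\neg c)$ is satisfiable.'' I would state this equivalence explicitly (a clause is falsified iff all its literals are $0$ iff its negation, a term, is satisfied, and all-falsifiability is the conjunction of these conditions over $\fml{U}$, which shares a single witnessing assignment). I expect this equivalence to be the main (though mild) obstacle, since it is the only step not inherited verbatim from the FMCS template; everything downstream then follows mechanically from Propositions \ref{prop:mono} and the definitions. A complete write-up would look as follows.

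\begin{proof}
\proofitem{Reduction}
The reduction is defined as follows. $\fml{R}\triangleq\fml{F}$ and,
\begin{equation} \label{eq:mcfspred}
  \Pred(\fml{W})\triangleq\SAT(\land_{c\in\fml{R}\setminus\fml{W}}\,(\neg c))
\end{equation}
with $\fml{W}\subseteq\fml{R}$.

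\proofitem{Monotonicity}
The predicate (see \eqref{eq:mcfspred}) is of form \lform, with
$\fml{G}\triangleq\emptyset$, $u_i\triangleq c$, and
$\sigma(c)\triangleq\neg c$. Thus, by \autoref{prop:mono} the
predicate is monotone.

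\proofitem{Correctness}
First observe that, for any $\fml{U}\subseteq\fml{F}$, the formula
$\land_{c\in\fml{U}}\,(\neg c)$ is satisfiable iff $\fml{U}$ is
all-falsifiable: a clause $c$ is falsified by an assignment exactly
when all its literals are assigned $0$, i.e.\ exactly when the term
$\neg c$ is satisfied, and the conjunction
$\land_{c\in\fml{U}}\,(\neg c)$ is satisfiable iff a single assignment
simultaneously falsifies every clause in $\fml{U}$.

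Hence $\Pred(\fml{W})$ holds iff $\fml{F}\setminus\fml{W}$ is
all-falsifiable.

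Let $\fml{C}$ be a minimal set such that $\Pred(\fml{C})$ holds.
By \autoref{def:mp}, since $\fml{C}$ is minimal for $\Pred$, then
$\Pred$ holds for $\fml{C}$, i.e.\ $\fml{F}\setminus\fml{C}$ is
all-falsifiable, and for any $\fml{C}'\subsetneq\fml{C}$ the predicate
does not hold, i.e.\ $\fml{F}\setminus\fml{C}'$ is not
all-falsifiable.

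Thus, by \autoref{def:mcfs}, $\fml{C}$ is an MCFS of $\fml{F}$.
\end{proof}
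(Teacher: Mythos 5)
Your proposal is correct and follows essentially the same route as the paper's own proof: the same reduction $\fml{R}\triangleq\fml{F}$ with $\Pred(\fml{W})\triangleq\SAT(\land_{c\in\fml{R}\setminus\fml{W}}\,(\neg c))$, the same form-\lform{} monotonicity argument via \autoref{prop:mono}, and the same correctness argument matching \autoref{def:mp} against \autoref{def:mcfs}. Your added explicit justification that all-falsifiability of $\fml{U}$ coincides with satisfiability of $\land_{c\in\fml{U}}(\neg c)$ is a welcome (if minor) improvement in rigor over the paper, which leaves that equivalence implicit.
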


\begin{proof}
\proofitem{Reduction}
The reduction is defined as follows. $\fml{R}\triangleq\fml{F}$ and,
\begin{equation} \label{eq:mfspred}
  \Pred(\fml{W})\triangleq\SAT(\land_{c\in\fml{R}\setminus\fml{W}}\,(\neg c))
\end{equation}
with $\fml{W}\subseteq\fml{R}$.
\proofitem{Monotonicity}
The predicate (see \eqref{eq:mfspred}) is of form \lform, with
$\fml{G}\triangleq\emptyset$, $u_i\triangleq c$, and
$\sigma(c)\triangleq\neg c$. Thus, by \autoref{prop:mono} the
predicate is monotone.
\proofitem{Correctness}
Let $\fml{M}$ be a minimal set such that $\Pred(\fml{M})$ holds.
By \autoref{def:mp}, since $\fml{M}$ is minimal for $\Pred$, then for
any $\fml{M}'\subsetneq\fml{M}$ the predicate does not hold,
i.e.\ $\fml{F}\setminus\fml{M}$ is all-falsifiable, and for any
$\fml{M}'\subsetneq\fml{M}$, $\fml{F}\setminus\fml{M}'$ is not
all-falsifiable. Thus, by \autoref{def:mfs}, $\fml{M}$ is an MCFS of
$\fml{F}$.
\end{proof}


\begin{remark} \label{rm:fmcfs}
  By \autoref{rm:mfs}, an MFS of $\fml{F}\in\CNF$ can be computed as
  follows. Compute an MCFS $\fml{C}$ of $\fml{F}$ and return
  $\fml{F}\setminus\fml{C}$.
\end{remark}

\subsubsection{Minimal \& Maximal Models}

\begin{proposition} \label{prop:fmnm}
  $\tn{FMnM}\maps\tn{MSMP}$.
\end{proposition}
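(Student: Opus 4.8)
The plan is to take the reference set to be the variables of the formula, $\fml{R}\triangleq X=\var(\fml{F})$, and to design a predicate that captures \emph{which variables are permitted to be true} rather than the naive condition ``$\nu(\fml{W},X)\entails\fml{F}$''. The naive condition is not monotone, since forcing an extra variable true can destroy a satisfying assignment (e.g.\ with $\fml{F}=\neg x$). Instead I would define
\[
  \Pred(\fml{W})\triangleq\SAT(\fml{F}\land\land_{x\in\fml{R}\setminus\fml{W}}\,(\neg x)),
\]
so that $\Pred(\fml{W})$ holds precisely when $\fml{F}$ has a model all of whose true variables lie inside $\fml{W}$.

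For monotonicity, I would observe that this is an instance of form \lform, taking $\fml{G}\triangleq\fml{F}$, $u_i\triangleq x$, and $\sigma(x)\triangleq\neg x$; monotonicity is then immediate from \autoref{prop:mono}, since enlarging $\fml{W}$ only deletes conjuncts $\neg x$ and hence preserves satisfiability.

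The substance of the argument is correctness, and it turns on one observation: at a \emph{minimal} satisfying $\fml{W}$, the witnessing model cannot leave any variable of $\fml{W}$ false. Concretely, let $\fml{M}$ be minimal with $\Pred(\fml{M})$ holding and let $\mu$ be a witnessing model, whose set of true variables $T_\mu$ satisfies $T_\mu\subseteq\fml{M}$. If $T_\mu\subsetneq\fml{M}$, then the same $\mu$ would witness $\Pred(T_\mu)$, contradicting minimality of $\fml{M}$; hence $T_\mu=\fml{M}$, i.e.\ $\nu(\fml{M},X)=\mu\entails\fml{F}$, so $\fml{M}$ is indeed a model. For subset-minimality, I would use that $\Pred(\fml{M}')$ fails for every $\fml{M}'\subsetneq\fml{M}$: this says $\fml{F}$ has no model with true variables inside $\fml{M}'$, and in particular the canonical assignment $\nu(\fml{M}',X)$, whose true variables are exactly $\fml{M}'$, is not a model, so $\nu(\fml{M}',X)\nentails\fml{F}$. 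By \autoref{def:mnm}, these two facts make $\fml{M}$ a minimal model of $\fml{F}$.

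The step I expect to be the crux is exactly this bridge between the existential, monotone predicate (``some model sits below $\fml{W}$'') and the pointwise condition defining a minimal model (which concerns the single assignment $\nu(\fml{M},X)$). The argument above resolves it by showing that minimality of $\fml{W}$ forces the witness to coincide with $\nu(\fml{M},X)$; this is the only place where genuine care is needed, the remaining bookkeeping being routine.
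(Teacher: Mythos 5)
Your proposal is correct and uses exactly the paper's reduction: the same reference set $\fml{R}=\var(\fml{F})$, the same predicate $\Pred(\fml{W})\triangleq\SAT(\fml{F}\land\land_{x\in\fml{R}\setminus\fml{W}}(\neg x))$, and the same appeal to form \lform{} for monotonicity. Your correctness argument is the same in substance as the paper's (which compresses it into the remark that minimality forces the variables outside $\fml{M}$ to be 0 and $\fml{M}$ itself to be unshrinkable), but you make explicit the key step the paper leaves implicit — that at a minimal $\fml{M}$ the witnessing model's true variables must equal $\fml{M}$, not merely be contained in it — which is a welcome sharpening rather than a different route.
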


\begin{proof}
\proofitem{Reduction}
The reduction is defined as follows. $\fml{R}\triangleq
X\triangleq\var(\fml{F})$
and,
\begin{equation} \label{eq:mnmpred}
  \Pred(\fml{W})\triangleq\SAT(\fml{F}\land\land_{x\in\fml{R}\setminus\fml{W}}\,(\neg x))
\end{equation}
with $\fml{W}\subseteq\fml{R}$.
\proofitem{Monotonicity}
The predicate (see \eqref{eq:mnmpred}) is of form \lform, with
$\fml{G}\triangleq\fml{F}$, $u_i\triangleq x$, and
$\sigma(x)\triangleq\neg x$. Thus, by \autoref{prop:mono} the
predicate is monotone.
%
%
\proofitem{Correctness}
Let $\fml{M}$ be a minimal set such that $\Pred(\fml{M})$ holds.
By \autoref{def:mp}, since $\fml{M}$ is minimal for $\Pred$, then for
any $\fml{M}'\subsetneq\fml{M}$ the predicate does not
hold.
By observing that $\fml{M}$ denotes the set of variables that {\em
  can} be assigned value 1 (since the other variables {\em must} be
assigned value 0) and, because $\fml{M}$ is minimal, $\fml{M}$ cannot
be further reduced. Thus, by \autoref{def:mnm}, $\fml{M}$ is a minimal
model of $\fml{F}$.
%
\end{proof}

\begin{proposition} \label{prop:fmxm}
  $\tn{FMxM}\maps\tn{MSMP}$.
\end{proposition}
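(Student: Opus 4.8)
The plan is to mirror the reduction used for FMnM in \autoref{prop:fmnm}, but dualized so that a \emph{minimal} set of variables corresponds to the variables forced to $0$ (rather than to $1$). Concretely, I would set $\fml{R}\triangleq X\triangleq\var(\fml{F})$ and define
\begin{equation*}
  \Pred(\fml{W})\triangleq\SAT(\fml{F}\land\land_{x\in\fml{R}\setminus\fml{W}}\,(x)),
\end{equation*}
with $\fml{W}\subseteq\fml{R}$. The only change with respect to~\eqref{eq:mnmpred} is that each variable outside $\fml{W}$ is now forced to $1$ (i.e.\ $\sigma(x)\triangleq x$) instead of being forced to $0$.

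Monotonicity is immediate: this predicate is of form \lform with $\fml{G}\triangleq\fml{F}$, $u_i\triangleq x$, and $\sigma(x)\triangleq x$, so by \autoref{prop:mono} it is monotone. Intuitively, enlarging $\fml{W}$ drops conjuncts of the form $(x)$, which can only preserve satisfiability.

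For correctness, I would show that if $\fml{W}$ is a minimal set for which $\Pred(\fml{W})$ holds, then $\fml{M}\triangleq X\setminus\fml{W}$ is a maximal model. The main subtlety, and the step I expect to be the main obstacle, is that the predicate only guarantees the \emph{existence} of a satisfying assignment $\nu^\ast$ in which every variable of $X\setminus\fml{W}$ is $1$; a priori $\nu^\ast$ could still assign $0$ to some variable of $\fml{W}$, so that the set of $0$-variables $\fml{Z}^\ast\triangleq\{x\mid\nu^\ast(x)=0\}$ is only known to satisfy $\fml{Z}^\ast\subseteq\fml{W}$. I would close this gap using minimality: if $\fml{Z}^\ast\subsetneq\fml{W}$, then $\nu^\ast$ already forces every variable outside $\fml{Z}^\ast$ to $1$, so $\Pred(\fml{Z}^\ast)$ would hold, contradicting minimality of $\fml{W}$. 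Hence $\fml{Z}^\ast=\fml{W}$, i.e.\ $\nu^\ast=\nu(\fml{M},X)$ with $\fml{M}=X\setminus\fml{W}$, so $\fml{M}$ is a model.

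Finally I would establish maximality by the same dualization: suppose some $\fml{M}'\supsetneq\fml{M}$ had $\nu(\fml{M}',X)\entails\fml{F}$. Then $\fml{W}'\triangleq X\setminus\fml{M}'\subsetneq\fml{W}$, and $\nu(\fml{M}',X)$ assigns $1$ to every variable outside $\fml{W}'$, so $\Pred(\fml{W}')$ holds, contradicting the minimality of $\fml{W}$. Thus no proper superset of $\fml{M}$ is a model, and by \autoref{def:lmxm} the set $\fml{M}=X\setminus\fml{W}$ is a maximal model of $\fml{F}$, which the reduction returns.
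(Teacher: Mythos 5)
Your proof is correct, but it takes a genuinely different route from the paper's. The paper does not give a direct predicate for FMxM at all: it reduces FMxM to FMnM by constructing $\fml{F}^C$, the formula obtained from $\fml{F}$ by flipping the polarity of every literal, and arguing by structural induction on the parse tree that an assignment satisfies $\fml{F}$ iff the complemented assignment satisfies $\fml{F}^C$; hence the complement $X\setminus\fml{M}^C$ of a minimal model of $\fml{F}^C$ (computed via \autoref{prop:fmnm}) is a maximal model of $\fml{F}$. You instead dualize the FMnM predicate itself, taking $\Pred(\fml{W})\triangleq\SAT(\fml{F}\land\land_{x\in\fml{R}\setminus\fml{W}}\,(x))$, of form \lform, and prove correctness of this reduction directly. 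Both arguments are sound. Yours buys a self-contained reduction in exactly the style of the other entries of \autoref{tab:maps} --- no syntactic transformation of $\fml{F}$, no auxiliary induction lemma --- and your correctness argument is actually tighter than what the paper offers even for the FMnM case: the step where you use minimality to show that the witness $\nu^\ast$ must assign $0$ to exactly the variables of $\fml{W}$ (i.e.\ $\fml{Z}^\ast=\fml{W}$, so that $\nu^\ast=\nu(\fml{M},X)$) is glossed over in the paper's proof of \autoref{prop:fmnm}. What the paper's approach buys in exchange is reuse: once the polarity-flip correspondence is established, FMxM requires no new predicate or correctness proof, and the same trick transfers to other minimal/maximal dual pairs.
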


\begin{proof}
  Let $\fml{F}^C$ be constructed from $\fml{F}$ by flipping the
  polarity of all literals in $\fml{F}$, i.e.\ replace $l$ with $\neg
  l$ for $l\in\{x,\neg x\,|\,x\in\var(\fml{F})\}$. Observe that,
  $\fml{F}$ and $\fml{F}^C$ have the same parse tree excluding the
  leaves.
  Now, compute a minimal model $M^C$ for $\fml{F}^C$, e.g.\ using
  \autoref{prop:fmnm}. Then, as shown next, a maximal model for
  $\fml{F}$ is given by $\var(\fml{F})\setminus M^C$.

  Let $M^C$ be any model of $\fml{F}^C$, and let $\nu^C(\fml{M}^C,X)$
  denote the associated truth assignment. Consider the truth
  assignment $\nu$ obtained by flipping the value of all variables,
  and let $\fml{M}$ denote the variables assigned value 1,
  i.e.\ $\fml{M} = X\setminus\fml{M}^C$. Clearly, $\fml{M}^C$ is
  minimal iff $\fml{M}$ is maximal.
  Moreover, let $\fml{F}$ be obtained from $\fml{F}^C$ by
  complementing all of its literals.
  Thus, the leaves of the parse tree of $\fml{F}^C$ are complemented
  and the values assigned to the leaves are also complemented.
  %
  %
  Now, recall that, with the exception of the leaves, both $\fml{F}$
  and $\fml{F}^C$ have the same parse tree, and the leaves are
  assigned the same values in both cases. Thus, by structural 
  induction it follows that $\nu^C\entails{\fml{F}^C}$ iff
  $\nu\entails\fml{F}$. 
  %
  %
\end{proof}

\subsubsection{Implicants \& Implicates}

\begin{proposition} \label{prop:fpit}
  $\tn{FPIt}\maps\tn{MSMP}$.
\end{proposition}

\begin{proof}
\proofitem{Reduction}
The reduction is defined as
follows. $\fml{R}\triangleq L(t)\triangleq\{l\,|\,l\in t\}$ and,
\begin{equation} \label{eq:pitpred}
  \Pred(\fml{W})\triangleq\neg\SAT(\neg\fml{F}\land\land_{l\in\fml{W}}\,(l))
\end{equation}
with $\fml{W}\subseteq\fml{R}$.
\proofitem{Monotonicity}
The predicate (see \eqref{eq:pitpred}) is of form \pform, with
$\fml{G}\triangleq\neg\fml{F}$, $u_i\triangleq l$, and
$\sigma(l)\triangleq l$. Thus, by \autoref{prop:mono} the predicate is
monotone.
%
%
\proofitem{Correctness}
Let $\fml{M}$ be a minimal set such that $\Pred(\fml{M})$ holds.
Thus, the literals in $\fml{M}$ entail $\fml{F}$, and for any proper
subset $\fml{M}'$ of $\fml{M}$, the literals in $\fml{M}'$ do not
entail $\fml{F}$. Thus $\fml{M}$ is a prime implicant of $\fml{F}$.
\end{proof}

Similarly, we can reduce the computation of a prime implicate given a
clause to MSMP. (This reduction of FPIc to MSMP was first described
in~\cite{bradley-fmcad07,bradley-fac08}.)

\begin{proposition} \label{prop:fpic}
  $\tn{FPIc}\maps\tn{MSMP}$.
\end{proposition}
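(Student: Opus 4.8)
The plan is to mirror the FPIt reduction (Proposition~\ref{prop:fpit}) exactly, exploiting the duality between implicants/implicates and the symmetry between the forms \pform and \bform. The statement asserts $\tn{FPIc}\maps\tn{MSMP}$, where we are given $\fml{F}\in\BFML$ and an implicate $c\in\ASGN$ (so $\fml{F}\entails c$, with $c$ read as a disjunction of literals), and we must compute a prime implicate $p\subseteq c$. First I would fix the reference set $\fml{R}\triangleq L(c)\triangleq\{l\,|\,l\in c\}$, the literals occurring in $c$, so that a working set $\fml{W}\subseteq\fml{R}$ names a candidate subclause. The guiding fact is that a subset of literals $p\subseteq c$ is an implicate of $\fml{F}$ iff $\fml{F}\entails\lor_{l\in p}l$, which by definition of entailment is equivalent to $\fml{F}\land\land_{l\in p}(\neg l)$ being unsatisfiable.

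This dictates the predicate. I would define
\begin{equation} \label{eq:picpred}
  \Pred(\fml{W})\triangleq\neg\SAT(\fml{F}\land\land_{l\in\fml{W}}\,(\neg l))
\end{equation}
with $\fml{W}\subseteq\fml{R}$, so that $\Pred(\fml{W})$ holds exactly when the subclause on the literals of $\fml{W}$ is an implicate of $\fml{F}$. For monotonicity I would observe that \eqref{eq:picpred} is of form \pform, taking $\fml{G}\triangleq\fml{F}$, $u_i\triangleq l$, and $\sigma(l)\triangleq\neg l$; then \autoref{prop:mono} gives monotonicity immediately. The well-definedness of the MSMP instance requires that $\Pred(\fml{R})$ hold, i.e.\ that the full clause $c$ be an implicate — and this is exactly the hypothesis $\fml{F}\entails c$, since $\fml{F}\land\land_{l\in c}(\neg l)$ is unsatisfiable precisely when $\fml{F}\entails\lor_{l\in c}l$. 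I would note this to confirm the reduction targets a nonempty search space.

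For correctness, let $\fml{M}$ be a minimal set with $\Pred(\fml{M})$, obtained by solving the MSMP instance. By \autoref{def:mp}, $\Pred(\fml{M})$ holds, so $\fml{F}\land\land_{l\in\fml{M}}(\neg l)$ is unsatisfiable, i.e.\ $\fml{F}\entails\lor_{l\in\fml{M}}l$, so the clause $p$ formed from $\fml{M}$ is an implicate of $\fml{F}$; and for every $\fml{M}'\subsetneq\fml{M}$ the predicate fails, so $\fml{F}\nentails\lor_{l\in\fml{M}'}l$, meaning no proper subclause is an implicate. Since $\fml{M}\subseteq\fml{R}=L(c)$ yields $p\subseteq c$, by \autoref{def:pic} the clause $p$ is a prime implicate of $\fml{F}$. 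Reading off the returned literal set as a clause completes the reduction. I expect no genuine obstacle here: the only point demanding care is the bookkeeping between the two readings of a literal set — $\fml{W}$ is handled as a set of elements of $\fml{R}$, but its complementation $\land_{l\in\fml{W}}(\neg l)$ must be the conjunction of the negations so that unsatisfiability encodes entailment of the corresponding disjunction — which is precisely the convention fixed in the preliminaries and already exploited in the FPIt proof.
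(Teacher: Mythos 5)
Your proposal matches the paper's proof exactly: the same reference set $\fml{R}\triangleq L(c)$, the same predicate $\Pred(\fml{W})\triangleq\neg\SAT(\fml{F}\land\land_{l\in\fml{W}}\,(\neg l))$, the same appeal to form \pform{} with $\fml{G}\triangleq\fml{F}$ and $\sigma(l)\triangleq\neg l$ via \autoref{prop:mono}, and the same correctness reading of minimality against \autoref{def:pic}. If anything, your correctness paragraph is stated more carefully than the paper's (which loosely speaks of proper supersets where minimality concerns proper subsets), and your observation that $\Pred(\fml{R})$ holds by the hypothesis $\fml{F}\entails c$ is a worthwhile well-definedness check the paper leaves implicit.
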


\begin{proof}
\proofitem{Reduction}
The reduction is defined as
follows. $\fml{R}\triangleq L(c)\triangleq\{l\,|\,l\in c\}$
and, 
\begin{equation} \label{eq:picpred}
  \Pred(\fml{W})\triangleq\neg\SAT(\fml{F}\land\land_{l\in\fml{W}}\,(\neg l))
\end{equation}
with $\fml{W}\subseteq\fml{R}$.
\proofitem{Monotonicity}
The predicate (see \eqref{eq:picpred}) is of form \pform, with
$\fml{G}\triangleq\fml{F}$, $u_i\triangleq l$, and
$\sigma(l)\triangleq\neg l$. Thus, by \autoref{prop:mono} the
predicate is monotone.
\proofitem{Correctness}
Let $\fml{M}$ be a minimal set such that $\Pred(\fml{M})$ holds.
Thus, $\fml{F}$ entails the literals in $\fml{M}$, and for any proper
superset $\fml{M}'$ of $\fml{M}$, $\fml{F}$ does not entail the
literals in $\fml{M}'$. Thus $\fml{M}$ is a prime implicate of
$\fml{F}$.
%
\end{proof}

Regarding FLEIt, $\fml{F}$ is in DNF, $\fml{F}=\lor_{j=1}^{m}\,t_j$,
and let the implicant to extend be $t_k$, with $1\le k\le m$.
The literals that can be used to extend $t_k$ are
$\fml{L}_t\triangleq\{l\in\LITS\,|\,\{l,\neg l\}\cap t_k=\emptyset\}$.
Moreover, let $\fml{D}=\lor_{j=1,j\not=k}^{m}\,t_j$.
Define $\fml{F}^{\tn{ItX}}\triangleq\fml{F}\land(\neg\fml{D})$, which
can be simplified to
$\fml{F}^{\tn{ItX}}\triangleq t_k\land\land_{i=1,i\not=k}^{m}\,(\neg t_i)$.

\begin{proposition} \label{prop:fleit}
  $\tn{FLEIt}\maps\tn{MSMP}$.
\end{proposition}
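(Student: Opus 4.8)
The plan is to follow the template of the earlier reductions, with one twist: FLEIt seeks a \emph{longest} (maximal) extension while MSMP returns a \emph{minimal} set, so I will let the working set $\fml{W}$ index the literals that are \emph{left out} of the extension. Concretely, I would take $\fml{R}\triangleq\fml{L}_t$ and associate with each $\fml{W}\subseteq\fml{R}$ the candidate term $u\triangleq t_k\cup(\fml{R}\setminus\fml{W})$, so that shrinking $\fml{W}$ enlarges $u$.

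First I would reduce the equivalence requirement of \autoref{def:leit} to a single satisfiability test. Since $u\supseteq t_k$ we have $u\entails t_k$, hence $\fml{D}\lor u\entails\fml{D}\lor t_k=\fml{F}$ automatically; thus $(\fml{F}\setminus\{t_k\})\cup\{u\}\equiv\fml{F}$ holds iff the reverse entailment $\fml{F}\entails\fml{D}\lor u$ holds, i.e.\ iff $\fml{F}\land\neg\fml{D}\land\neg u$ is unsatisfiable. Using $\fml{F}\land\neg\fml{D}=\fml{F}^{\tn{ItX}}$, this is exactly the unsatisfiability of $\fml{F}^{\tn{ItX}}\land\neg u$.

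Next I would simplify $\neg u$. Writing $\neg u=(\lor_{l\in t_k}\neg l)\lor(\lor_{l\in\fml{R}\setminus\fml{W}}\neg l)$ and noting that $\fml{F}^{\tn{ItX}}\entails t_k$, every disjunct $\neg l$ with $l\in t_k$ is inconsistent with $\fml{F}^{\tn{ItX}}$, so the test collapses to the unsatisfiability of $\fml{F}^{\tn{ItX}}\land(\lor_{l\in\fml{R}\setminus\fml{W}}\neg l)$. This yields the predicate
\[
  \Pred(\fml{W})\triangleq\neg\SAT(\fml{F}^{\tn{ItX}}\land(\lor_{l\in\fml{R}\setminus\fml{W}}\,(\neg l))),
\]
which is of form \bform with $\fml{G}\triangleq\fml{F}^{\tn{ItX}}$, $u_i\triangleq l$, and $\sigma(l)\triangleq\neg l$; monotonicity is then immediate from \autoref{prop:mono}.

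For correctness I would take a minimal $\fml{M}$ with $\Pred(\fml{M})$ and set $u\triangleq t_k\cup(\fml{R}\setminus\fml{M})$. The equivalence above shows $u$ is a valid extension, and because the only literals that may be added to $t_k$ are those in $\fml{R}$, every term $u'\in\ASGN$ with $u'\supsetneq u$ corresponds to a unique $\fml{M}'\subsetneq\fml{M}$; minimality of $\fml{M}$ (\autoref{def:mp}) makes $\Pred(\fml{M}')$ fail, so no such $u'$ is a valid extension, matching \autoref{def:leit}. I expect the main obstacle to be this second step — justifying cleanly that extending $t_k$ can only delete models (so one entailment direction is free) and that the $t_k$-disjuncts of $\neg u$ drop out against $\fml{F}^{\tn{ItX}}$ — so that the condition is captured exactly by a predicate of form \bform ranging over $\fml{R}\setminus\fml{W}$ rather than over $\fml{W}$.
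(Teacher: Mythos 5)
Your proposal is correct and follows essentially the same route as the paper's own proof: the same reference set $\fml{R}\triangleq\fml{L}_t$, the same predicate $\Pred(\fml{W})\triangleq\neg\SAT(\fml{F}^{\tn{ItX}}\land(\lor_{l\in\fml{R}\setminus\fml{W}}\,\neg l))$ of form \bform with $\fml{G}\triangleq\fml{F}^{\tn{ItX}}$ and $\sigma(l)\triangleq\neg l$, and the same complementation idea whereby shrinking $\fml{W}$ enlarges the extension $u=t_k\cup(\fml{R}\setminus\fml{W})$. Your correctness argument (the reverse entailment $\fml{D}\lor u\entails\fml{F}$ is free, and the $t_k$-disjuncts of $\neg u$ are absorbed because $\fml{F}^{\tn{ItX}}\entails t_k$) is exactly the paper's simplification, spelled out slightly more explicitly than in its proof sketch.
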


\begin{proof}(Sketch)
\proofitem{Reduction}
The reduction is defined as follows. $\fml{R}\triangleq\fml{L}_t$ and, 
\begin{equation} \label{eq:leitpred}
  \Pred(\fml{W})\triangleq\neg\SAT(\fml{F}^{\tn{ItX}}\land(\lor_{l\in\fml{R}\setminus\fml{W}}\,\neg l))
\end{equation}
with $\fml{W}\subseteq\fml{R}$.
\proofitem{Monotonicity}
The predicate (see \eqref{eq:leitpred}) is of form \bform, with
$\fml{G}\triangleq\fml{F}^{\tn{ItX}}$, $u_i\triangleq l$, and
$\sigma(l)\triangleq\neg l$. Thus, by \autoref{prop:mono} the
predicate is monotone.
\proofitem{Correctness} 
Let $u = t_k\land q$, where $q$ is a term.
Clearly, since $t_k\entails\fml{F}$, then $u\entails\fml{F}$,
and so $\fml{D}\lor u\entails\fml{F}$.
The issue is whether $\fml{F}\entails\fml{D}\lor u$, or equivalently
$\fml{F}\land(\neg\fml{D})\land(\neg u)\entails\bot$.
Expanding we get $t_k\land\land_{i=1,i\not=k}^{m}\,(\neg
t_i)\land(\neg t_k\lor \neg q)\entails\bot$, which can also be
simplified to $t_k\land\land_{i=1,i\not=k}^{m}\,(\neg
t_i)\land(\neg q)\entails\bot$.
Hence, the goal is to find a maximal set of literals $q$ such that
$t_k\land\land_{i=1,i\not=k}^{m}\,(\neg t_i)\land(\neg
q)\entails\bot$.
This can be converted to a minimization problem by removing all
literals and then adding to $q$ literals that can be included.
\end{proof}

Regarding FLEIc, $\fml{F}$ is in CNF, $\fml{F}=\land_{j=1}^{m}\,(c_j)$,
and let the implicate to extend be $c_k$, with $1\le k\le m$.
The literals that can be used to extend $c_k$ are
$\fml{L}_c\triangleq\{l\in\LITS\,|\,\{l,\neg l\}\cap c_k=\emptyset\}$.
Moreover, let $\fml{D}=\land_{j=1,j\not=k}^{m}\,(c_j)$.
Define $\fml{F}^{\tn{IcX}}\triangleq(\neg\fml{F})\land(\fml{D})$, which
can be simplified to
$\fml{F}^{\tn{IcX}}\triangleq (\neg c_k)\land\land_{i=1,i\not=k}^{m}\,(c_i)$.

\begin{proposition} \label{prop:fleic}
  $\tn{FLEIc}\maps\tn{MSMP}$.
\end{proposition}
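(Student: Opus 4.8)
The plan is to mirror the reduction just given for \tn{FLEIt}, exploiting the CNF/DNF duality between extending an implicant (a term, by adding conjuncts) and extending an implicate (a clause, by adding disjuncts). First I would fix the reduction: take $\fml{R}\triangleq\fml{L}_c$, the literals available to extend $c_k$, and define
\begin{equation*}
  \Pred(\fml{W})\triangleq\neg\SAT\bigl(\fml{F}^{\tn{IcX}}\land(\lor_{l\in\fml{R}\setminus\fml{W}}\,l)\bigr),
\end{equation*}
with $\fml{W}\subseteq\fml{R}$. Note the contrast with the \tn{FLEIt} predicate: here the disjunct contributed by each leftover element is the literal $l$ itself, i.e.\ $\sigma(l)\triangleq l$ rather than $\neg l$, precisely because an implicate is extended by adding \emph{disjuncts}. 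Monotonicity is then immediate, since the predicate is of form \bform with $\fml{G}\triangleq\fml{F}^{\tn{IcX}}$, $u_i\triangleq l$ and $\sigma(l)\triangleq l$, so \autoref{prop:mono} applies directly.

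The substantive part is correctness, which I would argue by tracking the two entailment directions. Writing $u=c_k\lor q$, where $q$ is the clause of added literals, one direction is automatic: since $c_k\entails u$ and $\fml{F}=\fml{D}\land c_k$, we get $\fml{F}\entails\fml{D}$ and $\fml{F}\entails c_k\entails u$, hence $\fml{F}\entails(\fml{F}\setminus\{c_k\})\cup\{u\}=\fml{D}\land u$. The only nontrivial requirement is therefore $\fml{D}\land u\entails\fml{F}$, equivalently $\fml{D}\land u\land(\neg\fml{F})\entails\bot$. Here I would use the definition $\fml{F}^{\tn{IcX}}\triangleq(\neg\fml{F})\land\fml{D}$ to rewrite the condition as $\neg\SAT(\fml{F}^{\tn{IcX}}\land u)$, and then observe that $\fml{F}^{\tn{IcX}}$ contains $\neg c_k=\land_{l\in c_k}(\neg l)$, which forces every literal of $c_k$ to $0$; hence $\fml{F}^{\tn{IcX}}\land u\equiv\fml{F}^{\tn{IcX}}\land q$, so the requirement collapses to $\neg\SAT(\fml{F}^{\tn{IcX}}\land(\lor_{l\in q}\,l))$, which is exactly $\Pred$ evaluated with $q=\fml{R}\setminus\fml{W}$.

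To close the argument I would take a minimal $\fml{M}$ with $\Pred(\fml{M})$ and set $u=c_k\cup(\fml{R}\setminus\fml{M})$. Holding of $\Pred(\fml{M})$ yields $(\fml{F}\setminus\{c_k\})\cup\{u\}\equiv\fml{F}$ by the computation above, while minimality of $\fml{M}$ (every proper subset falsifies $\Pred$) translates, through the complementation $q=\fml{R}\setminus\fml{M}$, into maximality of $u$: each $\fml{M}'\subsetneq\fml{M}$ corresponds to an extension $u'\supsetneq u$ that breaks the equivalence. Thus $u$ is a longest extension of $c_k$ by \autoref{def:leic}.

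I expect the main obstacle to be stating this complementation cleanly: MSMP returns a subset-\emph{minimal} set, whereas a longest extension is subset-\emph{maximal}, so I must verify that "minimal set over $\Pred$" corresponds exactly to "maximal added clause $q$" under $q=\fml{R}\setminus\fml{M}$. The supporting simplification $\fml{F}^{\tn{IcX}}\land u\equiv\fml{F}^{\tn{IcX}}\land q$ must also be justified carefully, as it hinges both on $\neg c_k$ occurring in $\fml{F}^{\tn{IcX}}$ and on the literals of $q$ being variable-disjoint from $c_k$, which is exactly what membership $q\subseteq\fml{L}_c$ guarantees. Everything else parallels the \tn{FLEIt} argument.
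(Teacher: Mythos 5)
Your proposal is correct and follows essentially the same route as the paper's proof: the identical reduction $\fml{R}\triangleq\fml{L}_c$ with the form-\bform{} predicate over $\fml{F}^{\tn{IcX}}$, monotonicity via \autoref{prop:mono}, and correctness by writing $u=c_k\lor q$, noting $\fml{F}\entails\fml{D}\land u$ is automatic, and reducing the converse entailment to $\neg\SAT(\fml{F}^{\tn{IcX}}\land q)$. If anything, you make explicit the complementation $q=\fml{R}\setminus\fml{M}$ that turns subset-maximality of the extension into subset-minimality for MSMP, a step the paper's sketch only gestures at.
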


\begin{proof}(Sketch)
\proofitem{Reduction}
The reduction is defined as follows. $\fml{R}\triangleq\fml{L}_c$ and, 
\begin{equation} \label{eq:leicpred}
  \Pred(\fml{W})\triangleq\neg\SAT(\fml{F}^{\tn{IcX}}\land(\lor_{l\in\fml{R}\setminus\fml{W}}\,l))
\end{equation}
with $\fml{W}\subseteq\fml{R}$.
\proofitem{Monotonicity}
The predicate (see \eqref{eq:leicpred}) is of form \bform, with
$\fml{G}\triangleq\fml{F}^{\tn{IcX}}$, $u_i\triangleq l$, and
$\sigma(l)\triangleq l$. Thus, by \autoref{prop:mono} the predicate is
monotone.
\proofitem{Correctness}  
Let $u = c_k\lor q$, where $q$ is a clause.
Clearly, since $\fml{F}\entails c_k$, then $\fml{F}\entails u$,
and so $\fml{F}\entails\fml{D}\land(u)$.
The issue is whether $\fml{D}\land(u)\entails\fml{F}$, or equivalently
$(\neg\fml{F})\land\fml{D}\land(u)\entails\bot$.
Expanding we get $(\neg
c_k)\land\land_{i=1,i\not=k}^{m}\,(c_i)\land(c_k\lor q)\entails\bot$,
which can also be simplified to $(\neg
c_k)\land\land_{i=1,i\not=k}^{m}\,(c_i)\land(q)\entails\bot$.
Hence, the goal is to find a maximal set of literals $q$ such that
$(\neg t_k\land\land_{i=1,i\not=k}^{m}\,(c_i)\land(q)\entails\bot$.
This can be converted to a minimization problem by removing all
literals and then adding to $q$ literals that can be included.
\end{proof}

\subsubsection{Formula Entailment}

\begin{proposition} \label{prop:fmnes}
  $\tn{FMnES}\maps\tn{MSMP}$.
\end{proposition}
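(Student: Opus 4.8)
The plan is to reuse the template of \autoref{prop:fmus} and \autoref{prop:fpit}, since ``a subset entails a formula'' is naturally expressed as an unsatisfiability test. First I would take the reference set to be the clauses of $\fml{J}$, i.e.\ $\fml{R}\triangleq\fml{J}$, and define
\[
  \Pred(\fml{W})\triangleq\neg\SAT(\neg\fml{I}\land\land_{c\in\fml{W}}\,(c)),
\]
for $\fml{W}\subseteq\fml{R}$. The choice is driven by the elementary equivalence that, for any $\fml{W}\subseteq\fml{J}$, $\fml{W}\entails\fml{I}$ holds iff $\fml{W}\land\neg\fml{I}\entails\bot$, i.e.\ iff the oracle argument is unsatisfiable. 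I would also note here that the MSMP instance is well posed: since $\fml{R}=\fml{J}$ and the FMnES hypothesis guarantees $\fml{J}\entails\fml{I}$, the predicate $\Pred(\fml{R})$ holds, so at least one set satisfying $\Pred$ exists.

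For monotonicity I would observe that this predicate is of form \pform, taking $\fml{G}\triangleq\neg\fml{I}$, $u_i\triangleq c$, and $\sigma(c)\triangleq c$; note that $\fml{G}=\neg\fml{I}$ is independent of $\fml{W}$ since $\fml{I}$ is fixed and does not depend on the selected clauses. Monotonicity is then immediate from \autoref{prop:mono}, with no extra argument needed.

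For correctness I would let $\fml{M}$ be a minimal set with $\Pred(\fml{M})$. Then the oracle argument is unsatisfiable, i.e.\ $\fml{M}\land\neg\fml{I}\entails\bot$, equivalently $\fml{M}\entails\fml{I}$; and by \autoref{def:mp}, for every $\fml{M}'\subsetneq\fml{M}$ the predicate fails, i.e.\ $\fml{M}'\land\neg\fml{I}\nentails\bot$, equivalently $\fml{M}'\nentails\fml{I}$. By \autoref{def:lmnes} this is exactly the statement that $\fml{M}$ is an MnES of $\fml{I}$, completing the reduction.

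I expect this reduction to be essentially obstacle-free, with the only delicate points being bookkeeping. Since $\fml{I}$ is an arbitrary Boolean formula, $\neg\fml{I}$ need not be in CNF, but this is absorbed by the paper's implicit clausification convention and leaves the entailment-to-unsatisfiability equivalence intact. One should also keep in mind that the entailments are read over $\var(\fml{M})\cup\var(\fml{I})$, yet the SAT-based formulation treats shared and disjoint variables uniformly. Thus the hardest part is merely confirming the match with form \pform, which is mechanical.
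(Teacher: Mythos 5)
Your proposal is correct and follows exactly the paper's reduction: same reference set $\fml{R}\triangleq\fml{J}$, same predicate $\Pred(\fml{W})\triangleq\neg\SAT(\neg\fml{I}\land\land_{c\in\fml{W}}\,(c))$, and the same appeal to form \pform{} via \autoref{prop:mono}. The paper leaves correctness as ``simple, based on previous proofs,'' and your explicit entailment-to-unsatisfiability argument (plus the well-posedness remark that $\Pred(\fml{R})$ holds since $\fml{J}\entails\fml{I}$) is precisely the intended filling-in of that step.
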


\begin{proof}(Sketch)
\proofitem{Reduction}
The reduction is defined as follows. $\fml{R}\triangleq\fml{J}$ and,  
\begin{equation} \label{eq:mnespred}
  \Pred(\fml{W})\triangleq\neg\SAT(\neg\fml{I}\land\land_{c\in\fml{W}}\,(c))
\end{equation}
with $\fml{W}\subseteq\fml{R}$.
\proofitem{Monotonicity}
The predicate (see \eqref{eq:mnespred}) is of form \pform, with
$\fml{G}\triangleq\neg\fml{I}$, $u_i\triangleq c$, and
$\sigma(c)\triangleq c$. Thus, by \autoref{prop:mono} the predicate is
monotone.
\proofitem{Correctness}  
Simple, based on previous proofs.
\end{proof}

\begin{proposition} \label{prop:fmxes}
  $\tn{FMxES}\maps\tn{MSMP}$.
\end{proposition}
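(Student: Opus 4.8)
The plan is to dualise the reduction already used for $\tn{FMCS}$ (see \autoref{prop:fmcs}), adapting it from consistency to entailment. Since $\fml{N}\in\CNF$ is a conjunction of clauses, a subset $\fml{I}\subseteq\fml{N}$ with $\fml{J}\entails\fml{I}$ is exactly a set of clauses such that $\fml{J}$ entails their conjunction. I would therefore take the reference set to be $\fml{R}\triangleq\fml{N}$ and let the working set $\fml{W}$ represent the clauses \emph{discarded} from $\fml{N}$, so that $\fml{N}\setminus\fml{W}$ is the candidate entailed subset. The predicate then asserts that $\fml{J}$ entails every retained clause:
\[
  \Pred(\fml{W})\triangleq\neg\SAT\!\left(\fml{J}\land\left(\lor_{c\in\fml{R}\setminus\fml{W}}\,(\neg c)\right)\right),
\]
which encodes $\fml{J}\entails\land_{c\in\fml{R}\setminus\fml{W}}(c)$ after pushing the negation through the conjunction by De Morgan. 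Note that $\fml{J}$ is an arbitrary formula, so $\fml{G}\triangleq\fml{J}$ may be non-CNF, but this is harmless since clausification is left implicit.

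For monotonicity I would observe that this predicate matches form \bform with $\fml{G}\triangleq\fml{J}$, $u_i\triangleq c$ and $\sigma(c)\triangleq\neg c$; here each $\neg c$ is a term built from the single clause $c$, as form \bform requires. Monotonicity is then immediate from \autoref{prop:mono}: enlarging $\fml{W}$ only shrinks the disjunction $\lor_{c\in\fml{R}\setminus\fml{W}}(\neg c)$, and dropping disjuncts preserves unsatisfiability. I would also record the boundary behaviour, namely that $\Pred(\fml{R})$ holds (the empty disjunction is $\bot$, so the conjunction is unsatisfiable), which guarantees a minimal set exists, whereas $\Pred(\emptyset)$ fails precisely because of the standing hypothesis $\fml{J}\nentails\fml{N}$.

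For correctness, let $\fml{M}$ be a minimal set for which $\Pred(\fml{M})$ holds, and set $\fml{I}\triangleq\fml{N}\setminus\fml{M}$. Then $\Pred(\fml{M})$ gives $\fml{J}\entails\fml{I}$, and minimality (\autoref{def:mp}) gives that for every $\fml{M}'\subsetneq\fml{M}$ the predicate fails, i.e.\ $\fml{J}\nentails\fml{N}\setminus\fml{M}'$. The key bookkeeping step is translating the proper-subset quantifier over $\fml{M}'$ into the proper-superset quantifier $\fml{N}\supseteq\fml{I}'\supsetneq\fml{I}$ over $\fml{I}'\triangleq\fml{N}\setminus\fml{M}'$, which then matches \autoref{def:mxes} verbatim and shows $\fml{I}$ is an MxES of $\fml{N}$ given $\fml{J}$.

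I do not expect a genuine obstacle here: the reduction is routine once the predicate is set up correctly. The two points that require care, rather than difficulty, are (i) choosing the right predicate form, since entailment of the \emph{retained conjunction} forces the disjunctive form \bform rather than the \lform form used for FMCS, and (ii) getting the set-complement direction right so that the minimal-set answer of MSMP maps onto the maximal entailed subset. Finally, I would note that \autoref{prop:unique} already guarantees this maximal subset is unique and maximum, so the single minimal $\fml{M}$ produced by MSMP recovers the entire answer.
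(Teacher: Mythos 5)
Your proposal is correct and matches the paper's proof essentially verbatim: the same reference set $\fml{R}\triangleq\fml{N}$, the same predicate $\Pred(\fml{W})\triangleq\neg\SAT(\fml{J}\land(\lor_{c\in\fml{R}\setminus\fml{W}}\neg c))$, and the same identification as form \bform with $\fml{G}\triangleq\fml{J}$, $u_i\triangleq c$, $\sigma(c)\triangleq\neg c$ via \autoref{prop:mono}. The only difference is that you spell out the correctness step (the complementation bookkeeping mapping minimal $\fml{M}$ to $\fml{I}=\fml{N}\setminus\fml{M}$ and matching \autoref{def:mxes}), which the paper leaves as ``simple, based on previous proofs''; your added detail is accurate.
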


\begin{proof}(Sketch)
\proofitem{Reduction}
The reduction is defined as follows. $\fml{R}\triangleq\fml{N}$ and,  
\begin{equation} \label{eq:mxespred}
  \Pred(\fml{W})\triangleq\neg\SAT(\fml{J}\land(\lor_{c\in\fml{R}\setminus\fml{W}}\neg c))
\end{equation}
with $\fml{W}\subseteq\fml{R}$.
\proofitem{Monotonicity}
The predicate (see \eqref{eq:mxespred}) is of form \bform, with
$\fml{G}\triangleq\fml{J}$, $u_i\triangleq c$, and
$\sigma(c)\triangleq\neg c$. Thus, by \autoref{prop:mono} the
predicate is monotone.
\proofitem{Correctness}  
Simple, based on previous proofs.
\end{proof}

\subsubsection{Backbone Literals}

\begin{proposition} \label{prop:fbbr}
  $\tn{FBBr}\maps\tn{MSMP}$.
\end{proposition}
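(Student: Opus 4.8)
The plan is to mirror the reductions already used for the other maximal-set problems (e.g.\ FMxES in \autoref{prop:fmxes}), taking $\fml{R}\triangleq\fml{V}$ and searching for a minimal $\fml{W}\subseteq\fml{R}$ whose complement $\fml{R}\setminus\fml{W}$ recovers the backbone. Since a literal $l$ is a backbone literal precisely when $\fml{F}\entails l$, equivalently when $\fml{F}\land\neg l$ is unsatisfiable, I would define
\begin{equation*}
  \Pred(\fml{W})\triangleq\neg\SAT(\fml{F}\land(\lor_{l\in\fml{R}\setminus\fml{W}}\,\neg l)),
\end{equation*}
with $\fml{W}\subseteq\fml{R}$. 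This is a predicate of form \bform with $\fml{G}\triangleq\fml{F}$, $u_i\triangleq l$ and $\sigma(l)\triangleq\neg l$, so monotonicity follows immediately from \autoref{prop:mono}.

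For correctness I would first record the identity that $\fml{F}\entails\land_{l}(l)$ holds iff $\fml{F}\entails l$ for each conjunct. Hence $\Pred(\fml{W})$ holds exactly when every literal of $\fml{R}\setminus\fml{W}$ is a backbone literal, i.e.\ when $\fml{R}\setminus\fml{W}\subseteq\fml{B}$, where $\fml{B}$ denotes the backbone. Consequently, by \autoref{def:mp} a set $\fml{W}$ is minimal for $\Pred$ iff $\fml{R}\setminus\fml{W}$ is a maximal subset of $\fml{R}$ consisting solely of backbone literals, and the reduction returns this complement as the answer to \autoref{def:bbr}.

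The main obstacle is arguing that this maximal complement is the \emph{entire} backbone, rather than merely some maximal set of backbone literals contained in $\fml{V}$. Here I would use that the reference assignment $\nu$ is itself a model of $\fml{F}$: every backbone literal, being true in all models, is true in $\nu$ and hence lies in $\fml{V}=\fml{R}$, so the choice $\fml{R}=\fml{V}$ never excludes a backbone literal. Combining this with the uniqueness of the maximal set guaranteed by \autoref{prop:unique} forces $\fml{R}\setminus\fml{W}=\fml{B}$ for the minimal $\fml{W}$, completing the reduction. A minor check worth stating is that $\Pred(\fml{R})$ holds vacuously, since the empty disjunction is $\bot$; thus the resulting MSMP instance is always feasible.
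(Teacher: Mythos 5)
Your proposal is correct and takes essentially the same route as the paper's proof: the identical reduction $\fml{R}\triangleq\fml{V}$ with the form \bform predicate $\Pred(\fml{W})\triangleq\neg\SAT(\fml{F}\land(\lor_{l\in\fml{R}\setminus\fml{W}}\,\neg l))$, monotonicity via \autoref{prop:mono}, and correctness by reading minimality of $\fml{W}$ as maximality of the backbone-literal set $\fml{R}\setminus\fml{W}$ together with the uniqueness guaranteed by \autoref{prop:unique}. Your two added observations---that every backbone literal lies in $\fml{V}$ because the reference assignment $\nu$ is itself a model, and that $\Pred(\fml{R})$ holds since the empty disjunction is $\bot$---are sound details that the paper leaves implicit.
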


\begin{proof}
  \proofitem{Reduction}
  Consider a set of literals $\fml{V}$, obtained from an initial
  satisfying assignment $\nu$. The reduction is defined as follows.
  $\fml{R}\triangleq\fml{V}$ and,
  \begin{equation} \label{eq:bbrpred}
    \Pred(\fml{W})\triangleq\neg\SAT(\fml{F}\land(\lor_{l\in\fml{R}\setminus\fml{W}}\neg l))
  \end{equation}
  with $\fml{W}\subseteq\fml{R}$.
  \proofitem{Monotonicity}
  The predicate (see \eqref{eq:bbrpred}) is of form \bform, with
  $\fml{G}\triangleq\fml{F}$, $u_i\triangleq l$, and
  $\sigma(l)\triangleq\neg l$. Thus, by \autoref{prop:mono} the
  predicate is monotone.
  \proofitem{Correctness}
  Let $\fml{T}$ be a minimal set such that $\Pred(\fml{T})$ holds.
  By \autoref{def:mp}, since $\fml{T}$ is minimal for $\Pred$, then
  any proper superset $\fml{T}'$ of $\fml{T}$ the predicate does not
  hold, i.e.\ $\fml{F}\nentails\fml{V}\setminus\fml{T}'$.
  Hence, $\fml{F}\entails\fml{V}\setminus\fml{T}$, and for any proper
  superset $\fml{T}'$ of $\fml{T}$,
  $\fml{F}\nentails\fml{V}\setminus\fml{T}'$.
  Moreover, by \autoref{prop:unique}, $\fml{V}\setminus\fml{T}$ is
  maximal and unique.
  Thus, by \autoref{def:bbr}, $\fml{V}\setminus\fml{T}$ is the set of
  backbone literals of $\fml{F}$. 
  %
\end{proof}

\begin{observation}
  Regarding the reduction in the proof \autoref{prop:fbbr}, the
  computed minimal set $\fml{T}$ is the \emph{complement} of the set
  of backbone literals, which is given by $\fml{V}\setminus\fml{T}$.
\end{observation}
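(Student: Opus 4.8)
The plan is to reduce FBBr to MSMP by taking the reference set to be exactly the literals of the supplied model, and by designing a predicate whose minimal satisfying set is the \emph{complement} of the backbone rather than the backbone itself. Concretely, I would set $\fml{R}\triangleq\fml{V}$, where $\fml{V}$ is the set of literals associated with the reference model $\nu$. Since every backbone literal is true in every model, it is in particular true in $\nu$, hence contained in $\fml{V}$; this justifies searching only within $\fml{V}$ and matches \autoref{def:bbr}. The remaining literals of $\fml{V}$ — those that can be flipped while keeping $\fml{F}$ satisfiable — are precisely the non-backbone literals, and these are what the minimal set will capture.

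For the predicate I would use
\begin{equation*}
  \Pred(\fml{W})\triangleq\neg\SAT(\fml{F}\land(\lor_{l\in\fml{R}\setminus\fml{W}}\neg l)),
\end{equation*}
with $\fml{W}\subseteq\fml{R}$. Intuitively, $\Pred(\fml{W})$ asserts that one cannot simultaneously satisfy $\fml{F}$ and falsify any literal outside $\fml{W}$, i.e.\ every literal in $\fml{V}\setminus\fml{W}$ is entailed by $\fml{F}$. Monotonicity is then immediate, since this predicate has the shape of \bform with $\fml{G}\triangleq\fml{F}$, $u_i\triangleq l$, and $\sigma(l)\triangleq\neg l$, so by \autoref{prop:mono} it is monotone. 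Growing $\fml{W}$ shrinks the disjunction $\lor_{l\in\fml{R}\setminus\fml{W}}\neg l$, which can only preserve unsatisfiability, confirming the form-\bform argument.

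For correctness I would let $\fml{T}$ be a minimal set with $\Pred(\fml{T})$ and unwind the definitions. Holding of $\Pred(\fml{T})$ says $\fml{F}\land(\lor_{l\in\fml{V}\setminus\fml{T}}\neg l)$ is unsatisfiable, which is equivalent to $\fml{F}\entails\land_{l\in\fml{V}\setminus\fml{T}}(l)$, so every literal in $\fml{V}\setminus\fml{T}$ is a backbone literal. Minimality of $\fml{T}$ under \autoref{def:mp} means that for every proper subset the predicate fails, so $\fml{V}\setminus\fml{T}$ cannot be enlarged while remaining entailed; hence $\fml{V}\setminus\fml{T}$ is a maximal entailed subset of $\fml{V}$. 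Finally I would invoke \autoref{prop:unique}, which guarantees that this maximal set is unique and maximum, to conclude that $\fml{V}\setminus\fml{T}$ is exactly the backbone, so by \autoref{def:bbr} the reduction is correct.

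The main obstacle is the semantic bookkeeping around the complementation: I must argue cleanly that minimality of $\fml{T}$ in the MSMP instance corresponds to maximality of the backbone $\fml{V}\setminus\fml{T}$, and that no backbone literal is lost by restricting attention to $\fml{V}$. The former is the standard ``minimal set equals complement of a maximal set'' pattern, but it relies on stating the equivalence between $\neg\SAT(\fml{F}\land(\lor_{l}\neg l))$ and $\fml{F}\entails\land_{l}(l)$ carefully enough that the reader sees each excluded literal is individually entailed. Polynomial-relatedness of the instances is routine, since $\fml{R}=\fml{V}$ has size $|X|$ and the predicate amounts to a single SAT call on $\fml{F}$ augmented with one clause.
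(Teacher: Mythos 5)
Your proposal is correct and follows essentially the same route as the paper: the same reference set $\fml{R}\triangleq\fml{V}$, the same form-\bform predicate $\neg\SAT(\fml{F}\land(\lor_{l\in\fml{R}\setminus\fml{W}}\neg l))$ with monotonicity via \autoref{prop:mono}, and the same correctness unwinding in which minimality of $\fml{T}$ yields maximality of the entailed set $\fml{V}\setminus\fml{T}$, with \autoref{prop:unique} guaranteeing uniqueness, so $\fml{V}\setminus\fml{T}$ is exactly the backbone and $\fml{T}$ is its complement. If anything, your statement of minimality (predicate fails on every proper \emph{subset} of $\fml{T}$) is phrased more carefully than the paper's own wording.
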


In practice, and for efficiency reasons, algorithms for computing the 
backbone of a Boolean formula start from a reference satisfying
assignment (i.e.\ the FBBr function problem)~\cite{jlms-aicomm13}.
However, the reduction of the general backbone computation problem to
MSMP yields interesting insights into the worst-case number of SAT
oracle queries needed to compute the backbone of a Boolean formula.

\begin{proposition} \label{prop:fbb}
  $\tn{FBB}\maps\tn{MSMP}$.
\end{proposition}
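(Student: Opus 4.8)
The plan is to reduce FBB to the already-established FBBr reduction of \autoref{prop:fbbr}, after a single satisfiability-witness query, mirroring the compositional style of the proof of \autoref{prop:fmxm}. First I would dispose of the well-definedness issue: the backbone is unique (maximal $=$ maximum) exactly when $\fml{F}$ is satisfiable, as recorded in \autoref{prop:unique}, so I assume $\fml{F}\nentails\bot$. Satisfiability is decided by one $\SAT(\fml{F})$ call, which by the witness-oracle convention of \autoref{sec:prelim} simultaneously returns a satisfying assignment $\nu$; if the call reports $\fml{F}\entails\bot$ the instance is degenerate and no minimal-set computation is needed.

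The key observation I would establish is that the backbone of a satisfiable formula is always contained in the literal set of any one of its models. Let $X\triangleq\var(\fml{F})$ and let $\fml{V}\subseteq\LITS(X)$ be the set of literals associated with the witness $\nu$ (as in \autoref{def:bbr}). If $l$ is a backbone literal, i.e.\ $l$ is true in every model of $\fml{F}$, then $l$ is in particular true under $\nu$, hence $l\in\fml{V}$. More strongly, \emph{every} literal $l$ with $\fml{F}\entails l$ lies in $\fml{V}$, so the unique backbone $\fml{B}$ of \autoref{def:bb} satisfies $\fml{B}\subseteq\fml{V}$.

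Next I would argue that this containment makes the FBB backbone coincide with the FBBr backbone for the witness $\fml{V}$. Both are maximal sets of literals $\fml{B}$ with $\fml{F}\entails\land_{l\in\fml{B}}(l)$; the only difference is that FBBr restricts the candidate literals to $\fml{V}$, whereas FBB ranges over all of $\LITS(X)$. Since any literal entailed by $\fml{F}$ must already lie in $\fml{V}$, restricting to $\fml{V}$ discards no candidate. Hence the maximal entailed subset of $\LITS(X)$ equals the maximal entailed subset of $\fml{V}$, and the FBB answer is exactly the set produced by the FBBr reduction. Invoking \autoref{prop:fbbr} on the instance with $\fml{R}\triangleq\fml{V}$ and $\Pred(\fml{W})\triangleq\neg\SAT(\fml{F}\land(\lor_{l\in\fml{R}\setminus\fml{W}}\neg l))$ yields the minimal set $\fml{T}$ whose complement $\fml{V}\setminus\fml{T}$ is the backbone; returning $\fml{V}\setminus\fml{T}$ solves FBB, and the construction is polynomially related since it prepends a single witness query to the FBBr reduction.

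The main obstacle I anticipate is conceptual rather than computational: one must be confident that a single satisfying assignment already bounds the backbone from above, so that no backbone literal can be lost by working inside $\fml{V}$, and that reusing the preliminary SAT-witness step is compatible with the definition of $\maps$ used throughout the paper. Both points are settled by the entailment argument of the second paragraph together with the explicit precedent of reduction composition in \autoref{prop:fmxm}.
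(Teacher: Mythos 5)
Your proposal is correct, but it takes a genuinely different route from the paper's. You reduce FBB to FBBr: a single witness query $(\outc,\nu)=\SAT(\fml{F})$ supplies a model whose literal set $\fml{V}$ must contain every backbone literal (any literal entailed by $\fml{F}$ is true under $\nu$), after which \autoref{prop:fbbr} applied to $(\fml{F},\fml{V})$ finishes the job, the unsatisfiable case being filtered out by that same call. The paper instead gives a direct, purely syntactic reduction that consults no oracle: it sets $\fml{R}\triangleq X=\var(\fml{F})$, builds a doubled formula $\fml{F}^{\tn{BB}}\triangleq\fml{F}[X/X]\land\fml{F}[X/X']$ over a disjoint copy $X'$ of the variables, and uses the form-\bform predicate $\Pred(\fml{W})\triangleq\neg\SAT(\fml{F}^{\tn{BB}}\land(\lor_{x\in\fml{R}\setminus\fml{W}}\,x\land\neg x'))$; a variable $x$ can be left outside $\fml{W}$ precisely when no two models of $\fml{F}$ disagree on $x$, so the complement of the minimal set is the set of backbone \emph{variables} (polarities being recoverable from any witness). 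The trade-offs are real on both sides. Your reduction is simpler (one copy of $\fml{F}$, reference set of size $|X|$, and it returns backbone literals with their polarities directly), and it formalizes exactly what practical algorithms do, as the paper itself remarks just before \autoref{prop:fbb}. But it is a Turing-style reduction: the MSMP instance cannot even be written down without first solving an NP-hard problem, whereas the paper's instance is computable in polynomial time from $\fml{F}$ alone. This distinction matters for the paper's stated motivation for treating FBB separately from FBBr, namely deriving worst-case SAT-query bounds for the \emph{general} backbone problem; under the paper's informal definition of $\maps$ (polynomially related instances) your composition is defensible --- it adds exactly one query --- but you should state explicitly that the instance construction is oracle-assisted rather than polynomial-time, since that is the one point where your reduction and the paper's notion of reduction could be read as diverging.
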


\begin{proof}(Sketch)
  \proofitem{Reduction}
  The reduction is defined as follows. $\fml{R}\triangleq X$, with
  $X=\var(\fml{F})$. Consider one auxiliary set of variables $X'$,
  with $|X'|=|X|$, and let:
  \begin{equation} \label{eq:bbpred}
    \fml{F}^{\tn{BB}}\triangleq\fml{F}[X/X]\land\fml{F}[X/X']
  \end{equation}
  Finally, let:
  \begin{equation}
    \Pred(\fml{W})\triangleq\neg\SAT(\fml{F}^{\tn{BB}}\land(\lor_{x\in\fml{R}\setminus\fml{W}}\,x\land\neg x'))
  \end{equation}
  \proofitem{Monotonicity}
  The predicate (see \eqref{eq:bbpred}) is of form \bform, with
  $\fml{G}\triangleq\fml{F}^{\tn{BB}}$, $u_i\triangleq x$, and
  $\sigma(x)\triangleq x\land x'$, where $x'$ is a new variable not in
  $\fml{R}$, but associated with $x$. Thus, by \autoref{prop:mono} the
  predicate is monotone.
  \proofitem{Correctness}  
  The predicate holds for $\fml{W}=\fml{R}$. Similarly to the FBBr
  case, $(\lor_{x\in\fml{R}\setminus\fml{W}}\,x\land\neg x')$ yields
  an empty clause.
  As for the FBBr case, the literals to be removed from $\fml{R}$ are
  the ones that are backbone literals since adding literals
  $(\lor_{l\in\fml{R}\setminus\fml{W}}\neg l)$ can only be done while
  keeping the formula unsatisfiable.
\end{proof}

\subsubsection{Variable Independence} 

\begin{proposition} \label{prop:fvindp}
  $\tn{FVInd}\maps\tn{MSMP}$ (form \pform).
\end{proposition}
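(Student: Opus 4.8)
The plan is to take $\fml{R}\triangleq X=\var(\fml{F})$ and to let MSMP return the \emph{complement} of the maximal set of independent variables. Following the two-copy idea used for \tn{FBB} (\autoref{prop:fbb}), I would introduce a fresh copy $X'=\{x'\mid x\in X\}$ and set the formula-independent part to $\fml{G}\triangleq\fml{F}[X/X]\land\neg\fml{F}[X/X']$ together with $\sigma(x)\triangleq(x\leftrightarrow x')$, giving
\[
  \Pred(\fml{W})\triangleq\neg\SAT\Bigl(\fml{G}\land\land_{x\in\fml{W}}(x\leftrightarrow x')\Bigr),\qquad \fml{W}\subseteq\fml{R}.
\]
This is literally of form \pform with $u_i\triangleq x$ and with $x$ the only $\fml{R}$-element occurring in $\sigma(x)$, so monotonicity follows directly from \autoref{prop:mono}; intuitively, enlarging $\fml{W}$ only adds agreement constraints $x\leftrightarrow x'$ to an already unsatisfiable formula.

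For correctness I would show that $\Pred(\fml{W})$ holds exactly when $\fml{F}$ is independent from $X\setminus\fml{W}$. Unfolding the definition, $\neg\SAT(\cdots)$ says there is no pair of assignments to $X$ and $X'$ that agree on every variable of $\fml{W}$ yet make $\fml{F}$ true on the first copy and false on the second; by symmetry of the agreement constraint this is equivalent to saying the truth value of $\fml{F}$ is a function of the variables in $\fml{W}$ alone, i.e.\ $\fml{F}$ does not depend on any variable outside $\fml{W}$. Invoking the earlier characterization $\fml{F}\equiv\fml{F}[x_i/y_i]$ of independence from a single variable then identifies $X\setminus\fml{W}$ as a set of independent variables. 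Consequently, a subset-minimal $\fml{M}$ with $\Pred(\fml{M})$ is the complement of a subset-maximal set of independent variables, and the reduction outputs $X\setminus\fml{M}$; since the family of independent sets is downward closed, this maximal set is in fact unique, so any minimal $\fml{M}$ returned by MSMP yields the same answer.

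The step I expect to be the main obstacle is the passage from the \emph{collective} condition captured by $\Pred$---that no two assignments agreeing on $\fml{W}$ disagree on $\fml{F}$---to the \emph{per-variable} notion of \autoref{def:vind}, i.e.\ that independence from a set coincides with independence from each of its members. I would settle this by viewing the models of $\fml{F}$ as a subset of $\{0,1\}^X$: independence from $x$ means this set is invariant under flipping the $x$-coordinate, and such coordinatewise invariances compose, so invariance under flipping each coordinate of $X\setminus\fml{W}$ is equivalent to membership being determined by the coordinates in $\fml{W}$. Everything else---monotonicity and the polynomial relationship of the instances (the predicate instance has size linear in $|\fml{F}|+|X|$)---is routine given \autoref{prop:mono}.
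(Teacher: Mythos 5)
Your proposal is correct and takes essentially the same route as the paper's own (sketched) proof: both reductions set $\fml{R}\triangleq X$, build a second copy of $\fml{F}$ over fresh variables, use a form-\pform{} predicate with $\sigma(x_i)\triangleq(x_i\leftrightarrow y_i)$ enforcing agreement on the variables kept in $\fml{W}$, and recover the maximal independent set as the complement $X\setminus\fml{M}$ of the computed minimal set. The only cosmetic difference is that the paper's $\fml{F}^{\tn{VInd}}$ encodes non-equivalence in both directions (the exclusive-or of the two copies) whereas you keep only $\fml{F}[X/X]\land\neg\fml{F}[X/X']$, which is equivalent here by the symmetry of the agreement constraints; your explicit argument that per-variable independence composes into collective determination by $\fml{W}$ actually fills in the step the paper's proof sketch leaves implicit.
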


\begin{proof}(Sketch)
  \proofitem{Reduction}
  Consider the original set of variables $X$ and another set of
  variables $Y$, such that $|Y| = |X|$.
  $\fml{F}$ is to be checked for equivalence against $\fml{F}[X/Y]$,
  i.e.\ a copy of itself using new variables. An additional constraint
  is that some of these variables are equivalent. The non-equivalence
  between the two formulas is captured as follows:
  \begin{equation}
    \fml{F}^{\tn{VInd}}\triangleq(\fml{F}[X/Y]\land\neg\fml{F}[X/X]\lor\neg\fml{F}[X/Y]\land\fml{F}[X/X])
  \end{equation}
  Given $\fml{F}^{\tn{VInd}}$, the reduction is defined as follows.
  $\fml{R}\triangleq X$ and,
  \begin{equation} \label{eq:vindpred}
    \Pred(\fml{W})\triangleq\neg\SAT(\fml{F}^{\tn{VInd}}\land\land_{x_i\in\fml{W}}\,(x_i\leftrightarrow y_i))
  \end{equation}
  \proofitem{Monotonicity}
  The predicate (see \eqref{eq:vindpred}) is of form \pform, with
  $\fml{G}\triangleq\fml{F}^{\tn{VInd}}$, $u_i\triangleq x_i$, and
  $\sigma(x_i)\triangleq x_i\leftrightarrow y_i$, where $y_i$ is a new
  variable not in $\fml{R}$, but associated with $x_i$. Thus, by
  \autoref{prop:mono} the predicate is monotone.
  \proofitem{Correctness}  
  For $\fml{W}=\fml{R}$ each $x_i$ variable is equivalent to
  its corresponding $y_i$ variable.
  A variable $x_i$ is removed from $\fml{W}$ if, by taking any
  possible value, it does not affect the equivalence of $\fml{F}$
  with $\fml{F}$ defined on $Y$ variables.
  In such a case, $\fml{F}$ is independent from $x_i$
\end{proof}

\subsubsection{Maximum Autarkies}

Autarkies can be captured with two different monotone predicate
forms. The first predicate is of form \lform.

\begin{proposition} \label{prop:fautl}
  $\tn{FAut}\maps\tn{MSMP}$ (form \lform).
\end{proposition}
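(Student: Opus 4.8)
The plan is to reduce FAut to MSMP by taking $\fml{R}\triangleq X$ with $X=\var(\fml{F})$, and by building a single SAT instance whose satisfying assignments encode an autarky containing the ``active'' variables $\fml{R}\setminus\fml{W}$. The minimal set returned by MSMP will then be the complement of the unique maximal autarky, exactly as in the FBB/FBBr reductions, so that the autarky is recovered as $\fml{R}\setminus\fml{M}$.

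First I would introduce, for each variable $x_i\in X$, a fresh activation variable $a_i$ (associated with $x_i$ but not in $\fml{R}$), meaning ``$x_i$ belongs to the candidate autarky'', while keeping each original $x_i$ as the candidate autarky value. The fixed part $\fml{G}$, independent of $\fml{W}$, would encode the autarky condition clause by clause: whenever a clause $c$ is touched by an active variable, $c$ must be satisfied by an active literal. Writing $\var(l)$ for the variable of a literal $l$ and $\mathrm{val}(l)$ for $x_j$ when $l=x_j$ and for $\neg x_j$ when $l=\neg x_j$, this is
\[
  \fml{G}\triangleq\bigwedge_{c\in\fml{F}}\left[\left(\bigvee_{l\in c}a_{\var(l)}\right)\rightarrow\left(\bigvee_{l\in c}\bigl(a_{\var(l)}\land\mathrm{val}(l)\bigr)\right)\right].
\]
The per-element constraint would simply force activation, $\sigma(x_i)\triangleq a_i$, giving $\Pred(\fml{W})\triangleq\SAT\!\left(\fml{G}\land\bigwedge_{x_i\in\fml{R}\setminus\fml{W}}a_i\right)$, which is of form \lform with $u_i\triangleq x_i$; monotonicity is then immediate from \autoref{prop:mono}.

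The key step is to show that $\Pred(\fml{W})$ holds iff there exists an autarky $\fml{A}$ with $\fml{R}\setminus\fml{W}\subseteq\fml{A}$. In the forward direction, a model forces $a_i$ true for every $x_i\in\fml{R}\setminus\fml{W}$, and the active set $\fml{A}\triangleq\{x_i\,|\,a_i\text{ is true}\}$, together with the values read off the $x_i$, satisfies every touched clause by construction of $\fml{G}$; hence $\fml{A}$ is an autarky containing $\fml{R}\setminus\fml{W}$. Conversely, any autarky $\fml{A}\supseteq\fml{R}\setminus\fml{W}$ yields a model by setting $a_i$ true exactly on $\fml{A}$ and assigning each $x_i$ according to the autarky assignment. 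I would then invoke the argument in \autoref{prop:unique}: since the union of two autarkies is an autarky, the maximal autarky $\fml{A}^{\ast}$ is unique and contains every autarky, so ``$\fml{R}\setminus\fml{W}$ lies in some autarky'' is equivalent to $\fml{R}\setminus\fml{W}\subseteq\fml{A}^{\ast}$, i.e.\ $\fml{W}\supseteq X\setminus\fml{A}^{\ast}$. Thus the sets satisfying $\Pred$ are exactly the supersets of $X\setminus\fml{A}^{\ast}$, whose unique subset-minimal element is $\fml{M}^{\ast}=X\setminus\fml{A}^{\ast}$; by \autoref{def:mp} this is the set MSMP returns, and the maximal autarky is $\fml{A}^{\ast}=X\setminus\fml{M}^{\ast}$.

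The main obstacle is a monotonicity mismatch: the property ``$\fml{R}\setminus\fml{W}$ is itself an autarky'' is \emph{not} monotone, since a subset of an autarky need not be an autarky, so a naive encoding fails. The device that resolves this is leaving the activation variables $a_i$ free for $x_i\in\fml{W}$, which turns the predicate into the monotone statement ``there exists an autarky containing $\fml{R}\setminus\fml{W}$''; correctness of the complement then rests entirely on the closure-under-union property of autarkies from \autoref{prop:unique}. A minor point worth checking is that $\Pred(\fml{R})=\SAT(\fml{G})$ holds (set all $a_i$ false, making every implication vacuous), which matches the fact that the empty autarky always exists, so the reduction is well-defined even though \autoref{def:aut} restricts attention to $\fml{F}\entails\bot$.
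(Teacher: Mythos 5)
Your proposal is correct and takes essentially the same route as the paper: the paper's reduction is also of form \lform{} with selector variables forced true for the elements outside $\fml{W}$ (its $x^{+}$ playing the role of your $a_i$, with $\fml{R}\triangleq X^{+}$), and its CNF encoding $\fml{F}^{\tn{Aut}}$ --- built with auxiliary variables $x^{0},x^{1}$ following the Liffiton--Sakallah model --- is logically equivalent to your formula $\fml{G}$ stating that every touched clause is satisfied by an active literal. Your correctness argument (satisfiability of the predicate iff some autarky contains $\fml{R}\setminus\fml{W}$, then closure of autarkies under union to identify the unique minimal set as the complement of the maximum autarky) is a more detailed version of what the paper only sketches.
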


\begin{proof}(Sketch)
  \proofitem{Reduction}
  The reduction uses a simplified version of the model proposed
  in~\cite{liffiton-sat08}.
  Consider a CNF formula $\fml{F}$ with a set of variables
  $X=\var(\fml{F})$. Create new sets of variables $X^{+}$, $X^{0}$,
  $X^{1}$, such that for $x\in X$, $x^{+}$ indicates whether a
  variable is selected, and $x^{1}$ and $x^{0}$ replace, respectively,
  the literals $x$ and $\neg x$ in the clauses of $\fml{F}$. Thus,
  $\fml{F}$ is transformed into a new formula $\fml{F}^{0,1}$, where
  each literal in $x$ is translated either into $x^{1}$ or $x^{0}$. 
  The resulting CNF formula, $\fml{F}^{\tn{Aut}}$, consists of
  CNF-encoding the following sets of constraints. For each $x\in X$,
  add to $\fml{F}^{\tn{Aut}}$ (the clauses resulting from encoding)
  $x^1 \leftrightarrow x^{+}\land x$ and $x^0 \leftrightarrow
  x^{+}\land\neg x$. If a clause $c^{0,1}\in\fml{F}^{0,1}$ has a
  literal in $x$, then add to $\fml{F}^{\tn{Aut}}$ the clause
  $(x^{+}\to c^{0,1})$.
  Now, $\fml{R}\triangleq X^{+}$ and,
  \begin{equation} \label{eq:autpredl}
    \Pred(\fml{W})\triangleq\SAT(\fml{F}^{\tn{Aut}}\land\land_{x^{+}\in\fml{R}\setminus\fml{W}}\,(x^{+}))
  \end{equation}
  with $\fml{W}\subseteq\fml{R}$.
  \proofitem{Monotonicity}
  The predicate (see \eqref{eq:autpredl}) is of form \lform, with
  $\fml{G}\triangleq\fml{F}^{\tn{Aut}}$, $u_i\triangleq x^{+}$, and
  $\sigma(x^{+}) = x^{+}$. Thus, by \autoref{prop:mono} the predicate
  is monotone.
  \proofitem{Correctness}  
  For $\fml{W}=\fml{R}$, predicate holds, since the argument
  of the SAT oracle call consists of $\fml{F}^{\tn{Aut}}$.
  The elements that are to be dropped from $\fml{R}$ are the
  variables for which there exists a truth assignment that
  identifies the maximum autarky, i.e.\ the autark variables.
  The minimal set is $\fml{W}$ = $\fml{R}\setminus\fml{K}$, where
  $\fml{K}$ is the set of autark variables.
  The monotonicity properties in this case can be used to
  prove that the set of autark variables is maximum.
\end{proof}

\begin{observation}
  Regarding the reduction in the proof \autoref{prop:fautl}, the
  computed minimal set $\fml{T}$ is the \emph{complement} of the set
  of autark variables, which is given by
  $\var(\fml{F})\setminus\fml{T}$.
\end{observation}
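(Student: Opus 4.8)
The plan is to take the encoding $\fml{F}^{\tn{Aut}}$ described above as given, dispatch monotonicity immediately, and then spend the real effort on a semantic bridge between satisfying assignments of $\fml{F}^{\tn{Aut}}$ and autarkies of $\fml{F}$. Monotonicity requires no work beyond pattern-matching: the predicate in~\eqref{eq:autpredl} is literally of form \lform\ with $\fml{G}\triangleq\fml{F}^{\tn{Aut}}$, each $u_i\triangleq x^{+}$, and $\sigma(x^{+})\triangleq x^{+}$, so \autoref{prop:mono} applies and reduces the whole proposition to a correctness argument.

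For correctness I would first establish the bridge. Given any model $\mu$ of $\fml{F}^{\tn{Aut}}$, I would show that $\fml{A}_\mu\triangleq\{x\in X \mid \mu(x^{+})=1\}$ is an autarky of $\fml{F}$, witnessed by the assignment that fixes each selected $x$ to the polarity recorded by whichever of $\mu(x^{1}),\mu(x^{0})$ holds; conversely, any autarky $\fml{A}$ with its witnessing assignment extends to a model of $\fml{F}^{\tn{Aut}}$ in which exactly the $x^{+}$ for $x\in\fml{A}$ are true. The two facts to verify are that (i) the biconditionals $x^{1}\leftrightarrow x^{+}\land x$ and $x^{0}\leftrightarrow x^{+}\land\neg x$ force both $x^{0},x^{1}$ to be false for unselected variables while pinning the chosen polarity for selected ones, and (ii) each implication $x^{+}\to c^{0,1}$ guarantees that every clause $c$ touched by a selected variable is already satisfied by some selected literal. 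The hard part will be exactly step (ii): one must confirm that the translated clause $c^{0,1}$, in which unselected variables contribute only falsified literals, faithfully encodes ``satisfied by the autark assignment,'' and that no clause touched by a selected variable escapes some $x^{+}\to c^{0,1}$ constraint, so that the autarky condition from \autoref{def:aut} is captured precisely and not merely soundly.

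Given this bridge, the predicate characterization follows. Forcing $x^{+}$ true for every $x^{+}\in\fml{R}\setminus\fml{W}$ means that $\Pred(\fml{W})$ holds iff some autarky contains all of $\{x\mid x^{+}\in\fml{R}\setminus\fml{W}\}$. Invoking \autoref{prop:unique}, autarkies admit a unique maximum $\fml{K}$, and since the union of two autarkies is again an autarky (the property driving that proof), every autarky is contained in $\fml{K}$. Hence $\Pred(\fml{W})$ holds iff $\fml{R}\setminus\fml{W}\subseteq\fml{K}$, i.e.\ iff $\fml{W}\supseteq\fml{R}\setminus\fml{K}$, so the unique subset-minimal working set satisfying $\Pred$ is $\fml{M}=\fml{R}\setminus\fml{K}$. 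By \autoref{def:mp} this is the set returned by MSMP, and reading off the complement $\fml{R}\setminus\fml{M}=\fml{K}$ recovers the maximum autarky, establishing $\tn{FAut}\maps\tn{MSMP}$ via a predicate of form \lform.
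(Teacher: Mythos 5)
Your proposal is correct and follows essentially the same route as the paper: the same encoding $\fml{F}^{\tn{Aut}}$, the same pattern match to form \lform for monotonicity, and the same identification of the computed minimal set as $\fml{R}\setminus\fml{K}$ (so that the autark variables are recovered as its complement) via the union-of-autarkies argument behind \autoref{prop:unique}. The paper's own treatment is only a sketch---it asserts, without proof, the correspondence between models of $\fml{F}^{\tn{Aut}}$ and autarkies and the claim that the minimal set is $\fml{R}\setminus\fml{K}$---so your explicit bridge between satisfying assignments and autarkies, together with the characterization $\Pred(\fml{W})\Leftrightarrow\fml{W}\supseteq\fml{R}\setminus\fml{K}$, supplies exactly the details the paper leaves implicit.
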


An alternative predicate for FAut is of form \bform, as shown next.

\begin{proposition} \label{prop:fautb}
  $\tn{FAut}\maps\tn{MSMP}$ (form \bform).
\end{proposition}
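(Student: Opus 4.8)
The plan is to follow the same template used throughout this section: specify the reduction (reference set $\fml{R}$ and monotone predicate $\Pred$), invoke \autoref{prop:mono} for monotonicity by matching the predicate against one of the general forms, and then argue correctness. Since this is the \bform-based alternative for FAut, I expect the reference set to again be $\fml{R}\triangleq X^{+}$ over the autark-selector variables, but now with the predicate expressed as a single disjunctive (clause-style) constraint rather than a conjunction of unit constraints. Concretely, I would reuse the encoding $\fml{F}^{\tn{Aut}}$ from \autoref{prop:fautl} (the constraints $x^1\leftrightarrow x^{+}\land x$, $x^0\leftrightarrow x^{+}\land\neg x$, and $x^{+}\to c^{0,1}$), and define
\begin{equation*}
  \Pred(\fml{W})\triangleq\neg\SAT(\fml{F}^{\tn{Aut}}\land(\lor_{x^{+}\in\fml{R}\setminus\fml{W}}\,(\neg x^{+})))
\end{equation*}
so that the disjunction forces at least one selector in $\fml{R}\setminus\fml{W}$ to be false, i.e.\ at least one variable to be non-autark.

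For monotonicity I would observe that this predicate is of \bform, with $\fml{G}\triangleq\fml{F}^{\tn{Aut}}$, $u_i\triangleq x^{+}$, and $\sigma(x^{+})\triangleq\neg x^{+}$; \autoref{prop:mono} then gives monotonicity immediately, exactly as in the parallel cases (FLEIc, FMxES, FBBr, FBB). The only subtlety to flag is that the \bform predicate negates a disjunction over $\fml{R}\setminus\fml{W}$, so shrinking that set (growing $\fml{W}$) weakens the clause and preserves unsatisfiability, which is precisely the monotonicity direction established in part (3) of the proof of \autoref{prop:mono}.

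For correctness I would argue that when $\fml{W}=\fml{R}$ the clause $(\lor_{x^{+}\in\fml{R}\setminus\fml{W}}(\neg x^{+}))$ is empty, so the predicate reduces to $\neg\SAT(\fml{F}^{\tn{Aut}})$, which holds because $\fml{F}^{\tn{Aut}}$ alone is unsatisfiable in the trivial-assignment sense used in the \bform cases; thus $\Pred(\fml{R})$ holds. Then, just as for FBB and FBBr, a minimal $\fml{W}$ for which $\Pred$ holds corresponds to removing exactly those selectors whose variables \emph{can} be made autark: a variable $x^{+}$ stays out of the minimal set precisely when forcing $\neg x^{+}$ (via the clause) still yields unsatisfiability, meaning $x^{+}$ can be set true consistently with $\fml{F}^{\tn{Aut}}$. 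I would then appeal to \autoref{prop:unique}, which guarantees the maximal autarky is unique and maximum, so the complement $\var(\fml{F})\setminus\fml{T}$ of the computed minimal set $\fml{T}$ is the maximum autarky.

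The main obstacle I anticipate is getting the semantics of the empty/degenerate clause and the ``trivial unsatisfiability'' convention right so that $\Pred(\fml{R})$ genuinely holds — the \bform proof in \autoref{prop:mono} leans on reading the empty disjunction as falsity (hence $\fml{G}\land\bot$ unsatisfiable), and I would want to state this consistently with how FBB handles its empty clause. A secondary subtlety is confirming that the encoding $\fml{F}^{\tn{Aut}}$ correctly couples each selector $x^{+}$ to a genuine autark assignment, so that ``$x^{+}$ true is consistent'' really certifies autarky rather than vacuous satisfaction; but since the encoding is inherited verbatim from \autoref{prop:fautl} I would treat its adequacy as already established there and keep this proof at the sketch level, mirroring the \texttt{(Sketch)} style of the neighboring propositions.
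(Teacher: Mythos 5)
Your reduction has the polarity of the selector variables flipped, and this breaks correctness. The paper's \bform predicate for FAut is
$\Pred(\fml{W})\triangleq\neg\SAT(\fml{F}^{\tn{Aut}}\land(\lor_{x^{+}\in\fml{R}\setminus\fml{W}}\,x^{+}))$,
with $\sigma(x^{+})=x^{+}$ positive: it asserts that every $x^{+}\in\fml{R}\setminus\fml{W}$ is forced \emph{false} in every model of $\fml{F}^{\tn{Aut}}$, i.e.\ $\fml{F}^{\tn{Aut}}\entails\land_{x^{+}\in\fml{R}\setminus\fml{W}}(\neg x^{+})$, so that the elements left outside $\fml{W}$ are exactly the variables that can belong to no autarky. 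Minimality then makes $\fml{W}$ itself the maximum autarky (using the union property of \autoref{prop:unique}). Your predicate instead uses $\sigma(x^{+})\triangleq\neg x^{+}$, which asserts that every $x^{+}\in\fml{R}\setminus\fml{W}$ is forced \emph{true} in every model of $\fml{F}^{\tn{Aut}}$. But $\fml{F}^{\tn{Aut}}$ is always satisfiable by the trivial assignment setting all $x^{+}$, $x^{1}$, $x^{0}$ to false (all equivalences and implications become vacuous); hence no selector is ever forced true, and for any $\fml{W}\subsetneq\fml{R}$ this trivial model satisfies $\fml{F}^{\tn{Aut}}\land(\lor_{x^{+}\in\fml{R}\setminus\fml{W}}\neg x^{+})$, so your predicate fails on every proper subset. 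The unique minimal set under your predicate is therefore $\fml{W}=\fml{R}$ for every input formula, which carries no information about the autarky. (Your predicate is still monotone --- it is of form \bform --- so the failure is purely in the correctness step, not in the appeal to \autoref{prop:mono}.)

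Two secondary confusions feed this error. First, you justify $\Pred(\fml{R})$ by claiming ``$\fml{F}^{\tn{Aut}}$ alone is unsatisfiable''; it is not (see the trivial model above), and the claim is also unnecessary: at $\fml{W}=\fml{R}$ the empty disjunction is $\bot$, so the oracle argument is $\fml{F}^{\tn{Aut}}\land\bot$, which is unsatisfiable regardless of $\fml{F}^{\tn{Aut}}$ --- this is what makes $\Pred(\fml{R})$ hold in both the paper's predicate and yours. Second, you conclude by saying the maximum autarky is the \emph{complement} $\var(\fml{F})\setminus\fml{T}$ of the computed minimal set; that complement relationship belongs to the \lform reduction of \autoref{prop:fautl}, whereas in the \bform reduction the minimal set is the maximum autarky directly. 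Fixing the proof requires replacing $\sigma(x^{+})\triangleq\neg x^{+}$ by $\sigma(x^{+})\triangleq x^{+}$ and rerunning your correctness argument with the roles of ``forced true'' and ``forced false'' exchanged.
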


\begin{proof}(Sketch)
  \proofitem{Reduction}
  As before, the reduction uses a simplified version of the model
  proposed in~\cite{liffiton-sat08} (see proof of
  \autoref{prop:fautl}).
  Given these definitions, $\fml{R}\triangleq X^{+}$ and,
  \begin{equation} \label{eq:autpredb}
    \Pred(\fml{W})\triangleq\neg\SAT(\fml{F}^{\tn{Aut}}\land(\lor_{x^{+}\in\fml{R}\setminus\fml{W}}\,x^{+}))
  \end{equation}
  with $\fml{W}\subseteq\fml{R}$.
  \proofitem{Monotonicity}
  The predicate (see \eqref{eq:autpredb}) is of form \bform, with
  $\fml{G}\triangleq\fml{F}^{\tn{Aut}}$, $u_i\triangleq x^{+}$, and
  $\sigma(x^{+}) = x^{+}$. Thus, by \autoref{prop:mono} the predicate
  is monotone.
  \proofitem{Correctness}  
  Similar to previous proofs by showing that,
  \begin{equation}
    \fml{F}^{\tn{Aut}}\entails\land_{x^{+}\in\fml{R}\setminus\fml{W}}(\neg
    x^{+})
  \end{equation}
  %
\end{proof}

\subsection{Optimization Problems} \label{ssec:optim}

To illustrate the modeling flexibility of monotone predicates, this
section investigates how to solve optimization problems, namely FSMCS,
FSMDS, FLMFS and FSMnM. 
In contrast with the previous section, the objective here is not to
develop efficient algorithms or insight, but to show other uses of
monotone predicates. 
It should be noted that computing cardinality minimal sets for some of
the other function problems, e.g. FMUS, FMES, FPIt, FPIc, etc., is
significantly harder, since these function problems are in the second
level of the polynomial hierarchy,
e.g.~\cite{gupta-phd06,umans-tcad06}.

In the remainder of this section, unweighted formulations are
considered, i.e.\ each clause is soft with weight 1.

\begin{proposition} \label{prop:mapsmcs}
  $\tn{FSMCS}\maps\tn{MSMP}$.
\end{proposition}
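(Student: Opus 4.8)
The plan is to mimic the \textsf{MaxSAT}/SMCS connection (\autoref{def:smcs}, \autoref{def:lmss}) by adding a cardinality constraint whose bound is driven by the size of the working set. Set $\fml{R}\triangleq\fml{F}$, introduce one fresh relaxation variable $r_c$ per clause $c\in\fml{F}$, and form the always-satisfiable CNF $\fml{F}^R\triangleq\land_{c\in\fml{F}}(c\lor r_c)$. The reduction is then
\[
  \Pred(\fml{W})\triangleq\SAT\bigl(\fml{F}^R\land\AtMost(\{r_c\mid c\in\fml{F}\},\,|\fml{W}|)\bigr),
\]
with $\fml{W}\subseteq\fml{R}$. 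The point of letting the bound be $|\fml{W}|$ is that the minimal $\fml{W}$ for which $\Pred$ holds will have cardinality exactly the minimum number of clauses that must be relaxed, i.e.\ the size of a smallest correction subset.

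Next I would verify monotonicity. Note first that this predicate is \emph{not} of any of the forms \lform, \pform, or \bform, because the part independent of $\fml{W}$ (the cardinality bound) in fact varies with $\fml{W}$; hence \autoref{prop:mono} does not apply and a direct argument is required. I would argue it as follows: if $\fml{W}_0\subseteq\fml{W}_1$ then $|\fml{W}_0|\le|\fml{W}_1|$, so $\AtMost(\{r_c\},|\fml{W}_0|)\entails\AtMost(\{r_c\},|\fml{W}_1|)$; therefore every model of the argument for $\fml{W}_0$ is a model of the argument for $\fml{W}_1$, and $\Pred(\fml{W}_0)\Implies\Pred(\fml{W}_1)$.

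For correctness, let $k^\ast$ denote the size of a smallest MCS of $\fml{F}$. I would show $\Pred(\fml{W})$ holds iff $|\fml{W}|\ge k^\ast$. For the ``if'' direction, take an assignment falsifying only the clauses of an SMCS $\fml{C}^\ast$, set $r_c=1$ exactly on $\fml{C}^\ast$, and observe this satisfies $\fml{F}^R$ and $\AtMost(\{r_c\},k^\ast)$, hence the weaker bound $|\fml{W}|$. For the ``only if'' direction, any witness with at most $|\fml{W}|$ relaxation variables true yields a correction subset $\fml{C}=\{c\mid r_c=1\}$ (every clause with $r_c=0$ is satisfied, so $\fml{F}\setminus\fml{C}\nentails\bot$) of size $<k^\ast$ whenever $|\fml{W}|<k^\ast$, contradicting minimality of $k^\ast$. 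Consequently every minimal set $\fml{M}$ has $|\fml{M}|=k^\ast$, and the SMCS is read off from the witness of the satisfying call as $\fml{C}=\{c\mid r_c=1\}$; a short argument shows such a $\fml{C}$ of size $k^\ast$ with $\fml{F}\setminus\fml{C}$ satisfiable is itself subset-minimal (dropping any clause would produce a correction subset of size $k^\ast-1$ containing an MCS smaller than $k^\ast$), so $\fml{C}$ is an SMCS by \autoref{def:smcs}.

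The main obstacle, and the place to take care, is precisely that this reduction sits outside the uniform machinery of \autoref{prop:mono}: the minimal set of the MSMP instance fixes only the \emph{cardinality} $k^\ast$ and not the identity of the relaxed clauses (indeed every $k^\ast$-subset of $\fml{R}$ is minimal), so the actual SMCS must be recovered from the SAT witness rather than from $\fml{M}$ itself. This is consistent with the paper's stance (cf.\ the observation following \autoref{def:lmss}) that practical \mxsat solvers always return a witness; the correctness argument above is what justifies that the witness-derived set is genuinely a smallest MCS.
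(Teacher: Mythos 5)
Your reduction is correct, but it takes a genuinely different route from the paper's. The paper also relaxes each clause $c_i$ into $(\neg p_i\lor c_i)$, but it takes the reference set to be $\fml{R}\triangleq\fml{B}=\{0,1,\ldots,|\fml{F}|\}$, the set of candidate cardinality bounds, and attaches to each $b_j\in\fml{R}\setminus\fml{W}$ an at-least constraint $\sum_{p_i\in\fml{P}}p_i\ge b_j$; the predicate $\Pred(\fml{W})\triangleq\SAT\bigl(\fml{F}^R\land Q(\fml{W})\bigr)$ is then literally of form \lform, so monotonicity comes for free from \autoref{prop:mono}, and the minimal set is unique, namely the set of unachievable bounds, whose size is the SMCS cardinality $k^\ast$. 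You instead keep $\fml{R}\triangleq\fml{F}$ and fold all cardinality information into a single $\AtMost$ constraint whose bound is $|\fml{W}|$; as you correctly note, this sits outside forms \lform, \pform, \bform (the constraint varies with $\fml{W}$), so monotonicity needs a bespoke argument, and the one you give (a smaller bound entails a larger one) is valid. Your correctness analysis is also sound: $\Pred(\fml{W})$ holds iff $|\fml{W}|\ge k^\ast$, every minimal set is exactly a $k^\ast$-subset, and the clause set $\fml{C}=\{c\mid r_c=1\}$ read off the witness has size exactly $k^\ast$ and is subset-minimal, hence an SMCS. Both reductions share the essential feature that the minimal set fixes only the cardinality and the actual SMCS comes from the SAT witness, which is consistent with the paper's standing assumption that oracle calls return witnesses. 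What the paper's encoding buys is uniformity (every reduction in the paper is certified monotone by the single lemma \autoref{prop:mono}) and a canonical minimal set; what yours buys is a more parsimonious instance --- the reference set is the natural one and there is one cardinality constraint instead of $|\fml{F}|+1$ of them --- at the price of a hand-rolled monotonicity proof and of minimal sets that are wildly non-unique, which is harmless for correctness but means the minimal set itself conveys nothing beyond its size.
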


\begin{proof}
\proofitem{Reduction}
Let $\fml{B} = \{0, 1, 2, \ldots, |\fml{F}|\}$ denote the possible
numbers of clauses that are required to be satisfied. Furthermore, for
each clause $c_i\in\fml{F}$, create a relaxed copy $(\neg p_i\lor
c_i)$ and let $\fml{F}^R$ denote the CNF formula where each clause
$c_i$ is replaced by its relaxed version. Let $\fml{P}$ denote the set
of selection variables. 
For each $b_j\in\fml{B}$ create the constraint
$\sum_{p_i\in\fml{P}}p_i\ge b_j$.
Each of these constraints defines a lower bound on the number of
satisfied clauses.
Define $\fml{R}\triangleq\fml{B}$. Moreover, given
$\fml{W}\subseteq\fml{R}$ let,
\begin{equation} \label{eq:qpred}
Q(\fml{W})\triangleq\bigwedge_{b_j\in\fml{R}\setminus\fml{W}}\left(\sum_{p_i\in\fml{P}}\,p_i\ge b_j\right)
\end{equation}
Finally, the predicate is defined as follows:
\begin{equation} \label{eq:mapsmcs}
  \Pred(\fml{W})\triangleq\SAT\left(\fml{F}^R\land Q(\fml{W})\right)
\end{equation}
\proofitem{Monotonicity}
Given the definition of $Q(\fml{W})$ in \eqref{eq:qpred}, the
predicate \eqref{eq:mapsmcs} is of form \lform, with
$\fml{G}\triangleq\fml{F}^R$, $u_i = b_j$, and
$\sigma(b_j)\triangleq\left(\sum_{p_i\in\fml{P}}\,p_i\ge b_j\right)$.
Thus, by \autoref{prop:mono} the predicate is monotone. 
\proofitem{Correctness}
The set of true $p_i$ variables picks a subset $\fml{S}$ of the
clauses in $\fml{F}$, which is to be checked for satisfiability.
Given the formulation, it holds that a subset-minimal set must
correspond to a cardinality minimal set.
If, by removing from $\fml{W}$ the element associated with some value
$b_j$ satisfies the predicate, then removing from $\fml{W}$ any
element associated with the $b_k$ of value no greater than $b_j$ also
satisfies the predicate, and this holds with the same truth
assignment, namely the same set of selected clauses.
Thus, any minimal set must exclude the element associated with the
largest value $b_j$ such that the predicate holds, and must also
exclude any element associated with $b_k$ of value no greater than
$b_j$.
Thus, the elements not removed from $\fml{W}$ are associated with
$b_k$ representing sizes of sets of clauses such that not all can be
simultaneously satisfied.
Hence, the minimal set represents the smallest MCS of $\fml{F}$.
\end{proof}

\begin{proposition} \label{prop:mapsmds}
  $\tn{FSMDS}\maps\tn{MSMP}$.
\end{proposition}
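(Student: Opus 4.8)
The plan is to mirror the reduction used for FSMCS in \autoref{prop:mapsmcs}, exploiting the fact that, just as the MDS predicate \eqref{eq:mdspred} is obtained from the MCS predicate \eqref{eq:mcspred} by conjoining $\neg\fml{F}$, the SMDS reduction can be obtained from the SMCS reduction by exactly the same modification. Concretely, for each clause $c_i\in\fml{F}$ I would introduce a selection variable $p_i$ and a relaxed clause $(\neg p_i\lor c_i)$, collecting these into $\fml{F}^R$ and the selection variables into $\fml{P}$. I would reuse the bound set $\fml{B}=\{0,1,\dots,|\fml{F}|\}$ with the cardinality constraints $\sum_{p_i\in\fml{P}}p_i\ge b_j$, and set $Q(\fml{W})$ exactly as in \eqref{eq:qpred}. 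The reference set is $\fml{R}\triangleq\fml{B}$ and the predicate is
\[
  \Pred(\fml{W})\triangleq\SAT\left(\neg\fml{F}\land\fml{F}^R\land Q(\fml{W})\right).
\]

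For monotonicity I would observe that this predicate is of form \lform, with $\fml{G}\triangleq\neg\fml{F}\land\fml{F}^R$, $u_i=b_j$, and $\sigma(b_j)\triangleq\left(\sum_{p_i\in\fml{P}}p_i\ge b_j\right)$, so \autoref{prop:mono} yields monotonicity immediately. Here $\fml{G}$ is independent of $\fml{W}$ as required, since the only dependence on $\fml{W}$ is confined to $Q(\fml{W})$.

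For correctness, the key semantic observation is that a true assignment to a set of $p_i$'s selects a subclause set $\fml{N}=\{c_i\,|\,p_i=1\}\subseteq\fml{F}$, and that $\neg\fml{F}\land\fml{F}^R$ under this selection reduces to $\fml{N}\land\neg\fml{F}$; this is satisfiable precisely when $\fml{N}\nentails\fml{F}$, i.e.\ (since $\fml{F}\entails\fml{N}$ holds automatically for $\fml{N}\subseteq\fml{F}$) when $\fml{N}\not\equiv\fml{F}$. Thus $\Pred(\fml{W})$ holds iff there is a non-equivalent subset of $\fml{F}$ whose size is at least the largest bound in $\fml{R}\setminus\fml{W}$. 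From here I would reproduce the cardinality argument of \autoref{prop:mapsmcs} verbatim: because the lower bounds are nested, any witness for a bound $b_j$ is simultaneously a witness, with the same selected clauses, for every smaller $b_k$, so a subset-minimal $\fml{W}$ must exclude precisely the bounds not exceeding the largest attainable size $B^*$ of a non-equivalent subset, giving $\fml{W}=\{B^*+1,\dots,|\fml{F}|\}$. By \autoref{rm:mns} and the remark that the complement of an LMNS is an SMDS, $B^*$ equals the size of a largest MNS, and hence the minimal set $\fml{W}$ is the complement of an LMNS, i.e.\ it encodes an SMDS of $\fml{F}$.

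The main obstacle I anticipate is not the encoding but the correctness of the cardinality reasoning: one must argue carefully that maximizing $B^*$ over all selections coincides with finding a \emph{largest} non-equivalent subset, and that the subset-minimality of $\fml{W}$ over $\fml{B}$ faithfully captures this maximum. This is the same delicate step as in FSMCS and relies on the ``same-witness'' monotonicity across cardinalities established there. I would also note the degenerate case in which $\fml{F}$ is a tautology, so that $\neg\fml{F}$ is unsatisfiable and no MDS exists; this case should be excluded or handled separately, exactly as the underlying FMDS problem already presupposes the existence of a distinguishing subset.
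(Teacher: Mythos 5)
Your proposal matches the paper's own proof essentially verbatim: the same reduction reusing the FSMCS machinery ($\fml{F}^R$, $\fml{P}$, $Q(\fml{W})$, $\fml{R}\triangleq\fml{B}$), the same predicate $\SAT(\neg\fml{F}\land\fml{F}^R\land Q(\fml{W}))$ justified as form \lform via \autoref{prop:mono}, and the same correctness reasoning that a minimal $\fml{W}$ corresponds to the largest selected subset $\fml{N}$ with $\fml{N}\nentails\fml{F}$, hence to an SMDS. Your write-up is in fact more detailed than the paper's sketch (spelling out the nested-bounds argument and the degenerate tautology case), but the approach is the same.
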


\begin{proof}(Sketch)
\proofitem{Reduction}
The definitions of the reduction in the proof of
\autoref{prop:mapsmcs} apply. As a result, the predicate is defined as
follows:
\begin{equation} \label{eq:mapsmds}
  \Pred(\fml{W})\triangleq\SAT\left(\neg\fml{F}\land\fml{F}^R\land
  Q(\fml{W})\right)
\end{equation}
\proofitem{Monotonicity}
Given the definition of $Q(\fml{W})$ in \eqref{eq:qpred}, the
predicate \eqref{eq:mapsmds} is of form \lform, with
$\fml{G}\triangleq\neg\fml{F}\land\fml{F}^R$, $u_i = b_j$, and
$\sigma(b_j)\triangleq\left(\sum_{p_i\in\fml{P}}\,p_i\ge b_j\right)$.
Thus, by \autoref{prop:mono} the predicate is monotone. 
\proofitem{Correctness} 
The set of true $p_i$ variables picks a subset $\fml{S}$ of the
clauses in $\fml{F}$. Thus, it must hold that
$\fml{F}\entails\fml{S}$.
Thus one wants to pick the largest subset $\fml{N}$ of $\fml{F}$ such
that $\fml{N}\nentails\fml{F}$.
A minimal set of $\fml{W}\subseteq\fml{R}$ corresponds to a maximal
set $\fml{R}\setminus\fml{W}$, containing all the constraints
$\left(\sum_{p_i\in\fml{P}}\,p_i\ge b_j\right)$ that can be
satisfied. Hence, this gives the largest set $\fml{N}$ of selected
clauses such that $\fml{N}\nentails\fml{F}$.
\end{proof}

\begin{proposition} \label{prop:mapsmcfs}
  $\tn{FSMCFS}\maps\tn{MSMP}$.
\end{proposition}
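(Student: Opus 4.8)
The plan is to mirror the FSMCS reduction of \autoref{prop:mapsmcs}, replacing the notion of \emph{satisfying} a clause by that of \emph{falsifying} it, so that the cardinality ``tower'' counts how many clauses can be jointly falsified rather than jointly satisfied. Recall that the smallest MCFS is the complement of a largest all-falsifiable subset (an LMFS, see \autoref{def:lmfs}), and that, as noted in the earlier remark on \mnsat and \mxf, $|\fml{F}| = n_t + n_f$, where $n_t$ is the SMCFS size and $n_f$ the LMFS size. Hence it suffices to drive the MSMP instance so that its unique minimal set has size $n_t = |\fml{F}| - n_f$, after which the actual SMCFS is read off from the SAT witness (its falsified clauses form an LMFS, whose complement is an SMCFS by \autoref{def:smcfs}).

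Concretely, I would keep $\fml{R}\triangleq\fml{B}$ with $\fml{B}=\{0,1,\ldots,|\fml{F}|\}$, the selection variables $\fml{P}=\{p_i\}$, and the cardinality aggregator $Q(\fml{W})$ of \eqref{eq:qpred}, exactly as in \autoref{prop:mapsmcs}. The only change is in the relaxed formula: for each clause $c_i\in\fml{F}$ the selector $p_i$ should force $c_i$ to be \emph{falsified}, i.e.\ encode $p_i\to\neg c_i$. Since $\neg c_i$ is a conjunction of literals, this becomes the binary clauses $\{(\neg p_i\lor\neg l)\mid l\in c_i\}$; let $\fml{F}^{RF}$ be their conjunction over all $i$. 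This reuses the falsification device of \autoref{prop:fmcfs} on top of the cardinality machinery of \autoref{prop:mapsmcs}. The predicate is then
\[
  \Pred(\fml{W})\triangleq\SAT\!\left(\fml{F}^{RF}\land Q(\fml{W})\right).
\]
Monotonicity is immediate: this is of form \lform with $\fml{G}\triangleq\fml{F}^{RF}$, $u_i\triangleq b_j$, and $\sigma(b_j)\triangleq\left(\sum_{p_i\in\fml{P}}p_i\ge b_j\right)$, so \autoref{prop:mono} applies.

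For correctness I would argue exactly as in \autoref{prop:mapsmcs}. The conjunction of active lower bounds collapses to the single bound $\max\{b_j\in\fml{R}\setminus\fml{W}\}$, and setting a selector $p_i$ true forces $c_i$ to be falsified; thus $\Pred(\fml{W})$ holds iff at least $\max\{b_j\in\fml{R}\setminus\fml{W}\}$ clauses are jointly falsifiable, i.e.\ iff this bound is at most $n_f$. Consequently the subset-minimal $\fml{W}$ is forced to retain precisely the elements $\{b_j\mid b_j>n_f\}$ and to drop every $b_j\le n_f$ (setting all $p_i$ false always satisfies $\fml{F}^{RF}$, so no smaller $\fml{W}$ can satisfy $\Pred$), yielding a \emph{unique} minimal set of cardinality $|\fml{F}|-n_f=n_t$, the SMCFS size.

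The main obstacle I anticipate is not monotonicity but the correctness bookkeeping around the relaxation: I must verify that the multi-clause encoding of $p_i\to\neg c_i$ really makes ``$p_i$ selected'' behave like ``$c_i$ falsified'' under the witness, so that a jointly-falsifiable subset of size $k$ exists exactly when $\sum p_i\ge k$ is consistent with $\fml{F}^{RF}$, and that subset-minimality in the $b_j$-tower coincides with the cardinality-minimal MCFS via the identity $n_t+n_f=|\fml{F}|$. Everything else should transfer verbatim from \autoref{prop:mapsmcs}.
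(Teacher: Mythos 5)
Your proposal matches the paper's own reduction essentially verbatim: the paper also keeps $\fml{R}\triangleq\fml{B}$ and $Q(\fml{W})$ from \autoref{prop:mapsmcs}, relaxes the \emph{complements} of the clauses via $\fml{N}^R\triangleq\land_{c_i\in\fml{F}}(\neg p_i\lor\neg c_i)$ (your $\fml{F}^{RF}$ is just its clausification), and invokes form \lform for monotonicity. Your explicit correctness bookkeeping (the bound tower collapsing to $\max\{b_j\}$, the unique minimal set $\{b_j\mid b_j>n_f\}$ of size $n_t$) is sound and in fact fills in what the paper dismisses as ``similar to previous proofs in this section.''
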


\begin{proof}(Sketch)
\proofitem{Reduction}
The elements to relax in this case are the complements of the clauses,
which will be satisfied (so that the clauses are falsified).
As a result, define $\fml{N}^R\triangleq \land_{c_i\in\fml{F}}
  (\neg p_i\lor\neg c_i)$. The predicate can then be defined as
follows: 
\begin{equation} \label{eq:mapsmcfs}
  \Pred(\fml{W})\triangleq\SAT\left(\fml{N}^R\land Q(\fml{W})\right)
\end{equation}
\proofitem{Monotonicity}
Given the definition of $Q(\fml{W})$ in \eqref{eq:qpred}, the
predicate \eqref{eq:mapsmcfs} is of form \lform, with
$\fml{G}\triangleq\fml{N}^R$, $u_i = b_j$, and
$\sigma(b_j)\triangleq\left(\sum_{p_i\in\fml{P}}\,p_i\ge b_j\right)$.
Thus, by \autoref{prop:mono} the predicate is monotone. 
\proofitem{Correctness}  
Similar to previous proofs in this section.
\end{proof}

\begin{proposition} \label{prop:mapsmnm}
  $\tn{FSMnM}\maps\tn{MSMP}$.
\end{proposition}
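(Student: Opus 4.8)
The plan is to mirror the threshold-counting reduction used for \autoref{prop:mapsmcs} (FSMCS), adapting it from counting satisfied clauses to counting the variables assigned value $1$. A smallest minimal model is exactly a model of $\fml{F}$ with the fewest variables set to $1$: any cardinality-minimum model is automatically subset-minimal in the sense of \autoref{def:mnm}, since a proper subset that is still a model would be a model with strictly fewer true variables. The key conceptual difference from FSMCS is that the objective is now a \emph{minimization} rather than a maximization, so the cardinality constraints will use an upper bound instead of the lower bound $\sum p_i \ge b_j$ appearing in $Q(\fml{W})$.

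Concretely, I would take $X\triangleq\var(\fml{F})$, let $\fml{B}=\{0,1,\ldots,|X|\}$ be the admissible bounds on the number of true variables, and set $\fml{R}\triangleq\fml{B}$. For $\fml{W}\subseteq\fml{R}$ I would define
\begin{equation*}
  \Pred(\fml{W})\triangleq\SAT\left(\fml{F}\land\bigwedge_{b_j\in\fml{R}\setminus\fml{W}}\left(\textstyle\sum_{x\in X}x\le b_j\right)\right).
\end{equation*}
Because ``$x$ is true'' is directly the literal $x$, no relaxation/selection variables are needed here, in contrast with the clause case. Monotonicity is then immediate: the predicate is of form \lform with $\fml{G}\triangleq\fml{F}$, $u_i\triangleq b_j$, and $\sigma(b_j)\triangleq(\sum_{x\in X}x\le b_j)$, so it is monotone by \autoref{prop:mono}; the only check is that an upper-bound cardinality constraint is a legitimate conjunct of the SAT argument, which it is.

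For correctness, I would let $m$ denote the minimum number of true variables over all models of $\fml{F}$ and show that the achievable bounds $\{b_j : \fml{F}\land(\sum_{x\in X}x\le b_j)\text{ is satisfiable}\}$ form the upward-closed set $\{m,m+1,\ldots,|X|\}$. Since in form \lform the bounds kept in $\fml{R}\setminus\fml{W}$ are precisely the satisfiable ones, and $\Pred(\fml{W})$ reduces to the single tightest retained bound, one gets $\Pred(\fml{W})$ iff $\{0,\ldots,m-1\}\subseteq\fml{W}$. Hence, by \autoref{def:mp}, the minimal set is forced to be the unique $\fml{W}=\{0,\ldots,m-1\}$, of cardinality $m$, and the SAT witness accompanying this minimal set is a model with at most $m$ true variables, therefore a minimum (and so a subset-minimal) model of $\fml{F}$.

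The main obstacle I anticipate is making the correspondence between subset-minimality of $\fml{W}$ and cardinality-minimality of the model fully rigorous, and in particular getting the inequality direction right: relative to \autoref{prop:mapsmcs} the objective is reversed, so the achievable bounds are upward- rather than downward-closed, and one must confirm that this reversal still places the minimal threshold set's cardinality at $m$ rather than at $|X|-m$. A secondary point worth stating explicitly is that a cardinality-minimum model genuinely satisfies \autoref{def:mnm}, so that the oracle's witness solves FSMnM by returning an actual smallest minimal model and not merely its cardinality. If one instead prefers to reuse the gadget $Q(\fml{W})$ verbatim, the same result follows by counting \emph{false} variables (encoding $p_i\to\neg x$) and maximizing their number, recovering $m$ as $|X|$ minus the maximum.
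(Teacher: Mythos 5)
Your reduction is correct, and it follows the same overall scheme as the paper's: the reference set is the set of cardinality thresholds $\fml{B}=\{0,1,\ldots,|X|\}$, the predicate is of form \lform so monotonicity is immediate from \autoref{prop:mono}, and the unique minimal set pins down the optimum cardinality $m$, with the SAT witness of the final predicate test supplying the actual model. Where you differ is the gadget inside the SAT call. The paper reuses the machinery of \autoref{prop:mapsmcs} verbatim: it introduces selection variables $p_i$ via $\fml{V}^R\triangleq\land_{x_i}(\neg p_i\lor\neg x_i)$ (so $p_i$ true forces $x_i$ false) and keeps the lower-bound constraints $Q(\fml{W})=\bigwedge_{b_j\in\fml{R}\setminus\fml{W}}(\sum_{p_i\in\fml{P}}p_i\ge b_j)$, i.e.\ it \emph{maximizes} the number of falsified variables, so its minimal set is $\{|X|-m+1,\ldots,|X|\}$; your closing remark about counting false variables with $p_i\to\neg x$ is precisely the paper's encoding. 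You instead drop the selection variables entirely and conjoin upper-bound constraints $\sum_{x\in X}x\le b_j$ directly over the variables of $\fml{F}$, so your minimal set is $\{0,\ldots,m-1\}$. Both sets have cardinality $m$, and both predicates are legitimate instances of form \lform (the paper explicitly allows cardinality constraints as oracle inputs), so both are sound. Your version buys simplicity --- no auxiliary variables, no relaxed formula --- and your correctness argument (upward-closedness of the achievable bounds, uniqueness of the minimal set, and the explicit check that a cardinality-minimum model is subset-minimal and hence genuinely solves FSMnM via the witness) is considerably more complete than the paper's, which merely states that correctness is ``similar to previous proofs in this section.'' What the paper's encoding buys in exchange is uniformity: the single gadget $Q(\fml{W})$ with relaxation variables serves FSMCS, FSMDS, FSMCFS and FSMnM alike.
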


\begin{proof}(Sketch)
\proofitem{Reduction}
The elements to relax in this case are the variables assigned value
true. As a result, define $\fml{V}^R\triangleq \land_{x_i\in\fml{V}}
  (\neg p_i\lor\neg x_i)$. The predicate can then be defined as
follows: 
\begin{equation} \label{eq:mapsmnm}
  \Pred(\fml{W})\triangleq\SAT\left(\fml{F}\land\fml{V}^R\land Q(\fml{W})\right)
\end{equation}
\proofitem{Monotonicity}
Given the definition of $Q(\fml{W})$ in \eqref{eq:qpred}, the
predicate \eqref{eq:mapsmnm} is of form \lform, with
$\fml{G}\triangleq\fml{F}^R\land\fml{V}^R$, $u_i\triangleq b_j$, and
$\sigma(b_j)\triangleq\left(\sum_{p_i\in\fml{P}}\,p_i\ge b_j\right)$.
Thus, by \autoref{prop:mono} the predicate is monotone.
\proofitem{Correctness}  
Similar to previous proofs in this section.
\end{proof}

\subsection{Summary of Reductions} \label{ssec:genred}

\autoref{tab:maps} summarizes the reductions described in
\autoref{ssec:greds} and in \autoref{ssec:optim}. For each problem,
the associated reference provides the complete details of the
reduction to the MSMP problem.
\begin{table}[!t] 
  \begin{center}
    \renewcommand{\arraystretch}{1.275}
    \begin{tabular}{
        |>{\centering\arraybackslash}m{1.5cm}
        |>{\centering\arraybackslash}m{1.0cm}
        |>{\centering\arraybackslash}m{5.25cm}
        |>{\centering\arraybackslash}m{1.0cm}
        |>{\centering\arraybackslash}m{2.5cm}|}
      \hline
      Problem & $\fml{R}$ & $\Pred(\fml{W}),\fml{W}\subseteq\fml{R}$ & Form & Reference \\ \hline\hline
      FMUS & $\fml{F}$ & $\neg\SAT(\land_{c\in\fml{W}}\,(c))$ & \pform & Prop.~\ref{prop:fmus}, p.\ \pageref{prop:fmus} \\ \hline
      FMCS & $\fml{F}$ & $\SAT(\land_{c\in\fml{R}\setminus\fml{W}}\,(c))$ & \lform & Prop.~\ref{prop:fmcs}, p.\ \pageref{prop:fmcs} \\ \hline
      FMES & $\fml{F}$ & $\neg\SAT(\neg\fml{F}\land\land_{c\in\fml{W}}\,(c))$ & \pform & Prop.~\ref{prop:fmes}, p.\ \pageref{prop:fmes} \\ \hline
      FMDS & $\fml{F}$ & $\SAT(\neg\fml{F}\land\land_{c\in\fml{R}\setminus\fml{W}}\,(c))$ & \lform & Prop.~\ref{prop:fmds}, p.\ \pageref{prop:fmds} \\ \hline
      FCMFS & $\fml{F}$ & $\SAT(\land_{c\in\fml{R}\setminus\fml{W}}\,(\neg c))$ & \lform & Prop.~\ref{prop:fmcfs}, p.\ \pageref{prop:fmcfs} \\ \hline
      FMnM & $X$ & $\SAT(\fml{F}\land\land_{x\in\fml{R}\setminus\fml{W}}\,(\neg x))$ & \lform & Prop.~\ref{prop:fmnm}, p.\ \pageref{prop:fmnm} \\ \hline
      FPIt & $L(t)$ & $\neg\SAT(\neg\fml{F}\land\land_{l\in\fml{W}}\,(l))$ & \pform & Prop.~\ref{prop:fpit}, p.\ \pageref{prop:fpit} \\ \hline
      FPIc & $L(c)$ & $\neg\SAT(\fml{F}\land\land_{l\in\fml{W}}(\neg l))$ & \pform & Prop.~\ref{prop:fpic}, p.\ \pageref{prop:fpic} \\ \hline
      FLEIt & $\fml{L}_t$ & $\neg\SAT(\fml{F}^{\tn{ItX}}\land(\lor_{l\in\fml{R}\setminus\fml{W}}\,\neg l))$ & \bform & Prop.~\ref{prop:fleit}, p.\ \pageref{prop:fleit} \\ \hline
      FLEIc & $\fml{L}_c$ & $\neg\SAT(\fml{F}^{\tn{IcX}}\land(\lor_{l\in\fml{R}\setminus\fml{W}}\,l))$ & \bform & Prop.~\ref{prop:fleic}, p.\ \pageref{prop:fleic} \\ \hline
      FMnES & $\fml{J}$ & $\neg\SAT(\neg\fml{I}\land\land_{c\in\fml{W}}\,(c))$ & \pform & Prop.~\ref{prop:fmnes}, p.\ \pageref{prop:fmnes} \\ \hline
      FMxES & $\fml{N}$ & $\neg\SAT(\fml{J}\land(\lor_{c\in\fml{R}\setminus\fml{W}}\neg c))$ & \bform & Prop.~\ref{prop:fmxes}, p.\ \pageref{prop:fmxes} \\ \hline
      FBBr & $\fml{V}$ & $\neg\SAT(\fml{F}\land(\lor_{l\in\fml{R}\setminus\fml{W}}\neg l))$ & \bform & Prop.~\ref{prop:fbbr}, p.\ \pageref{prop:fbbr} \\ \hline
      FBB & $X$ & $\neg\SAT(\fml{F}^{\tn{BB}}\land(\lor_{x\in\fml{R}\setminus\fml{W}}\,x\land\neg x'))$ & \bform & Prop.~\ref{prop:fbb}, p.\ \pageref{prop:fbb} \\ \hline
      FVInd & $X$ & $\neg\SAT(\fml{F}^{\tn{VInd}}\land\land_{x_i\in\fml{W}}\,(x_i\leftrightarrow y_i))$ & \pform & Prop.~\ref{prop:fvindp}, p.\ \pageref{prop:fvindp} \\ \hline
      %
      FAut & $ X^{+}$ & $\SAT(\fml{F}^{\tn{Aut}}\land\land_{x^{+}\in\fml{R}\setminus\fml{W}}\,(x^{+}))$ & \lform & Prop.~\ref{prop:fautl}, p.\ \pageref{prop:fautl} \\ \hline
      FAut & $ X^{+}$ & $\neg\SAT(\fml{F}^{\tn{Aut}}\land(\lor_{x^{+}\in\fml{R}\setminus\fml{W}}\,x^{+}))$ & \bform & Prop.~\ref{prop:fautb}, p.\ \pageref{prop:fautb} \\
      \hline\hline
      %
      %
      %
      FSMCS  & $\fml{F}$ & $\SAT\left(\fml{F}^R\land Q(\fml{W})\right)$ & \lform & Prop.~\ref{prop:mapsmcs}, p.\ \pageref{prop:mapsmcs} \\ \hline
      FSMDS  & $\fml{F}$ & $\SAT\left(\neg\fml{F}\land\fml{F}^R\land Q(\fml{W})\right)$  & \lform & Prop.~\ref{prop:mapsmds}, p.\ \pageref{prop:mapsmds} \\ \hline
      FSMCFS & $\fml{F}$ & $\SAT\left(\fml{N}^R\land Q(\fml{W})\right)$ & \lform & Prop.~\ref{prop:mapsmcfs}, p.\ \pageref{prop:mapsmcfs} \\ \hline
      FSMnM  & $X$ & $\SAT\left(\fml{F}\land\fml{V}^R\land Q(\fml{W})\right)$ & \lform & Prop.~\ref{prop:mapsmnm}, p.\ \pageref{prop:mapsmnm} \\ \hline
    \end{tabular}
    \caption{Overview of function problems and MSMP reductions.
      Additional problems include FMSS, FMNS, FMFS, FMxM,
      FLMSS, FLMNS, FLMFS and FLMxM. Other function problems, with
      reductions to MSMP similar to the ones above, are covered in 
      \autoref{sec:probs}.
    } \label{tab:maps}
  \end{center}
\end{table}
The first part denotes function problems where the goal is to compute
a minimal set. The second part denotes function problems that
represent optimization problems.
Besides the function problems summarized in~\autoref{tab:maps},
\autoref{ssec:fdefs} and also \autoref{ssec:greds} indicate that
several other problems related with Boolean formulas can also be
mapped into the MSMP problem. These include variants of the presented
problems when considering groups of clauses, variables, hard clauses,
etc.

It is important to observe that, for some problems, there are
dedicated algorithms that require an asymptotically smaller number of
queries to a SAT oracle than the most efficient of the MSMP
algorithms described in the next section. Nevertheless, as shown above
the reduction to MSMP yeilds new alternative algorithms and, for a
number of cases, it allows developing relevant new insights.


%
%
%

\section{Conclusions \& Research Directions} \label{sec:conc}

This paper extends recent
work~\cite{bradley-fmcad07,bradley-fac08,msjb-cav13} on monotone
predicates and shows that a large number of function problems defined
on Boolean formulas can be reduced to computing a minimal set over a
monotone predicate. 
The paper also argues that monotone predicates find application in
more expressive domains, including ILP, SMT and CSP.

A number of research directions can be envisioned.
A natural question is to identify other function problems that can be
reduced to the MSMP problem.
Algorithms for MSMP are described elsewhere~\cite{msjb-cav13}. A
natural research question is whether additional algorithms can be
developed.
In addition, most practical algorithms for solving minimal set
problems exploit a number of pruning
techniques~\cite{blms-aicomm12,msjb-cav13,mshjpb-ijcai13}. Another
natural research question is whether these techniques can be used in
the more general setting of MSMP.
Finally, another line of research is to develop precise query
complexity characterizations of instantiations of the MSMP.
Concrete examples include FMUS, FPIc, FPIt, FBB, FVInd, among others.


%
%
%

\subsubsection*{Acknowledgments.}
\noindent
This work was influenced by many discussions with colleagues at
CASL/UCD and at IST/INESC-ID. 
This work is partially supported by SFI PI grant BEACON
(09/IN.1/I2618), and by FCT grants ATTEST (CMU-PT/ELE/0009/2009),
POLARIS (PTDC/EIA-CCO/123051/2010), and INESC-ID's multiannual PIDDAC 
funding PEst-OE/\-EEI/\-LA0021/\-2011.


\bibliographystyle{abbrv}
\bibliography{msmp}
\end{document}